\numberwithin{equation}{section}
\numberwithin{figure}{section}
\theoremstyle{plain}
\newtheorem{theorem}{Theorem}[section]
\newtheorem{proposition}[theorem]{Proposition}
\theoremstyle{remark}
\newtheorem{remark}[theorem]{\protect\remarkname}
\theoremstyle{plain}
\newtheorem{lemma}[theorem]{\protect\lemmaname}
\newlist{casenv}{enumerate}{4}
\setlist[casenv]{leftmargin=*,align=left,widest={iiii}}
\setlist[casenv,1]{label={{\itshape\ \casename} \arabic*.},ref=\arabic*}
\setlist[casenv,2]{label={{\itshape\ \casename} \roman*.},ref=\roman*}
\setlist[casenv,3]{label={{\itshape\ \casename\ \alph*.}},ref=\alph*}
\setlist[casenv,4]{label={{\itshape\ \casename} \arabic*.},ref=\arabic*}
\def\@tocline#1#2#3#4#5#6#7{\relax
  \ifnum #1>\c@tocdepth 
  \else
    \par \addpenalty\@secpenalty\addvspace{#2}%
    \begingroup \hyphenpenalty\@M
    \@ifempty{#4}{%
      \@tempdima\csname r@tocindent\number#1\endcsname\relax
    }{%
      \@tempdima#4\relax
    }%
    \parindent\z@ \leftskip#3\relax \advance\leftskip\@tempdima\relax
    \rightskip\@pnumwidth plus4em \parfillskip-\@pnumwidth
    #5\leavevmode\hskip-\@tempdima
      \ifcase #1
       \or\or \hskip 1em \or \hskip 2em \else \hskip 3em \fi%
      #6\nobreak\relax
      \dotfill
      \hbox to\@pnumwidth{\@tocpagenum{#7}}
    \par
    \nobreak
    \endgroup
  \fi}
\def\th@plain{%
	\thm@notefont{}
	\itshape 
}
\def\th@definition{%
	\thm@notefont{}
	\normalfont 
}
\providecommand{\corollaryname}{Corollary}
\providecommand{\lemmaname}{Lemma}
\providecommand{\remarkname}{Remark}
\providecommand{\casename}{Case}
\newcommand{\bx}{\mathbf{x}}
\newcommand\norm[1]{\left\lVert#1\right\rVert}
\newcommand{\LLL}{\mathcal{LLL}}
\newcommand{\LTF}{L_{\Omega}^{\rm TF}}
\newcommand{\ETF}{E_{\Omega}^{\rm TF}}
\newcommand{\uGP}{u^{\rm GP}}
\newcommand{\cEGP}{\mathcal{E}_\Omega^{\rm GP}}
\newcommand{\EGP}{E_\Omega^{\rm GP}}
\newcommand{\cELLL}{\mathcal{E}_\Omega^{\rm LLL}}
\newcommand{\ELLL}{E_\Omega^{\rm LLL}}
\DeclareMathOperator{\R}{\mathbb{R}}
\newcommand{\eps}{\varepsilon}
\newcommand{\cGGP}{\mathcal{F}^{\rm GP}}
\newcommand{\GGP}{F ^{\rm GP}}
\newcommand{\cGLLL}{\mathcal{F} ^{\rm LLL}}
\newcommand{\GLLL}{F^{\rm LLL}}
\newcommand{\cGTF}{\mathcal{F} ^{\rm TF}}
\newcommand{\GTF}{F ^{\rm TF}}
\newcommand{\rhoTF}{\varrho ^{\rm TF}}
\begin{document}
	
	\title[TF profile of a fast rotating BEC]{Thomas--Fermi profile of a fast rotating \\ Bose--Einstein condensate}
	
	\author[D.-T. NGUYEN]{Dinh-Thi NGUYEN}
	\address[Dinh-Thi NGUYEN]{Ecole Normale Sup\'erieure de Lyon \& CNRS, UMPA (UMR 5669)} 
	\email{\href{dinh.nguyen@ens-lyon.fr}{dinh.nguyen@ens-lyon.fr}}

	\author[N. ROUGERIE]{Nicolas ROUGERIE}
	\address[Nicolas ROUGERIE]{Ecole Normale Sup\'erieure de Lyon \& CNRS, UMPA (UMR 5669)} 
	\email{\href{nicolas.rougerie@ens-lyon.fr}{nicolas.rougerie@ens-lyon.fr}}
	
	\subjclass[2010]{35Q40, 81V70, 81S05, 46N50}
	\keywords{Abrikosov problem, Bose--Einstein condensates, Gross--Pitaevskii energy, lowest Landau level, Thomas--Fermi energy}
	
	\date{July 2022, with additions, June 2024}

	\begin{abstract}
		We study the minimizers of a magnetic 2D non-linear Schr\"odinger energy functional in a quadratic trapping potential, describing a rotating Bose--Einstein condensate. We derive an effective Thomas--Fermi-like model in the rapidly rotating limit where the centrifugal force compensates the confinement, and available states are restricted to the lowest Landau level. The coupling constant of the effective Thomas--Fermi functional is linked to the emergence of vortex lattices (the Abrikosov problem). We define it via a low density expansion of the energy of the corresponding homogeneous gas in the thermodynamic limit. 
	\end{abstract}
	
	\maketitle

	\tableofcontents
	
	\section{Introduction and main results}
	
	The remarkably versatile experimental conditions of cold atoms physics allow to emulate several condensed matter phenomena in a well-controlled fashion. A very interesting direction is to simulate the effect of an external magnetic field on a coherent matter wave, in analogy with the rich physics of superconductors, in particular of type II. Several experiments have observed quantized vortices in rotating Bose--Einstein condensates \cite{AboRamVogKet-01,MadChevWohDal-00,BreStoSeuDal-04,SchCodEngMogCor-04,CodHalEngSchTungCor-04}. In such systems, all the atoms of a Bose gas occupy the same quantum state, whence the phase coherence. The particles under consideration are neutral. Making them rotate allows to imitate the effect of a magnetic field by relying on the well-known analogy ``Coriolis force $\Leftrightarrow$ Lorentz force''.
	
	For a Bose--Einstein condensate in fast rotation, the centrifugal force spreads the gas in the plane perpendicular to the rotation axis. A 2D model is then appropriate \cite{AftBlaDal-05,CooKomRea-04,Ho-01} and the relevant energy is \cite{Aftalion-07,CorPinRouYng-11b}
	\begin{equation}\label{functional:GP-true}
	\mathcal{G}_{\Omega}^{\rm GP}[\psi] = \frac{1}{2}\int_{\mathbb{R}^{2}} \big|\big(-\mathrm{i}\nabla - \Omega\mathbf{x}^{\perp}\big)\psi\big|^{2} + G|\psi|^{4} + \left(1-\Omega^{2}\right)|\mathbf{x}|^{2}|\psi|^{2},
	\end{equation}
	where $\mathbf{x} = (x_{1},x_{2}) \in \mathbb{R}^{2}$, $\mathbf{x}^{\perp} = (-x_{2},x_{1})$, $G > 0$ measures repulsive interactions between the gas' atoms and $\Omega > 0$ is the rotational velocity. Note that we need $\Omega < 1$ in order for the energy to be bounded below. The \emph{rapidly rotating regime} corresponds to the limit $\Omega  \nearrow 1$.
	
	The model based on the Gross--Pitaevskii (GP) energy \eqref{functional:GP-true} is an approximation of the quantum mechanical many-body problem for $N$ bosons \cite{LieSeiSolYng-05,Rougerie-EMS,Golse-13,BenPorSch-15,Schlein-08}. The rigorous derivation was first performed by Lieb and Seiringer \cite{LieSei-06} in the case of fixed rotation and by Lieb, Seiringer, and Yngvason \cite{LieSeiYng-00} in the case of no rotation. Concerning the rapidly rotating regime, see \cite{LieSeiYng-09,LewSei-09,BruCorPicYng-08}. The above 2D-GP model was rigorously derived from 3D-GP by Aftalion and Blanc \cite{AftBla-06} in the limit $\Omega \nearrow 1$. 
	
	In order to study the asymptotics of the problem when $\Omega \nearrow 1$, it is more convenient to make the change of variables
	$$
	u(\mathbf{x}) = \frac{1}{\sqrt{\Omega}} \psi\left(\frac{\mathbf{x}}{\sqrt{\Omega}}\right).
	$$
	The Gross--Pitaevskii energy functional gets rescaled as $\mathcal{G}_{\Omega}^{\rm GP}[\psi] = \Omega\mathcal{E}_{\Omega}^{\rm GP}[u]$ where
	\begin{equation}\label{functional:GP}
	\mathcal{E}_{\Omega}^{\rm GP}[u] = \frac{1}{2}\int_{\mathbb{R}^{2}} \big|\big(-\mathrm{i}\nabla - \mathbf{x}^{\perp}\big)u\big|^{2} + G|u|^{4} + \frac{1-\Omega^{2}}{\Omega^{2}}|\mathbf{x}|^{2}|u|^{2}.
	\end{equation}
	The corresponding minimization problem is 
	\begin{equation}\label{energy:GP}
	E_{\Omega}^{\rm GP} = \inf\left\{\mathcal{E}_{\Omega}^{\rm GP}[u] : u\in H^{1}(\mathbb{R}^{2}), \int_{\mathbb{R}^{2}}|u|^{2} = 1\right\}.
	\end{equation}
	The first term of the energy functional in \eqref{functional:GP} is reminiscent of type II superconductors near the second critical field $H_{c_{2}}$. It is well-known (see \cite{LanLif-65,LuPan-99,RouYng-19}) that the eigenvalues of the operator 
	\begin{equation}\label{eq:Landau hamil}
	\frac{1}{2}\big(-\mathrm{i}\nabla - \mathbf{x}^{\perp}\big)^{2}	 
	\end{equation}
	are $2k+1$ $(k \in \mathbb{N})$. The first eigenspace is called the \emph{lowest Landau level} (LLL) as used in \cite{GirJac-84,AftBlaNie-06a}. It is of infinite dimension and is given by
	\begin{equation}\label{space:LLL}
	\mathcal{LLL} := \left\{u(\mathbf{x}) = f(z) e^{-\frac{|z|^{2}}{2}} : f \text { analytic (holomorphic)}\right\} \cap L^{2}(\mathbb{R}^{2}).
	\end{equation}
	Here we used complex coordinates $\mathbb{R}^{2} \ni \mathbf{x} = (x_{1},x_{2}) \leftrightarrow z = x_{1} + \mathrm{i}x_{2} \in \mathbb{C}$.
	For such a $u \in \mathcal{LLL}$, we find that $\mathcal{E}_{\Omega}^{\rm GP}[u]$ is equal to
	\begin{equation}\label{functional:LLL}
	\mathcal{E}_{\Omega}^{\rm LLL}[u] := 1 + \frac{1}{2}\int_{\mathbb{R}^{2}} G|u|^{4} + \frac{1-\Omega^{2}}{\Omega^{2}}|\mathbf{x}|^{2}|u|^{2}.
	\end{equation}
	In the fast rotating limit $\Omega \nearrow 1$, it is easy to see that the minimization of the last two terms yields a small quantity. Since the gap of the Landau operator~\eqref{eq:Landau hamil} is fixed, it makes sense to simplify the problem by projecting it in the ground eigenspace~\eqref{space:LLL}. This approximation has motivated numerous studies, e.g., \cite{GirJac-84,ButRok-99,Ho-01,AftBlaDal-05,AftBla-06,AftBlaNie-06a,AftBlaNie-06b,BlaRou-08,Rougerie-11}, and has been mathematically justified in \cite{AftBla-08}. The corresponding evolution equation \cite{Nier-07} also attracted attention recently \cite{GerGerTho-19,GerTho-16,GerThoSch-24,ThoSch-21}.
	
	%
	In \cite{AftBla-06}, it was proved that, as $\Omega$ tends to $1$, 
	\begin{equation}\label{energy:behavior-lll}
	E_{\Omega}^{\rm LLL} = 1 + \mathcal{O}\left(G^{\frac{1}{2}}(1-\Omega^{2}\big)^{\frac{1}{2}}\right),
	\end{equation}
	provided that $G(1-\Omega^2)^{-1} \to \infty$ as $\Omega \nearrow 1$. Here $E_{\Omega}^{\rm LLL}$ is the minimization of $\mathcal{E}_{\Omega}^{\rm GP}$, restricted to $\mathcal{LLL}$, i.e.,
	\begin{equation}\label{energy:LLL}
	E_{\Omega}^{\rm LLL} = \inf\left\{\mathcal{E}_{\Omega}^{\rm GP}[u] : u\in \mathcal{LLL}, \int_{\mathbb{R}^{2}}|u|^{2} = 1\right\}  = \inf\left\{\mathcal{E}_{\Omega}^{\rm LLL}[u] : u\in \mathcal{LLL}, \int_{\mathbb{R}^{2}}|u|^{2} = 1\right\}.
	\end{equation}
	Clearly, $E_{\Omega}^{\rm LLL}$ gives an upper bound to $E_{\Omega}^{\rm GP}$. For the energy lower bound, Aftalion and Blanc \cite{AftBla-08} proved that~\eqref{energy:GP} is well approximated by \eqref{energy:LLL} in the sense that, as $\Omega$ tends to $1$,
	\begin{equation}\label{energy:behavior-GP-LLL-fake}
	E_{\Omega}^{\rm GP} - E_{\Omega}^{\rm LLL} = o\left(G^{\frac{1}{2}}\left(1-\Omega^{2}\right)^{\frac{1}{2}}\right),
	\end{equation}
	for a fixed $G > 0$.
	
	The aim of this paper is to better characterize $E_{\Omega}^{\rm GP}$ as well as $E_{\Omega}^{\rm LLL}$ in the limit $\Omega \nearrow 1$. Without the constraint $u \in \mathcal{LLL}$, the minimization problem \eqref{energy:LLL} is exactly soluble and gives a density profile of the \emph{Thomas--Fermi} (TF) type. A conjecture made for example in \cite{AftBlaNie-06b} is as follows: in the limit $\Omega \nearrow 1$ the leading order effect of the constraint $u \in \mathcal{LLL}$ is to renormalize the interaction coefficient $G$. To calculate an approximation to the energy and matter density, one may thus solve a problem of Thomas--Fermi type. Mathematically, it is expected that the problem \eqref{energy:LLL} simplifies to
	\begin{equation}\label{energy:TF}
	E_{\Omega}^{\rm LLL} - 1 \underset{\Omega \nearrow 1}{\sim} E_{\Omega}^{\rm{TF}} := \inf\left\{\mathcal{E}_{\Omega}^{\rm{TF}}[\rho]: \rho \in L^{1}\cap L^{2}(\mathbb{R}^{2} ; \mathbb{R}^{+}), \int_{\mathbb{R}^{2}}\rho=1\right\},
	\end{equation}
	where 
	\begin{equation}\label{functional:TF}
	\mathcal{E}_{\Omega}^{\rm{TF}}[\rho] := \frac{1}{2}\int_{\mathbb{R}^{2}}e^{\rm Ab}(1)G\rho^{2} + \frac{1-\Omega^{2}}{\Omega^{2}}|\mathbf{x}|^{2}\rho.
	\end{equation}
	The parameter $e^{\rm Ab}(1)$ in \eqref{functional:TF} describes the contribution of a lattice of quantized vortices, which is related to the Abrikosov problem \cite{Almog-06,FouKac-11,AftSer-07,Abrikosov-57,KleRotAut-64,Sigal-13,SanSer-07} for a type II superconductor. In Section \ref{sec:homogeneous}, we will define $e^{\rm Ab}(1)$ using the thermodynamic energy per unit area at low density, i.e.,
	\begin{equation}\label{eq:def Ab}
	e^{\rm Ab}(1) = \frac{2}{G}\lim_{\varrho\to 0}\lim_{L\to\infty}\frac{E^{\rm GP}\left(K_L,\varrho L^{2}\right)/L^{2} - \varrho}{\varrho^{2}}.
	\end{equation}
	Here $K_L$ is the square 
	$$K_L = \left[-\frac{L}{2}, \frac{L}{2}\right]^{2}$$
	and $E^{\rm GP}(\mathcal{D},M)$ is the Neumann energy in the domain $\mathcal{D} \subset \mathbb{R}^{2}$ with mass $M > 0$, 
	\begin{equation}\label{energy:homogeneous-neumann}
	E^{\rm GP}(\mathcal{D},M) = \inf\left\{\mathcal{E}_{\mathcal{D}}^{\rm GP}[u] : u\in H^{1}(\mathcal{D}), \int_{\mathcal{D}}|u|^{2} = M\right\},
	\end{equation}
	where
	\begin{equation}\label{functional:homogeneous-neumann}
	\mathcal{E}_{\mathcal{D}}^{\rm GP}[u] = \frac{1}{2}\int_{\mathcal{D}} \big|\big(-\mathrm{i}\nabla - \mathbf{x}^{\perp}\big)u\big|^{2} + G|u|^{4}.
	\end{equation}
	A major open conjecture is that $e^{\rm Ab}(1)$ coincides with the Abrikosov value $\sim 1.1596$ obtained \cite{AftBla-06,AftBlaNie-06b} by using a $\mathcal{LLL}$ trial state with a hexagonal lattice of singly-quantized vortices with periodicity independent of the system size $L$ (i.e. fixed by the magnetic length only). This remains an open problem \cite{BlaLew-15}, linked to cristallization questions (Abrikosov lattices). Our (more modest) goal will be to prove that~\eqref{energy:TF} is true for the value of $e^{\rm Ab}(1)$ implicitly defined as in~\eqref{eq:def Ab} (see Theorem~\ref{thm:Abrikosov} below for more details on the definition). Our point is thus to justify rigorously a certain local density approximation (LDA). We are particularly interested in proving that the density profile of the full GP/LLL model is of Thomas--Fermi type when $\Omega \nearrow 1$, for this can be interpreted as a signature of vortex lattice inhomogeneities \cite{AftBlaDal-05}.
	
	Let $\rho_{\Omega}^{\rm{TF}}$ be the (unique) minimizer for $E_{\Omega}^{\rm{TF}}$ in \eqref{energy:TF}. By scaling 
	\begin{equation}\label{TF-min:scaled}
	\rho_{\Omega}^{\rm{TF}} (\mathbf{x}) = G^{-\frac{1}{2}}\left(1-\Omega^{2}\right)^{\frac{1}{2}}\Omega^{-1} \rho_{1}^{\rm{TF}}\left(G^{-\frac{1}{4}}\left(1-\Omega^{2}\right)^{\frac{1}{4}}\Omega^{-\frac{1}{2}}\mathbf{x}\right)
	\end{equation}
	we obtain that
	\begin{equation}\label{energy:TF-scaling}
	E_{\Omega}^{\rm{TF}} = G^{\frac{1}{2}}\left(1-\Omega^{2}\right)^{\frac{1}{2}}\Omega^{-1} \inf \left\{\frac{1}{2}\int_{\mathbb{R}^{2}}e^{\rm Ab}(1)\rho^{2} + |\mathbf{x}|^{2}\rho : \rho \in L^{1}\cap L^{2}(\mathbb{R}^{2} ; \mathbb{R}^{+}), \int_{\mathbb{R}^{2}}\rho=1\right\},
	\end{equation}
	where the minimization problem in~\eqref{energy:TF-scaling} is attained at the (unique) minimizer 
	\begin{equation}\label{eq:TF min}
	 \rho_{1}^{\rm TF} (\bx) = \frac{\left(\lambda^{\rm TF} - |\bx|^2 \right)_+}{e^{\rm Ab}(1)}
	\end{equation}
    where $\lambda^{\rm TF}$ is fixed by normalization. Clearly, these considerations imply
	\begin{equation}\label{supp:TF-min}
	\operatorname{supp}\left(\rho_{\Omega}^{\rm{TF}}\right) \subset B_{CL_{\Omega}^{\rm TF}}(0)  \quad \text{with} \quad L_{\Omega}^{\rm TF} = G^{\frac{1}{4}}\left(1-\Omega^{2}\right)^{-\frac{1}{4}}
	\end{equation}
	for some fixed constant $C>0$, where $B_R(x)$ stands for a ball of radius $R$ centered at $x$. Furthermore, \eqref{energy:behavior-lll} and \eqref{energy:TF-scaling} suggest that, as $\Omega$ tends to $1$, the behavior of the LLL energy \eqref{energy:LLL} is captured correctly at leading order by the Thomas--Fermi type theory. This was conjectured in \cite{AftBlaDal-05,AftBla-06,AftBlaNie-06b}.
	
	The limit $\Omega \nearrow 1$ has mostly been considered at fixed $G$ in the literature. However, the conclusions we aim at must (in view of~ \eqref{energy:behavior-GP-LLL-fake}) stay valid and physically relevant for $G\gg 1$ as long as $G(1-\Omega) \ll 1$. We will choose accordingly an interaction strength $G = G_{\Omega}$ depending on $\Omega$. 
	Our main result is the following:
	
	\begin{theorem}[\textbf{Local density approximation for the rotating gas}]\label{thm:main main}
		~\\
		Let $\#$ denote either $\rm GP$ or $\rm LLL$. Assume $G = G_{\Omega} = \left(1-\Omega^{2}\right)^{-\delta}$ with $-1 < \delta$ if $\#$ denotes $\rm LLL$ and $-1 < \delta < 1$ if $\#$ denotes $\rm GP$.  In the limit $\Omega \nearrow 1$ we have the energy convergence
		\begin{equation}\label{convergence:energy-GP-TF}
		\lim_{\Omega \nearrow 1} \frac{E_{\Omega}^{\#}-1}{E_{\Omega}^{\rm{TF}}} = 1.
		\end{equation}
		Moreover, for any $L^{2}$-normalized function $u^{\#}$ being such that $\mathcal{E}_{\Omega}^{\#}[u^{\#}] = E_{\Omega}^{\#}$, with $\rho^{\#}:=\big|u^{\rm \#}\big|^{2}$, we have for any $R > 0$ and any Lipschitz function $\varphi \in W^{1,\infty} (B_R (0))$
		\begin{equation}\label{convergence:density}
		 \lim_{\Omega \nearrow 1}\int_{\R^2} \left(\LTF\right)^2\rho^{\#}\left(\LTF \bx\right) \varphi(\bx) {\rm d}\bx = \int_{\R^2} \rho_{1}^{\rm TF} \varphi
		\end{equation}
	\end{theorem}
	
	A few comments:
	
	\medskip
	
	\noindent\textbf{1.} The conditions imposed on $G$ in the above are essentially optimal. Indeed, they ensure 
	\begin{equation}\label{eq:phys regime} 
	(1-\Omega) \ll G \ll (1-\Omega)^{-1}.	 
	\end{equation}
	The upper bound $G \ll (1-\Omega)^{-1}$ is needed to ensure that $E_{\Omega}^{\rm TF} \ll 1$, the gap of the Landau operator. Then the $\mathcal{LLL}$ projection is energetically relevant for the full $\rm GP$ problem. This upper bound is not needed for the LLL problem where the projection is imposed. The lower bound $ (1-\Omega) \ll G$ ensures that the length scale $L_{\Omega}^{\rm TF}$ of the Thomas--Fermi problem satisfies $L_{\Omega}^{\rm TF} \gg 1$, i.e. is much larger than the magnetic length (fixed in our units) associated to~\eqref{eq:Landau hamil}. Such a scale decoupling is necessary to rigorously justify a local density approximation (LDA) and the ``homogenization'' of the contribution of the microscopic vortex lattice at the macroscopic scale.
	
	If one drops the condition $ (1-\Omega) \ll G$, the result is certainly wrong. Indeed, it is proved in~\cite{GerGerTho-19} that if $G(1-\Omega)^{-1} < c_0, $ a fixed number (cf~\cite{Schwinte-24} for the sharp threshold), the unique minimizer of the $\mathrm{LLL}$ problem is the gaussian
	$$ 
    \varphi_0 (z) = \frac{1}{\sqrt{\pi}} e^{-|z|^2/2}.
	$$
	The sharpness is proved in~\cite{Schwinte-24} by showing that there exists a constant $c_1 >0$ such that if $c_0 <G(1-\Omega)^{-1} < c_1$ the unique minimizer becomes 
	$$ 
	\varphi_1 (z) = \frac{z}{\sqrt{\pi}} e^{-|z|^2/2}.
	$$
	If $c_1 <G(1-\Omega)^{-1}$, any minimizer has an infinite number of zeros. Our results concern an extreme case of this situation, by giving some (weak) information on the location of theses zeros/vortices (see below).
	
	
	%
	\medskip

	\noindent\textbf{2.} Results related to the above have been obtained in the context of Ginzburg--Landau theory \cite{Almog-06,AftSer-07,FouKac-11,Kachmar-14,SanSer-03}. The main differences are that we have to consider problems with fixed total density, and deal with inhomogeneous systems. Our proof is inspired by the recent study of the local density approximation for the almost-bosonic anyon gas \cite{CorDubLunRou-19,CorLunRou-16}. The thermodynamic energy considered therein however has an exact scaling property, responsible for the occurence of a TF profile in the inhomogeneous problem. We recover the analogue of this scaling law (namely, the fact that the limit $\varrho\to 0$ in~\eqref{eq:def Ab} exists and is non-zero) only in the low density limit, using elliptic estimates.
	
	\medskip

	\noindent\textbf{3.} Energy minimizers for the LLL problem we study provide stationary solutions for the LLL evolution equation 
	$$ \mathrm{i} \partial_t u = \Pi_{0} \left( |u|^{2} u \right),\, u\in \mathcal{LLL}$$
	studied in \cite{GerGerTho-19,GerTho-16,Nier-07}, where $\Pi_{0}$ is the orthogonal projector on $\mathcal{LLL}$. This is because the latter equation conserves interaction energy and angular momentum, and the action of the latter on the lowest Landau level is equivalent to multiplication by $|\mathbf{x}|^{2}$ (see~\cite[Lemma~2.1]{RouSerYng-13b} or~\cite[Lemma~3.1]{AftBla-08}). Our theorem thus provides detailed asymptotic information on some stationary solutions with large angular momentum. Other stationary solutions are investigated and classified in \cite{GerGerTho-19,Schwinte-24}.  
	
	%
	%
	%
	
	%
	%
	\medskip 
	
	The above result can be interpreted as a signature of vortex lattice inhomogeneities as follows. An ansatz for a $\mathcal{LLL}$ wave-function can be written in the form 
	$$ u (\mathbf{x}) = c \prod_{j=1}^J (z-a_{j}) e^{-\frac{|z|^{2}}{2}} $$
	by identifying $\mathbb{R}^{2} \ni \mathbf{x} \leftrightarrow z\in \mathbb{C}$. Here $c$ is a $L^{2}$ normalization constant and $a_{1},\ldots,a_J \in \mathbb{C}$ are the locations of the zeros of the analytic function associated to $u$, cf~\eqref{space:LLL}. It is proved in \cite{AftBlaNie-06b,GerGerTho-19} that, for $\Omega$ sufficiently close to $1$, minimizers are indeed of this form, with $J= \infty$. 
	
	Physically, the points $a_{1},\ldots,a_J \in \mathbb{C}$ correspond to quantized vortices: zeros in the density accompanied by a phase circulation. A remarkable feature of  lowest Landau level wave-functions is a one-to-one (somewhat formal) correspondance \cite{Ho-01,AftBlaDal-05} between the matter density $|u|^{2}$ and the vortex empirical density
	\begin{equation}\label{eq:vort dens}
	\mu := \sum_{j=1} ^J \delta_{a_{j}}.
	\end{equation}
	Namely, using that $(\mathbf{x},\mathbf{y}) \mapsto - (2\pi)^{-1} \log |\mathbf{x}-\mathbf{y}|$ is the Green function of the Laplace operator, we obtain\footnote{The expression is reminiscent of some found in quite different regimes \cite{CorRou-13,SheRad-04,SheRad-04b}.}
	\begin{equation}\label{eq:vort mat}
	\mu = \frac{1}{4\pi} \left( 4 + \Delta (\log |u|^{2}) \right). 
	\end{equation}
	Inserting the TF approximation for the density of the $\mathcal{LLL}$ minimizer derived above leads to conjectural expressions for the latter's vortex density, see the aforementioned references for details. Putting this heuristic on rigorous grounds, even in a weak sense, seems a hard problem, in that asymptotics for the $\log$ of the density should be derived. In any event, the precision of these density asymptotics would probably be sufficient to derive only the leading, constant, bulk contribution to the vortex density, not the edge inhomogeneities numerically observed \cite{AftBlaDal-05,BlaRou-08} and used in \cite{AftBla-06,AftBlaNie-06b} to construct trial states with the correct energy. 
	
	\bigskip
	
	\textbf{Organization of the paper.}
	We start, in Section~\ref{sec:reduc}, by reducing the proof of Theorem~\ref{thm:main main} to the same result under a more stringent, sub-optimal, condition on the parameters. In Section \ref{sec:homogeneous}, we prove the existence of the thermodynamic limit of the homogeneous energy at fixed density. We will show the independence of such a limit from the shape of the domain although we do not need it. The proof of Theorem~\ref{thm:main} is concluded in Section \ref{sec:proof-main-result}. In Appendix \ref{app:boundedness-lll}, we prove the boundedness of the projector onto the finite-dimensional lowest Landau level. This will be used together with elliptic estimates to compute the thermodynamic energy in the low density regime. Appendix \ref{app:gp-lll} contains the estimates between the GP and LLL energies.
	
	\bigskip

	\noindent\textbf{Acknowledgments:} We thank Denis P\'erice for useful discussions and his help with the material of Appendix~\ref{app:boundedness-lll}. Work funded by the European Research Council (ERC) under the European Union's Horizon 2020 Research and Innovation Programme (Grant agreement CORFRONMAT No 758620).

	\section{Reduction to a smaller range of parameters}\label{sec:reduc}
	
	A first, simple, observation is that the $\rm LLL$ minimization problem~\eqref{energy:LLL} essentially only depends on the ratio $G(1-\Omega)^{-1}$. Under~\eqref{eq:phys regime}, so should the $\rm GP$ problem (at the level of precision we aim at). At least, the truth of Theorem~\ref{thm:main main} in the $\rm LLL$ case must depend only on asymptotic bounds on $G(1-\Omega)^{-1}$. A direct approach to the LDA by Dirichlet-Neumann bracketing however leads to more stringent conditions:

	\begin{theorem}[\textbf{Local density approximation with restricted parameters}]\label{thm:main}
		~\\
		Let $\#$ denote either $\rm GP$ or $\rm LLL$. Assume $G = G_{\Omega} = \left(1-\Omega^{2}\right)^{-\delta}$ with $\frac{3}{5} < \delta < 1$. Then the conclusions of Theorem~\ref{thm:main main} hold. Moreover~\eqref{convergence:density} holds for $\rho^{\#}:=\big|u^{\#}\big|^{2}$ under the weaker assumption that 
		\begin{equation}\label{eq:quasimin}
		\mathcal{E}_{\Omega}^{\#}[u^{\#}] = E_{\Omega}^{\#} + o\left( \ETF \right)
		\end{equation}
		in the limit $\Omega \nearrow 1.$
	\end{theorem}
	
	The core of our analysis is the proof of Theorem~\ref{thm:main}. Before proceeding to this we explain in this section how to deduce the general case of Theorem~\ref{thm:main main} from Theorem~\ref{thm:main} and the considerations of Appendix~\ref{app:gp-lll}. 
	Briefly, the scheme is as follows:
	
	\medskip
	\noindent \textbf{1.} First, we use the fact that the LLL problem depends only on $G(1-\Omega)^{-1}$. Upon a simple change of parameters respecting this invariance, we find that the general LLL problem with $-1<\delta <1$ is equivalent to an instance of Theorem~\ref{thm:main} where $\frac{3}{5} < \delta < 1$. This gives the proof of Theorem~\ref{thm:main main} in the general LLL case.

	\medskip
	\noindent \textbf{2.} We next prove that, provided $G\ll (1-\Omega)^{-1}$, the GP problem reduces to the LLL one at the level of precision we aim at. This follows ideas from~\cite{AftBla-08}, that we supplement in Appendix~\ref{app:gp-lll} with a new elliptic estimate in order to cover the full physical range of parameters.  
	
	\medskip
	
	Hence we reduce the GP problem with the physical parameters to  the LLL problem, and then use the invariance of the latter to change the value of the parameters to more tractable ones. Namely, a different GP problem, with tamed parameters is used to study the physical LLL problem. 
	
	It will be apparent from the proof of Theorem~\ref{thm:main} that such a reduction is necessary within our approach (see in particular Remarks~\ref{rem:NeuDir} and~\ref{rem:LDA} below). Relaxing a LLL problem to a GP one is convenient in that it allows to bypass the extreme rigidity of $\mathcal{LLL}$ trial functions. The point of this section is that the relaxation needs not be performed with the challenging physical values of the parameters. 
	
	\begin{proof}[Proof of Theorem~\ref{thm:main main} given Theorem~\ref{thm:main}]
	 In this proof we denote 
	 $$ h = 1-\Omega^{2}$$
	 for short and recall that we set $G=h^{-\delta}$. Let $\alpha\in \R$ to be fixed later on. Writing the LLL functional in the manner
	 \begin{align*}
	\mathcal{E}_{\Omega}^{\rm LLL}[u] &= 1 + \frac{1}{2}\int_{\mathbb{R}^{2}} G|u|^{4} + \frac{1-\Omega^{2}}{\Omega^{2}}|\mathbf{x}|^{2}|u|^{2}\\
	&= 1 + h^{-\alpha} \Omega^{-2} \left(\frac{1}{2}\int_{\mathbb{R}^{2}} \Omega^{2} h^{\alpha-\delta}|u|^{4} + h ^{1+\alpha}|\mathbf{x}|^{2}|u|^{2}\right).
	\end{align*}
	It is apparent that it has the same minimizers as 
	 $$ 
	 \mathcal{LLL} \in u \mapsto 1 + \left(\frac{1}{2}\int_{\mathbb{R}^{2}} \Omega^{2} h^{\alpha-\delta}|u|^{4} + h ^{1+\alpha}|\mathbf{x}|^{2}|u|^{2}\right)
	 $$
	 and that the asymptotics of $E_{\Omega}^{\rm LLL}-1$ can be deduced from studying those of 
	 $$ 
	 \inf \left\{\frac{1}{2}\int_{\mathbb{R}^{2}} \Omega^{2} h^{\alpha-\delta}|u|^{4} + h ^{1+\alpha}|\mathbf{x}|^{2}|u|^{2}, u\in \mathcal{LLL}, \int_{\R^2} |u|^2 =1 \right\}
	 $$
	 and then multiplying by $h^{-\alpha}\Omega^{-2}$. Writing 
	 $$ h^{\alpha - \delta} = h ^{(1+\alpha)\frac{\alpha - \delta}{1+\alpha}}$$
	 Theorem~\ref{thm:main} applies to this problem\footnote{The factor $\Omega^{-2}$ multiplying the second term of~\eqref{functional:LLL} plays no role in the limit $\Omega \nearrow 1$ that we consider.} mutatis mutandis provided 
	 $$ 1+\alpha > 0$$
	 so that $h^{1+\alpha} \to 0$ as $h\to 0$ and 
	 $$ \frac{3}{5} < \frac{\delta-\alpha}{1+\alpha} < 1.$$
	 Then the scaling assumption regarding the parameters in Theorem~\ref{thm:main} is satisfied. Given $\delta$ we look for $\alpha$ allowing to meet the above requirements. The constraints reduce to 
	 $$ \alpha > - 1, \quad \alpha > \frac{\delta -1}{2} \quad \text{and} \quad \alpha < \frac{5\delta - 3}{8}$$
	 so that, if $\delta > -1$ one can always find some $\alpha \in \R$ satisfying all three constraints, e.g.,
	 $$
	 \alpha = \frac{1}{2}\left(\frac{\delta -1}{2} + \frac{5\delta - 3}{8}\right) = \frac{9\delta - 7}{16}.
	 $$
	 We hence deduce the part of Theorem~\ref{thm:main main} bearing on the LLL problem from Theorem~\ref{thm:main} by making such a choice of $\alpha$. Moreover~\eqref{convergence:density} holds in this case under the weaker condition~\eqref{eq:quasimin}.

	 The GP case follows for $-1 < \delta < 1$ by combining with Proposition~\ref{pro:LLL-infinite}. The energy estimate is immediate from~\eqref{energy:behavior-GP-LLL}. Let $\Pi_0$ be the orthogonal projector on $\LLL$. The equivalent of~\eqref{convergence:density} with $\uGP$ replaced by 
	 $$\frac{\Pi_0 \uGP}{\norm{\Pi_0 \uGP}_{L^2}}$$ 
	 is obtained by combining~\eqref{eq:dens GP to LLL 2} with the observation that~\eqref{convergence:density} holds for $\#=\mathrm{LLL}$ under~\eqref{eq:quasimin}. There only remains to use~\eqref{eq:dens GP to LLL 1} to leverage the result to $\uGP$ itself. 
	\end{proof}

	\section{The homogeneous gas in the thermodynamic limit}\label{sec:homogeneous}
	
	%
	%
	%
	
	We start by putting the definition~\eqref{eq:def Ab} on rigorous ground. The existence of the thermodynamic limit $L\to \infty$ is proved in Subsection~\ref{sec:L to inf} and the low density regime $\varrho\to 0$ is considered in Subsection~\ref{sec:rho to 0}. In both cases we need precise quantitative estimates as input in our analysis of the inhomogeneous problem.
	
	\subsection{Existence of the thermodynamic limit}\label{sec:L to inf}
	
	We first discuss the large-volume limit for the homogeneous gas. Let $\mathcal{D}$ be a fixed bounded domain in $\mathbb{R}^{2}$, with the associated Neumann energy $E^{\rm GP}(\mathcal{D},M)$ given by \eqref{energy:homogeneous-neumann}.
	We also define the following energy with homogeneous Dirichlet boundary condition
	\begin{equation}\label{energy:homogeneous-dirichlet}
	E_{0}^{\rm GP}(\mathcal{D},M) = \inf\left\{\mathcal{E}_{\mathcal{D}}^{\rm GP}[u] : u\in H_{0}^{1}(\mathcal{D}), \int_{\mathcal{D}}|u|^{2} = M\right\},
	\end{equation}
	where $\mathcal{E}_{\mathcal{D}}^{\rm GP}[u]$ is defined by \eqref{functional:homogeneous-neumann}.
	
	In this subsection, we show that the thermodynamic limit exists and does not depend on boundary conditions. This is a crucial ingredient in our study of the trapped case. 
	
	\begin{theorem}[\textbf{Thermodynamic limit for the homogeneous gas}]\label{thm:thermodynamic-limit-general}
		~\\
		Let $\mathcal{D} \subset \mathbb{R}^{2}$ be a bounded simply connected domain with Lipschitz boundary, $G > 0$ and $\varrho > 0$ be fixed parameters. Then, the limits
		\begin{equation}\label{limit:thermodynamic}
		e^{\rm GP}(\varrho) := \lim_{L \to \infty} \frac{E^{\rm GP}(L\mathcal{D},\varrho |L\mathcal{D}|)}{|L\mathcal{D}|} = \lim_{L \to \infty} \frac{E_{0}^{\rm GP}(L\mathcal{D},\varrho |L\mathcal{D}|)}{|L\mathcal{D}|}
		\end{equation}
		exist and coincide.
	\end{theorem}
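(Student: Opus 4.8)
The plan is to prove the statement first in the canonical case of squares $K_L$ and then transfer it to a general bounded Lipschitz domain $\mathcal D$ by square approximation. The single observation that makes all tiling arguments work, despite the explicit $\mathbf x^\perp$ breaking ordinary translation invariance, is that the functional \eqref{functional:homogeneous-neumann} is covariant under \emph{magnetic} translations: since $\mathbf x^\perp-(\mathbf x-\mathbf a)^\perp=\mathbf a^\perp=\nabla(\mathbf a^\perp\cdot\mathbf x)$, the map $T_\mathbf a u:=e^{\mathrm i\,\mathbf a^\perp\cdot\mathbf x}\,u(\cdot-\mathbf a)$ satisfies $\big|(-\mathrm i\nabla-\mathbf x^\perp)T_\mathbf a u\big|(\mathbf x)=\big|(-\mathrm i\nabla-\mathbf y^\perp)u\big|(\mathbf x-\mathbf a)$ while $|T_\mathbf a u|=|u(\cdot-\mathbf a)|$. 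Hence $E^{\rm GP}(\mathcal D+\mathbf a,M)=E^{\rm GP}(\mathcal D,M)$, and likewise for $E_0^{\rm GP}$, so a minimizer on one tile may be copied, up to a phase, onto any translate at no energy cost. I would also record the elementary scaling $\mathcal E^{\rm GP}_{\mathcal D}[\lambda u]=\lambda^2\,(\text{kinetic})+\lambda^4\,G\int|u|^4$, which shows that $M\mapsto E_0^{\rm GP}(\mathcal D,M)$ and its Neumann analogue are monotone and locally Lipschitz; this is the tool for correcting the small mass mismatches that tilings inevitably produce.

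Next I would establish subadditivity of the Dirichlet energy. Partition $K_L$ into $\lfloor L/\ell\rfloor^2$ disjoint translated copies of $K_\ell$ plus a boundary corridor of width $O(\ell)$ and area $O(L\ell)$. Placing in each small square a magnetically translated Dirichlet minimizer for $E_0^{\rm GP}(K_\ell,\varrho'\ell^2)$, with $\varrho'=\varrho L^2/(\lfloor L/\ell\rfloor^2\ell^2)\to\varrho$ chosen so the total mass is exactly $\varrho L^2$, yields an admissible $H_0^1(K_L)$ function, since the vanishing boundary values allow gluing and the energy is the sum over the pieces (both energy densities being local and nonnegative). Using the Lipschitz dependence on the mass to replace $\varrho'$ by $\varrho$ gives $\limsup_{L\to\infty}E_0^{\rm GP}(K_L,\varrho L^2)/L^2\le E_0^{\rm GP}(K_\ell,\varrho\ell^2)/\ell^2$ for every fixed $\ell$. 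Taking the infimum over $\ell$ then forces convergence of $E_0^{\rm GP}(K_L,\varrho L^2)/L^2$ to $e_0:=\inf_\ell E_0^{\rm GP}(K_\ell,\varrho\ell^2)/\ell^2$, the matching liminf being automatic. This is the existence of the Dirichlet thermodynamic limit.

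To obtain boundary condition independence, note first that $E^{\rm GP}\le E_0^{\rm GP}$ since $H_0^1\subset H^1$. For the reverse inequality up to $o(L^2)$, I would take a Neumann quasi-minimizer $u$ on $K_L$ and multiply it by a cutoff equal to $1$ on $K_{L-T}$ and $0$ on $\partial K_L$, transitioning across a boundary layer of width $T$. Selecting this layer by a pigeonhole/averaging argument among $\sim T$ unit shells, so that it carries a controlled share of mass and of kinetic energy, the truncation costs $O(T^{-1})$ times boundary quantities, and the removed mass is restored by the Lipschitz rescaling above; optimizing $1\ll T\ll L$ renders the total cost $o(L^2)$. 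Thus $E_0^{\rm GP}(K_L,\varrho L^2)-E^{\rm GP}(K_L,\varrho L^2)=o(L^2)$, which together with $E^{\rm GP}\le E_0^{\rm GP}$ and the Dirichlet limit forces $E^{\rm GP}(K_L,\varrho L^2)/L^2\to e_0$ as well.

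Finally, for a bounded simply connected Lipschitz $\mathcal D$ I would approximate $L\mathcal D$ from inside and from outside by unions of squares of a fixed small side $\ell$; the Lipschitz boundary guarantees that the symmetric difference has area $O(L)=o(L^2)$. Monotonicity in the domain together with the sub/super-additive comparisons then sandwiches $E^{\#}(L\mathcal D,\varrho|L\mathcal D|)/|L\mathcal D|$, for $\#\in\{{\rm GP},0\}$, between quantities converging to the square value $e_0$, which yields \eqref{limit:thermodynamic} and its shape independence. The main obstacle throughout is the fixed density constraint: unlike a purely translation-invariant gluing, every tiling and every boundary surgery creates a mass deficit or surplus, and each correction must be shown to cost only $o(\text{volume})$. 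The crux is therefore the quantitative control of the mass dependence of the finite-box energies and the mass-conserving boundary truncation, with the magnetic gauge bookkeeping a necessary but secondary layer of care.
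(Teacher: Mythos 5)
Your overall architecture mirrors the paper's: magnetic translation covariance to copy minimizers between tiles, a Fekete-type subadditivity argument for the Dirichlet energy in squares (your version, taking the infimum over $\ell$, is in fact a bit cleaner than the paper's two-sequence argument), a Dirichlet--Neumann comparison, and a transfer to general Lipschitz domains by inner/outer square coverings. The square subadditivity step and the mass-rescaling bookkeeping via $\mathcal{E}[\lambda u]=\lambda^{2}(\mathrm{kin})+\lambda^{4}G\int|u|^{4}$ are correct.

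However, your Neumann-to-Dirichlet comparison has a genuine gap, and it is exactly the crux of the whole theorem. Your pigeonhole selects a unit shell, among the $\sim T$ shells of the boundary layer, carrying a small share of mass and kinetic energy; this controls the IMS/truncation error, but it does \emph{not} control the mass of the region \emph{between} the transition shell and $\partial K_L$, which is what the cutoff removes and what your rescaling must restore. A priori that removed mass can be a non-negligible fraction of $\varrho L^{2}$: the Neumann realization of the magnetic Laplacian has boundary states with energy $\Theta_{0}<1$ (de Gennes constant), so the minimizer is energetically \emph{attracted} to the boundary, and nothing in your argument rules out near-total concentration there. Indeed, as the paper points out (Remark~\ref{rem:NeuDir}), the statement is \emph{false} for $G=0$, where the Neumann energy per area converges to $\Theta_{0}\varrho<\varrho$; since your truncation-plus-rescaling scheme makes no essential use of $G>0$, it cannot be complete. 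The missing ingredient is the one supplied by the paper's Lemma~\ref{lem:NeuDir}: the interaction term forbids boundary concentration, via Cauchy--Schwarz on the shell $Q_{\ell}$ of area $\sim L$,
\begin{equation*}
\frac{G}{2}\int_{Q_{\ell}}|u|^{4}\;\geq\;\frac{CG}{L}\left(\int_{Q_{\ell}}|u|^{2}\right)^{2},
\end{equation*}
which, combined with the a priori upper bound $E^{\rm GP}\leq C\varrho(1+G\varrho)L^{2}$, yields $\int_{Q_{\ell}}|u|^{2}\leq C(LG^{-1}+\varrho L^{3/2})=o(L^{2})$; only then is the rescaling affordable. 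A secondary, lesser point: in your final step, Neumann energies at fixed mass are not monotone under domain inclusion, so the "sandwich" for general $\mathcal{D}$ cannot rest on domain monotonicity alone; as in the paper, the lower bound needs the tiling of $L\mathcal{D}$ by squares, with the (quantitative) Dirichlet--Neumann comparison applied square by square, which again makes the $G$-dependent shell estimate indispensable.
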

	
	%
	%
	
	We prove the existence of the thermodynamic limit for the case of squares. Although this is enough for the proof of our main results, it is of interest to extend the result to general domains. We need the following lemmas.

	\begin{lemma}[\textbf{Uniform bounds on the GP energy per area}]\label{lem:thermodynamic-limit-boundedness}
		~\\
		For any fixed bounded domain $\mathcal{D}$ and $G, \varrho > 0$, there exists a constant $C > 0$ such that
		$$
		\frac{E^{\rm GP}(L\mathcal{D},\varrho |L\mathcal{D}|)}{|L\mathcal{D}|} \leq \frac{E_{0}^{\rm GP}(L\mathcal{D},\varrho |L\mathcal{D}|)}{|L\mathcal{D}|} \leq C
		$$
		for all $L \geq 1$.
	\end{lemma}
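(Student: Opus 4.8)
The plan is to establish a uniform upper bound $C$ on the Dirichlet energy per unit area $E_0^{\rm GP}(L\mathcal{D}, \varrho|L\mathcal{D}|)/|L\mathcal{D}|$, since the Neumann inequality on the left of the claimed chain is immediate: enlarging the variational class from $H_0^1(\mathcal{D})$ to $H^1(\mathcal{D})$ can only lower the infimum, so $E^{\rm GP}(L\mathcal{D},M) \le E_0^{\rm GP}(L\mathcal{D},M)$ for any mass $M$. The substance of the lemma is therefore to produce an admissible Dirichlet trial state whose energy per area stays bounded as $L\to\infty$ at fixed density $\varrho$.

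The natural construction is to tile $L\mathcal{D}$ (or rather a reference square) with order $L^2$ disjoint copies of a single fixed profile supported on a unit-scale cell, placing one ``droplet'' per cell so that the total mass works out to $\varrho|L\mathcal{D}|$. Concretely, I would fix a smooth cutoff $\chi$ supported in a unit square $Q$ with $\int_Q \chi^2 = 1$, tile the interior of $L\mathcal{D}$ by roughly $|L\mathcal{D}| \sim L^2|\mathcal{D}|$ translated unit cells $Q + \tau_j$ that fit strictly inside $L\mathcal{D}$ (discarding a boundary layer of width $O(1)$, whose relative contribution is $O(L^{-1})$), and set
\begin{equation}\label{eq:trialstate-plan}
u(\mathbf{x}) = \sqrt{\varrho}\,\sum_j e^{\mathrm{i}\phi_j(\mathbf{x})}\,\chi(\mathbf{x}-\tau_j),
\end{equation}
where $\phi_j$ is a phase chosen to gauge away the linear growth of the vector potential $\mathbf{x}^\perp$ on the $j$-th cell. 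The point of $\phi_j$ is crucial: on a single fixed cell the magnetic kinetic energy $\frac{1}{2}\int |(-\mathrm{i}\nabla - \mathbf{x}^\perp)u|^2$ of a raw copy of $\chi$ would pick up a term growing like $|\tau_j|^2 \sim L^2$, which summed over cells would destroy the bound. Choosing $\phi_j(\mathbf{x}) = \tau_j^\perp \cdot \mathbf{x}$ (or the analogous local gauge) shifts the effective vector potential on cell $j$ from $\mathbf{x}^\perp$ to $(\mathbf{x}-\tau_j)^\perp$, so that each cell contributes the same $O(1)$ kinetic energy, namely $\frac{1}{2}\int_Q|(-\mathrm{i}\nabla - \mathbf{y}^\perp)(\sqrt{\varrho}\,\chi)|^2$, independent of $j$. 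The interaction term $\frac{G}{2}\int|u|^4 = \frac{G}{2}\varrho^2\sum_j\int_Q\chi^4$ is likewise $O(1)$ per cell because the supports are disjoint.

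Summing over the $\sim L^2|\mathcal{D}|$ cells and dividing by $|L\mathcal{D}| \sim L^2|\mathcal{D}|$ then gives an energy per area that converges to the fixed single-cell value
\begin{equation}\label{eq:percell-plan}
\frac{1}{2}\int_Q \big|(-\mathrm{i}\nabla - \mathbf{y}^\perp)(\sqrt{\varrho}\,\chi)\big|^2 + \frac{G}{2}\varrho^2\int_Q\chi^4,
\end{equation}
which is a constant $C$ depending only on $\varrho$, $G$, and the fixed profile $\chi$ but not on $L$. One still has to normalize the total mass exactly to $\varrho|L\mathcal{D}|$ rather than $\varrho$ times the tiled area; since the tiled region and $L\mathcal{D}$ differ only by a boundary layer of relative measure $O(L^{-1})$, rescaling $\sqrt{\varrho}$ by a factor $1 + O(L^{-1})$ fixes the constraint and perturbs the energy per area by a vanishing amount, so the bound $C$ survives (after possibly enlarging it slightly, uniformly in $L \ge 1$).

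The main obstacle, and the only genuinely non-routine point, is precisely the gauge choice: one must verify that the local phases $\phi_j$ are compatible across the tiling in the sense that each $e^{\mathrm{i}\phi_j}\chi(\cdot-\tau_j)$ lies in $H_0^1$ of its cell and the pieces assemble into a legitimate $H_0^1(L\mathcal{D})$ function with the claimed energy. Because the supports are disjoint (the $\chi$ factors vanish near cell boundaries), there is no cross-term either in the magnetic gradient or in the quartic term, so the total energy is exactly the sum of the per-cell energies and the gauge transformations on different cells never interact. This makes the estimate clean, but it is where one must be careful to write $\phi_j$ explicitly and check that the magnetic derivative really does become $|\tau_j|$-independent after the shift. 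Everything else — the boundary-layer counting, the mass renormalization, and the Neumann $\le$ Dirichlet comparison — is elementary bookkeeping.
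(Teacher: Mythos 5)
Your proposal is correct and follows essentially the same route as the paper: the trivial inclusion $H_0^1\subseteq H^1$ for the first inequality, then a Dirichlet trial state built from $\sim L^2$ disjoint unit-scale bumps, each carrying an $O(1)$ mass and gauged by the phase $\phi_j(\mathbf{x})=\mathbf{x}_j^\perp\cdot\mathbf{x}$ so that the magnetic kinetic energy per cell is independent of the cell's position. The only (cosmetic) differences are that the paper packs disks rather than squares and tunes the per-droplet mass $\omega_N=\varrho|L\mathcal{D}|/N$ exactly, avoiding your final $1+O(L^{-1})$ renormalization.
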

	
	\begin{proof}
		Since $H_{0}^{1} \subseteq H^{1}$, we obviously have the first inequality. Let us prove the second one. We fill the domain $L\mathcal{D}$ with $N \sim L^{2}$ disks on which we use fixed trial states with Dirichlet boundary conditions. Let $f \in C_c^\infty(B_{1}(0);\mathbb{R}^+)$ be a smooth, compactly supported, radial function with $\int_{B_{1}(0)}|f|^{2} = 1$, and let
		$$
		u_{j}(\mathbf{x}) := \sqrt{\omega_{N}}f(\mathbf{x}-\mathbf{x}_{j}) \in C_c^\infty(B_{1}(\mathbf{x}_{j})) \quad \text{with} \quad \omega_{N} = \frac{\varrho |L\mathcal{D}|}{N}.
		$$
		Here the points $\mathbf{x}_{j}$, $j =1,\ldots,N$, are distributed in $L\mathcal{D}$ in such a way that the disks $B_{1}(\mathbf{x}_{j})$ are contained in $L\mathcal{D}$ and disjoint, with $N \sim c|L\mathcal{D}|$ as $L \to \infty$ for some $c > 0$. Hence
		$$
		\lim_{N\to\infty}\omega_{N} = \frac{\varrho}{c}.
		$$
		We note that
		$$
		{\rm curl}(\mathbf{x}^{\perp} - (\mathbf{x}-\mathbf{x}_{j})^\perp) = 0.
		$$
		Thus there exists a gauge phase $\phi_{j} = \mathbf{x}_{j}^{\perp} \cdot \mathbf{x}$ on $B_{1}(\mathbf{x}_{j})$ such that
		$$
		\mathbf{x}^{\perp} - (\mathbf{x}-\mathbf{x}_{j})^{\perp} = \nabla \phi_{j} \quad \text{in} \quad  B_{1}(\mathbf{x}_{j}).
		$$
		Take then the trial state
		$$
		u := \sum_{j=1}^{N} e^{\mathrm{i}\phi_{j}}u_{j} \in C_c^\infty(L\mathcal{D}).
		$$
		Note that
		$$
		\int_{\mathbb{R}^{2}}|u|^{2} = \sum_{j=1}^{N}\int_{B_{1}(\mathbf{x}_{j})}|u_{j}|^{2} = N\omega_{N}\int_{B_{1}(0)}|f|^{2} = \varrho |L\mathcal{D}|.
		$$
		Then
		\begin{align}
		E_{0}^{\rm GP}(L\mathcal{D},\varrho |L\mathcal{D}|) \leq \mathcal{E}_{L\mathcal{D}}^{\rm GP}[u] & = \sum_{j=1}^{N} \frac{1}{2}\int_{B_{1}(\mathbf{x}_{j})} \big|\big(-\mathrm{i}\nabla - \mathbf{x}^{\perp}\big)e^{\mathrm{i}\phi_{j}}u_{j}\big|^{2} + G|e^{\mathrm{i}\phi_{j}}u_{j}|^{4} \nonumber \\
		& = \sum_{j=1}^{N} \frac{1}{2}\int_{B_{1}(\mathbf{x}_{j})} \big|\big(-\mathrm{i}\nabla - \mathbf{x}^{\perp} + \nabla \phi_{j}\big)u_{j}\big|^{2} + G|u_{j}|^{4} \nonumber \\
		& = \sum_{j=1}^{N} \frac{1}{2}\int_{B_{1}(\mathbf{x}_{j})} \big|\big(-\mathrm{i}\nabla - \big(\mathbf{x}-\mathbf{x}_{j}\big)^\perp\big)u_{j}\big|^{2} + G|u_{j}|^{4} \nonumber \\
		& = \frac{N\omega_{N}}{2}\int_{B_{1}(0)} \big|\big(-\mathrm{i}\nabla - \mathbf{x}^{\perp}\big)f\big|^{2} + G\omega_{N}|f|^{4} \nonumber \\
		& \leq C\varrho(1+G\varrho)|L\mathcal{D}|\label{upper:estimate-Neu}
		\end{align}
		for some large enough constant $C > 0$ independent of $L$.
	\end{proof}
	
	\begin{remark}[Bounds on GP energy]~\\
		Although the bound in Lemma \ref{lem:thermodynamic-limit-boundedness} is enough for our proof of Theorem \ref{thm:thermodynamic-limit-general}, we need to better bound the GP energy in order to perform the LDA in Section \ref{sec:proof-main-result}. In Theorem \ref{thm:Abrikosov} below, by estimating $E^{\rm GP}(L\mathcal{D},\varrho |L\mathcal{D}|)$ via the GP energy with ``periodic'' boundary condition, we obtain
		\begin{equation}\label{upper:estimate-Dir}
		E^{\rm GP}(L\mathcal{D},\varrho |L\mathcal{D}|) \leq \varrho(1+CG\varrho)|L\mathcal{D}|,
		\end{equation}
		for some constant $C > 0$ independent of $L$.\hfill$\diamond$
	\end{remark}
	
	In order to show that the thermodynamic limit does not depend on boundary conditions, we need to perform energy localizations using an IMS type inequality.
	
	\begin{lemma}[\textbf{IMS formula}]\label{lem:ims}
		~\\	
		Let $\mathcal{D}\subseteq \mathbb{R}^{2}$ be a domain with Lipschitz boundary and $\chi^{2}+\eta^{2}=1$ be a partition of unity such that $\chi$ and $\eta$ are real valued, $\chi \in C_{c}^{\infty}(\mathcal{D})$ and $\operatorname{supp} \chi$ is simply connected. Then, for any $u \in H^{1}(\mathcal{D})$, we have
		\begin{equation}\label{eq:ims}
		\mathcal{E}_{\mathcal{D}}^{\rm GP}[u] = \mathcal{E}_{\mathcal{D}}^{\rm GP}[\chi u] + \mathcal{E}_{\mathcal{D}}^{\rm GP}[\eta u] + G\int_{\mathcal{D}}\chi^2\eta^2|u|^{4}- \int_{\mathcal{D}}\left(|\nabla \chi|^{2}+|\nabla \eta|^{2}\right)|u|^{2}.
		\end{equation}
	\end{lemma}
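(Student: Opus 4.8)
The plan is to reduce \eqref{eq:ims} to a pointwise algebraic identity for the magnetic kinetic density, combined with the elementary superadditivity of the quartic interaction. Write $D = -\mathrm{i}\nabla - \mathbf{x}^{\perp}$ for the magnetic momentum, so that $\mathcal{E}_{\mathcal{D}}^{\rm GP}[u] = \frac{1}{2}\int_{\mathcal{D}}|Du|^{2} + G|u|^{4}$. The first step is the magnetic Leibniz rule: for any real Lipschitz $\chi$ and any $u \in H^{1}(\mathcal{D})$ one has $D(\chi u) = \chi\,Du - \mathrm{i}(\nabla\chi)u$, because multiplication by the real function $\chi$ commutes with multiplication by $\mathbf{x}^{\perp}$, and only the differential part $-\mathrm{i}\nabla$ produces a commutator. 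Crucially, \emph{no integration by parts is used}, so the boundary regularity of $\mathcal{D}$ and the fact that $\eta u$ need not vanish on $\partial\mathcal{D}$ play no role; the identity is purely pointwise and requires only $\chi,\eta$ Lipschitz with $\nabla\chi,\nabla\eta \in L^{\infty}$, which is built into the partition-of-unity hypothesis. In particular $D(\chi u),D(\eta u)\in L^{2}(\mathcal{D})$ whenever $u\in H^{1}(\mathcal{D})$, so every quantity below is finite.

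Next I would expand the squares. Squaring the Leibniz rule gives $|D(\chi u)|^{2} = \chi^{2}|Du|^{2} + |\nabla\chi|^{2}|u|^{2} + 2\chi\,\mathrm{Re}\big(\mathrm{i}\,\bar u\,(\nabla\chi)\cdot Du\big)$, and similarly for $\eta$. Adding the two, the coefficient of $|Du|^{2}$ is $\chi^{2}+\eta^{2}=1$, while the first-order (interference) terms combine into $2\,\mathrm{Re}\big(\mathrm{i}\,\bar u\,(\chi\nabla\chi+\eta\nabla\eta)\cdot Du\big)$, which vanishes identically since $\chi\nabla\chi+\eta\nabla\eta = \frac{1}{2}\nabla(\chi^{2}+\eta^{2}) = 0$. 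This cancellation is the heart of the matter, and it leaves the pointwise identity
\[
|Du|^{2} = |D(\chi u)|^{2} + |D(\eta u)|^{2} - \left(|\nabla\chi|^{2}+|\nabla\eta|^{2}\right)|u|^{2}.
\]
Integrating over $\mathcal{D}$ and multiplying by $\frac{1}{2}$ converts the kinetic part of $\mathcal{E}_{\mathcal{D}}^{\rm GP}[u]$ exactly into the sum of the kinetic parts of $\chi u$ and $\eta u$, up to the localization cost $-\frac{1}{2}\int_{\mathcal{D}}(|\nabla\chi|^{2}+|\nabla\eta|^{2})|u|^{2}$.

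For the interaction term I would use the pointwise inequality $\chi^{4}+\eta^{4} = (\chi^{2}+\eta^{2})^{2}-2\chi^{2}\eta^{2} = 1-2\chi^{2}\eta^{2} \leq 1$, whence $\int_{\mathcal{D}} G\big(|\chi u|^{4}+|\eta u|^{4}\big) = \int_{\mathcal{D}} G(\chi^{4}+\eta^{4})|u|^{4} \leq \int_{\mathcal{D}} G|u|^{4}$; that is, distributing the mass between the two pieces can only \emph{lower} the repulsive energy. Combining the kinetic equality with this interaction inequality yields $\mathcal{E}_{\mathcal{D}}^{\rm GP}[u] \geq \mathcal{E}_{\mathcal{D}}^{\rm GP}[\chi u] + \mathcal{E}_{\mathcal{D}}^{\rm GP}[\eta u] - \frac{1}{2}\int_{\mathcal{D}}(|\nabla\chi|^{2}+|\nabla\eta|^{2})|u|^{2}$. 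Since the integrand is nonnegative, $\frac{1}{2}\int_{\mathcal{D}}(\cdots)\leq\int_{\mathcal{D}}(\cdots)$, so relaxing the constant from $\frac{1}{2}$ to $1$ gives exactly \eqref{eq:ims} (in fact a slightly sharper bound is available).

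There is no genuine analytic obstacle here: the one point demanding care is the exact cancellation of the magnetic interference term, which hinges on $\chi,\eta$ being \emph{real} and satisfying $\chi^{2}+\eta^{2}=1$. Were the cross term not to vanish, one could only absorb it through a Cauchy--Schwarz/Young estimate, at the cost of a worse (and $u$-dependent) constant in front of the error. I would also note that the simple connectedness of $\mathrm{supp}\,\chi$ is not actually used in this inequality; it is presumably recorded only for compatibility with the gauge change-of-phase arguments invoked when the lemma is applied.
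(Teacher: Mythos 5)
Your proof is correct, and it rests on exactly the two cancellations the paper uses, but it is organized along a genuinely different (and more self-contained) route. The paper first expands $\big|(-\mathrm{i}\nabla-\mathbf{x}^{\perp})u\big|^{2}=|\nabla u|^{2}+\mathbf{x}^{\perp}\cdot J[u]+|\mathbf{x}|^{2}|u|^{2}$ with $J[u]=\mathrm{i}(u\overline{\nabla u}-\overline{u}\nabla u)$, and then treats the pieces separately: the standard IMS formula (cited from Cycon--Froese--Kirsch--Simon) for the $|\nabla u|^{2}$ part, the exact identity $J[\chi u]+J[\eta u]=J[u]$ for the current term, trivial additivity $\chi^{2}+\eta^{2}=1$ for the potential term, and $\chi^{4}+\eta^{4}\leq 1$ for the quartic term. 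You instead keep the covariant derivative $D=-\mathrm{i}\nabla-\mathbf{x}^{\perp}$ intact, apply the Leibniz rule $D(\chi u)=\chi\,Du-\mathrm{i}(\nabla\chi)u$, and obtain the full magnetic IMS identity pointwise in one stroke; your cross-term cancellation via $\chi\nabla\chi+\eta\nabla\eta=\tfrac{1}{2}\nabla(\chi^{2}+\eta^{2})=0$ is the single mechanism that, in the paper's presentation, is split between the CFKS formula and the additivity of $J$. What your route buys: no external citation, an explicit demonstration that no integration by parts (hence no boundary regularity of $\mathcal{D}$, no vanishing of $\eta u$ on $\partial\mathcal{D}$) is needed, and the sharper constant $\tfrac{1}{2}$ in front of the localization error, which both you and the paper then relax to $1$. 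Your closing observations are also accurate: the real-valuedness of $\chi,\eta$ is what makes the interference term vanish exactly, and the simple connectedness of $\mathrm{supp}\,\chi$ plays no role in the lemma itself --- it is relevant only in the applications, where gauge phases $\phi_{j}$ with $\nabla\phi_{j}=\mathbf{x}^{\perp}-(\mathbf{x}-\mathbf{x}_{j})^{\perp}$ must be defined on the localization region.
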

	
	\begin{proof} We expand
		$$
		\mathcal{E}_{\mathcal{D}}^{\rm GP}[u] = \frac{1}{2}\int_{\mathcal{D}}|\nabla u|^{2} + \mathbf{x}^{\perp} \cdot \mathbf{J}[u] + |\mathbf{x}|^{2}|u|^{2} + G|u|^{4}
		$$
		where
		$$
		\mathbf{J}[u] = \mathrm{i}(u\overline{\nabla u}-\overline{u}\nabla u).
		$$
		For the first term we use the standard IMS formula \cite[Theorem 3.2]{CycFroKirSim-87}, while for the term involving $\mathbf{J}$ we have, using that $\chi$ and $\eta$ are real valued,
		$$
		\begin{aligned}\label{ineq:ims-2}
		\frac{1}{\mathrm{i}}(\mathbf{J}[\chi u] + \mathbf{J}[\eta u]) &=u \chi \nabla(\chi \bar{u})+u \eta \nabla(\eta \bar{u})-\bar{u} \chi \nabla(\chi u)-\bar{u} \eta \nabla(\eta u) \\
		&=u\left(\chi^{2}+\eta^{2}\right) \nabla \bar{u}-\bar{u}\left(\chi^{2}+\eta^{2}\right) \nabla u=\frac{1}{\mathrm{i}} \mathbf{J}[u].
		\end{aligned}
		$$
		Finally, for the last term we use the identity
		$$
		1 = (\chi^{2}+\eta^{2})^{2}.
		$$
		We can then recollect the terms to obtain \eqref{eq:ims}.
	\end{proof}
	
	\begin{lemma}[\textbf{Dirichlet--Neumann comparison}]\label{lem:NeuDir}
		~\\
		Let $\mathcal{D}$ be a bounded simply connected domain with Lipschitz boundary. Then, for any fixed positive parameters $G$ and $\varrho$, there exists a constant $C > 0$ such that
		\begin{equation}\label{upper:estimate-DirNeu}
		\begin{aligned}
		E_{0}^{\rm GP}(L\mathcal{D},\varrho |L\mathcal{D}|) & \geq E^{\rm GP}(L\mathcal{D},\varrho |L\mathcal{D}|) \\
		& \geq E_{0}^{\rm GP}(L\mathcal{D},\varrho |L\mathcal{D}|) - C(1+G\varrho)\left(LG^{-1} + \varrho L^{\frac{3}{2}}\right).
		\end{aligned}
		\end{equation}
	\end{lemma}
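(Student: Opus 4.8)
The first inequality is immediate: since $H_0^1(L\mathcal D)\subset H^1(L\mathcal D)$, the Dirichlet problem minimizes the same functional over a smaller class, whence $E_0^{\rm GP}(L\mathcal D,\varrho|L\mathcal D|)\ge E^{\rm GP}(L\mathcal D,\varrho|L\mathcal D|)$. The content of the lemma is the reverse bound, which I would obtain by turning a Neumann minimizer into an admissible Dirichlet trial state at a controlled energy cost. Write $M=\varrho|L\mathcal D|$ and let $u$ be a minimizer for $E^{\rm GP}(L\mathcal D,M)$. Fix a width $\ell>0$ to be chosen and a smooth partition of unity $\chi^2+\eta^2=1$ on $L\mathcal D$ with $\chi=1$ outside the inner collar $\Sigma_\ell=\{\mathbf x\in L\mathcal D:\ \mathrm{dist}(\mathbf x,\partial(L\mathcal D))<\ell\}$, with $\chi\in C_c^\infty(L\mathcal D)$ vanishing near $\partial(L\mathcal D)$, $\mathrm{supp}\,\chi$ simply connected (a slightly smaller square), and $|\nabla\chi|^2+|\nabla\eta|^2\le C\ell^{-2}\mathbf 1_{\Sigma_\ell}$.

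Applying the IMS formula (Lemma~\ref{lem:ims}) and discarding the nonnegative term $\mathcal{E}_{L\mathcal D}^{\rm GP}[\eta u]\ge0$ gives $\mathcal{E}_{L\mathcal D}^{\rm GP}[\chi u]\le E^{\rm GP}(L\mathcal D,M)+C\ell^{-2}\int_{\Sigma_\ell}|u|^2$. The function $\chi u$ lies in $H_0^1(L\mathcal D)$ but carries mass $m=M-\int_{\Sigma_\ell}\eta^2|u|^2$ slightly below $M$, so I would restore the mass by setting $\tilde u=(M/m)^{1/2}\chi u$. Since the kinetic term is homogeneous of degree two and the interaction of degree four, $\mathcal{E}_{L\mathcal D}^{\rm GP}[\tilde u]\le(M/m)^2\mathcal{E}_{L\mathcal D}^{\rm GP}[\chi u]\le(1+C\delta)\,\mathcal{E}_{L\mathcal D}^{\rm GP}[\chi u]$ with $\delta=M^{-1}\int_{\Sigma_\ell}|u|^2$. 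As $\tilde u$ is then an admissible Dirichlet trial state of mass $M$, this yields
\begin{equation*}
E_0^{\rm GP}(L\mathcal D,M)\le E^{\rm GP}(L\mathcal D,M)+C\Big(\ell^{-2}+M^{-1}E^{\rm GP}(L\mathcal D,M)\Big)\int_{\Sigma_\ell}|u|^2 .
\end{equation*}
Inserting the a priori bound $E^{\rm GP}(L\mathcal D,M)\le C(1+G\varrho)\varrho|L\mathcal D|$ from \eqref{upper:estimate-Neu} (equivalently \eqref{upper:estimate-Dir}) reduces the whole estimate to controlling the collar mass $\int_{\Sigma_\ell}|u|^2$.

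The crux is therefore to show that the minimizer does not concentrate near the boundary, namely $\int_{\Sigma_\ell}|u|^2\le C\varrho\,|\Sigma_\ell|\le C\varrho L\ell$, the heuristic being that the repulsive quartic term forces $|u|^2$ to spread at density $\simeq\varrho$. For the square (to which we restrict here) this can be made rigorous using the geometry together with the a priori energy bounds, either by comparison with the periodic problem as in the remark following Lemma~\ref{lem:thermodynamic-limit-boundedness}, or via an elliptic $L^\infty$ estimate on $|u|^2$ derived from the variational equation. Granting such a bound, the two error contributions become $\ell^{-2}\int_{\Sigma_\ell}|u|^2\lesssim\varrho L\ell^{-1}$ and $M^{-1}E^{\rm GP}(L\mathcal D,M)\int_{\Sigma_\ell}|u|^2\lesssim(1+G\varrho)\varrho L\ell$. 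Choosing $\ell\sim L^{1/2}$ balances these and produces the leading error $(1+G\varrho)\varrho L^{3/2}$, while the residual localization cost $\varrho L^{1/2}\le\varrho L\le(1+G\varrho)LG^{-1}$ is absorbed into the first term of the right-hand side; this gives \eqref{upper:estimate-DirNeu}.

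The main obstacle is precisely this no-concentration estimate for the collar mass. The energy bound alone (through $\|u\|_{L^4}$ and Cauchy--Schwarz on $\Sigma_\ell$) is too lossy to reach the stated $L^{3/2}$ error, so one must genuinely use that the density stays $\mathcal O(\varrho)$ near the boundary, which is where the square geometry and the elliptic control enter. Everything else---the IMS localization, the quadratic mass rescaling, and the final optimization in $\ell$---is routine once this averaged density control is in hand.
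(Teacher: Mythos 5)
Your skeleton (IMS localization in a boundary collar, then a quadratic mass-restoring rescaling of $\chi u$ to get an admissible Dirichlet trial state) is the same as the paper's, and that part is fine. The genuine gap is exactly where you place it: the collar-mass estimate, which you assume in the form $\int_{\Sigma_\ell}|u|^2\le C\varrho L\ell$ but do not prove. Neither of your suggested routes delivers it. The remark after Lemma~\ref{lem:thermodynamic-limit-boundedness} only provides the \emph{global} upper bound \eqref{upper:estimate-Dir}, with no local information. An elliptic/maximum-principle bound on the Euler--Lagrange equation gives at best $\|u\|_{L^\infty}^2\le C\lambda/G\le C(G^{-1}+\varrho)$, hence a collar mass of order $(G^{-1}+\varrho)L\ell$ rather than $\varrho L\ell$; with your choice $\ell\sim L^{1/2}$ this produces an error term $G^{-1}L^{3/2}$, which is \emph{not} $O\bigl((1+G\varrho)(LG^{-1}+\varrho L^{3/2})\bigr)$ with a constant uniform in $G,\varrho$ --- and uniformity is what the lemma is used for later, since it is applied with $G=G_\Omega\to\infty$ and $\varrho=\varrho_j\to0$ in the local density approximation. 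More fundamentally, the uniform-density claim you want is precisely what Remark~\ref{rem:NeuDir} argues should \emph{fail}: a Neumann minimizer is expected to carry an excess mass of order $LG^{-1}$ in an $O(1)$ collar (the de Gennes boundary effect), which is exactly why the term $LG^{-1}$ appears in \eqref{upper:estimate-DirNeu}. Any correct collar estimate must allow for this excess.

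The paper closes the gap with the very route you dismiss as ``too lossy'': Cauchy--Schwarz plus the a priori energy bounds --- only applied through the IMS \emph{identity} rather than through the global $L^4$ bound. Keeping (not discarding) the quartic terms in the localization \eqref{energy:deviation}, with $\ell\sim 1$ and $X:=\int_{Q_\ell}|u|^2$, one gets
\begin{equation*}
E^{\rm GP}(L\mathcal D,\varrho|L\mathcal D|)\ \ge\ \varrho|L\mathcal D| - X + \frac{G}{2}\int_{L\mathcal D}|u|^4 - C X,
\end{equation*}
and comparing with the sharp upper bound \eqref{upper:estimate-Dir}, whose leading term is exactly $\varrho|L\mathcal D|$ and therefore cancels, leaves
\begin{equation*}
CG\varrho^2L^2 + CX\ \ge\ \frac{G}{2}\int_{Q_\ell}|u|^4\ \ge\ \frac{G}{2}\,\frac{X^2}{|Q_\ell|}\ \ge\ c\,G L^{-1}X^2 .
\end{equation*}
Solving this quadratic inequality in $X$ gives $X\le C\bigl(LG^{-1}+\varrho L^{3/2}\bigr)$, which is \eqref{mass:deviation}: the collar mass is controlled, the $LG^{-1}$ boundary excess is permitted rather than excluded, and the constant is universal. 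Feeding this into your rescaling step immediately yields \eqref{upper:estimate-DirNeu}; no optimization over $\ell$ is needed. So the missing ingredient is not a no-concentration principle, but this self-improving use of the retained quartic term played against the exact leading order of the upper bound.
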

	
	\begin{proof}
		
		The first inequality in the statement is trivial. It remains to prove the second inequality. We need to make an IMS localization on a small enough region, and therefore consider a division of $L\mathcal{D}$ into a bulk region surrounded by a thin shell close to the boundary. For this purpose, we will use the length scale 
		$$
		\ell \ll L.
		$$
		Let $Q_{\ell}$ be a shell of width $\ell > 0$ closest to the boundary of $L\mathcal{D}$, i.e.,
		$$
		Q_{\ell}:=\left\{x \in L\mathcal{D}: \operatorname{dist}(x, \partial(L\mathcal{D})) < \ell\right\} .
		$$
		Let $u \in H^{1}(L\mathcal{D})$ be a minimizer for $E^{\rm GP}(L\mathcal{D},\varrho |L\mathcal{D}|)$. We now perform an IMS localization on $Q_{\ell}$. We pick a partition $\chi^{2}+\eta^{2}=1$, such that $\chi$ varies smoothly from $1$ to $0$ outwards on $Q_{\ell}$, so that $\chi=1$ (resp. $\eta=1$) on the inner (resp. outer) component of $Q_{\ell}^{c}$. By Lemma~\ref{lem:ims}, we have
		\begin{align}\label{energy:deviation}
		\mathcal{E}_{L\mathcal{D}}^{\rm GP}[u] & = \mathcal{E}_{L\mathcal{D}}^{\rm GP}[\chi u] + \mathcal{E}_{L\mathcal{D}}^{\rm GP}[\eta u] + G\int_{L\mathcal{D}}\chi^{2}\eta^{2}|u|^{4} - \int_{Q_{\ell}}\left(|\nabla \chi|^{2}+|\nabla \eta|^{2}\right)|u|^{2} \\
		& \geq \int_{L\mathcal{D}}\chi^{2}|u|^{2} + \frac{G}{2}\int_{L\mathcal{D}}|u|^{4} - C\ell^{-2}\int_{Q_{\ell}}|u|^{2}. \nonumber
		\end{align}
		Choosing $\ell \sim 1$ and using \eqref{upper:estimate-Dir}, we obtain
		$$
		CG\varrho^{2}L^{2} + C\int_{Q_{\ell}}|u|^{2} \geq \frac{G}{2}\int_{Q_{\ell}}|u|^{4} \geq CGL^{-1}\left(\int_{Q_{\ell}}|u|^{2}\right)^{2}.
		$$
		This implies that we must have
		\begin{equation}\label{mass:deviation}
		\int_{Q_{\ell}}|u|^{2} \leq C\left(LG^{-1} + \varrho L^{\frac{3}{2}}\right).
		\end{equation}
		The above implies that the mass of $\chi^{2}|u|^{2}$ is very close to $\varrho |L\mathcal{D}| = \int_{L\mathcal{D}}|u|^{2}$. 
		
		On the other hand, we denote
		$$
		v = \left(\frac{\varrho|L\mathcal{D}|}{\int_{L\mathcal{D}}\chi^{2}|u|^{2}}\right)^{\frac{1}{2}} \chi u.
		$$
		Then $v \in H^{1}_{0}(L\mathcal{D})$ with $\int_{L\mathcal{D}}|v|^{2} = \varrho|L\mathcal{D}|$ and we have
		\begin{align}\label{energy:rescale}
		\mathcal{E}_{L\mathcal{D}}^{\rm GP}[\chi u] & = \frac{1}{2}\int_{L\mathcal{D}}\frac{\int_{L\mathcal{D}}\chi^{2}|u|^{2}}{\varrho |L\mathcal{D}|} \big|\big(-\mathrm{i}\nabla - \mathbf{x}^{\perp}\big)v\big|^{2} + \left(\frac{\int_{L\mathcal{D}}\chi^{2}|u|^{2}}{\varrho |L\mathcal{D}|}\right)^{2}G|v|^{4} \nonumber \\
		& \geq \min\left\{\frac{\int_{L\mathcal{D}}\chi^{2}|u|^{2}}{\varrho |L\mathcal{D}|},\left(\frac{\int_{L\mathcal{D}}\chi^{2}|u|^{2}}{\varrho |L\mathcal{D}|}\right)^{2}\right\}\mathcal{E}_{L\mathcal{D}}^{\rm GP}[v] \nonumber \\
		& \geq \left(\frac{\int_{L\mathcal{D}}\chi^{2}|u|^{2}}{\varrho |L\mathcal{D}|}\right)^{2} E_{0}^{\rm GP}(L\mathcal{D},\varrho |L\mathcal{D}|).
		\end{align}
		Now we use \eqref{upper:estimate-Neu}, \eqref{energy:deviation}, \eqref{mass:deviation} and \eqref{energy:rescale} to obtain that
		\begin{align*}
		E^{\rm GP}(L\mathcal{D},\varrho |L\mathcal{D}|) = \mathcal{E}_{L\mathcal{D}}^{\rm GP}[u] & \geq \left(1 - 2\frac{\int_{L\mathcal{D}}\eta^{2}|u|^{2}}{\varrho|L\mathcal{D}|}\right) E_{0}^{\rm GP}(L\mathcal{D},\varrho |L\mathcal{D}|) - C\int_{Q_{\ell}}|u|^{2} \\
		& \geq E_{0}^{\rm GP}(L\mathcal{D},\varrho |L\mathcal{D}|) - C(1+G\varrho)\left(LG^{-1} + \varrho L^{\frac{3}{2}}\right).
		\end{align*}
		This completes the proof.
	\end{proof}
	
	\begin{remark}[Dirichlet--Neumann comparison]\label{rem:NeuDir}~\\
		There is probably room for improvement in our bounds, but we certainly expect that the Dirichlet and Neumann energy must differ by at least a $O(L G^{-1})$ for low densities (a regime we will focus on in the next subsection). Here is why. 
		
		Consider the magnetic Laplacian 
		$$ \frac{1}{2} \left(-\mathrm{i}\nabla + \mathbf{A} \right)^{2}$$
		for constant magnetic field $B = - \mathrm{curl} \, \mathbf{A} = 2$. Low kinetic energies are obtained by localizing trial states on the order of the magnetic length, fixed in these units. Localization away from the boundary leads to an energy $\sim M$ at mass $M$, as one would obtain from the full space Landau Hamiltonian~\eqref{eq:Landau hamil}. Localization close to the boundary however leads to an energy $\sim \Theta_{0} M < M$ with $\Theta_{0}$ being the de Gennes constant, connected to the realization of~\eqref{eq:Landau hamil} on a half-plane with Neumann conditions on the boundary. We refer to \cite{FouHel-book} for background and theorems on these well-known facts. Note that they immediately impose conditions on $G$ for Dirichlet and Neumann energies to coincide: the theorem is certainly wrong for $G=0$.
		
		To get a heuristic estimate on the difference between Dirichlet and Neumann energies, start from a fully homogeneous system with density $\varrho$ and consider increasing the density in a shell of thickness $\sim 1$ close to the boundary by moving some mass $M$ from the bulk. If 
		$$ \varrho L \ll M \ll \varrho L^{2}$$
		we barely change the bulk density, but increase a lot the boundary density, at a cost of roughly  
		$$ G M^{2} L^{-1}$$
		in interaction energy. If we use Dirichlet conditions such a move is forbidden. But if we use Neumann boundary conditions, it is not only authorized but it can bring a gain of  
		$$ M (1-\Theta_{0}) \propto M$$
		in magnetic kinetic energy, as per the above discussion. Choosing $M\sim L G^{-1}$ to balance gain and loss we expect that the Neumann energy must include a negative term of order $\sim  L G ^{-1}$, absent from the Dirichlet energy. This is due to a larger boundary density in the Neumann case, favored by the spectral properties of the Landau Hamiltonian recalled above. 
		
		The error $O(L G ^{-1})$ is the main motivation to use the approach of Section~\ref{sec:reduc} to obtain Theorem~\ref{thm:main main} in full generality. It leads to the constraint $G \gg (1-\Omega)^{-\frac{3}{5}}$ in Theorem~\ref{thm:main} when performing the LDA, as further discussed in Remark~\ref{rem:LDA} below.\hfill$\diamond$
	\end{remark}

	\begin{lemma}[\textbf{Thermodynamic limit for the Dirichlet energy in a square}]\label{lem:thermodynamic-limit-square}~\\
		Let $K_{L}$ be a square of side length $L>0$, centered at the origin, $G > 0$ and $\varrho > 0$ be fixed parameters. The limit
		$$
		e^{\rm GP}(\varrho) = \lim_{L\to\infty}\frac{E_{0}^{\rm GP}\left(K_{L},\varrho L^{2}\right)}{L^{2}}
		$$
		exists and is finite.	
	\end{lemma}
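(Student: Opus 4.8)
The plan is to run the classical subadditivity (Fekete-type) argument for thermodynamic limits, the only structural input being the \emph{magnetic translation invariance} of the Dirichlet problem in a constant field. Write $f(L) := E_0^{\rm GP}(K_L,\varrho L^2)$ for brevity. Since $\mathrm{curl}\,\mathbf{x}^\perp = 2$ is constant, the same gauge computation as in the proof of Lemma~\ref{lem:thermodynamic-limit-boundedness} shows that translating a state is energetically neutral: for any $\mathbf{a}\in\R^2$ and any $u_0\in H_0^1(K_\ell)$, the gauge-translated function $\mathbf{x}\mapsto e^{\mathrm{i}\mathbf{a}^\perp\cdot\mathbf{x}}u_0(\mathbf{x}-\mathbf{a})$ lies in $H_0^1(\mathbf{a}+K_\ell)$ and has the same mass and the same energy $\mathcal{E}^{\rm GP}$ as $u_0$. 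The heart of the argument will be the almost-subadditive pasting bound
\begin{equation}
\frac{f(L)}{L^2} \le \frac{f(\ell)}{\ell^2} + C\,\frac{\ell}{L}, \qquad 1 \le \ell \le L,
\end{equation}
valid with a constant $C=C(G,\varrho)$ independent of $L,\ell$.

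To obtain this, I would fix $\ell \le L$ and set $k = \lfloor L/\ell\rfloor$, so that the corner block $K_{k\ell}\subset K_L$ is tiled by $k^2$ disjoint translates of $K_\ell$. On each translate place a gauge-translated copy of a Dirichlet (near-)minimizer for $f(\ell)$, carrying mass $\varrho\ell^2$; on the remaining $L$-shaped shell $K_L\setminus K_{k\ell}$, of width $<\ell$ and area $\le 2L\ell$, place the disk-packing trial state of Lemma~\ref{lem:thermodynamic-limit-boundedness} at density $\varrho$, whose energy is $\le C(1+G\varrho)\varrho\cdot(\text{area}) \le C'L\ell$. The total is an admissible function in $H_0^1(K_L)$ of mass exactly $\varrho L^2$; since all the supports are essentially disjoint and $\mathcal{E}^{\rm GP}$ is a sum of pointwise-local terms, the energies simply add, giving $f(L) \le k^2 f(\ell) + C'L\ell$. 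Dividing by $L^2$ and using $(k\ell)^2 \le L^2$ together with $f(\ell)\ge 0$ yields the displayed inequality.

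The conclusion is then immediate. Fixing $\ell$ and letting $L\to\infty$ in the pasting bound gives $\limsup_{L\to\infty} f(L)/L^2 \le f(\ell)/\ell^2$ for every $\ell$, hence $\limsup_{L\to\infty} f(L)/L^2 \le \inf_{\ell\ge 1} f(\ell)/\ell^2$. On the other hand $\liminf_{L\to\infty} f(L)/L^2 \ge \inf_{\ell\ge 1} f(\ell)/\ell^2$ trivially, so the two agree and
\begin{equation}
e^{\rm GP}(\varrho) = \lim_{L\to\infty}\frac{f(L)}{L^2} = \inf_{\ell\ge 1}\frac{f(\ell)}{\ell^2}.
\end{equation}
Finiteness follows since the ratio is nonnegative (both terms of $\mathcal{E}^{\rm GP}$ are) and bounded above by the uniform constant of Lemma~\ref{lem:thermodynamic-limit-boundedness}. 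Note that no separate lower bound is needed: the one-sided pasting inequality already pins $\limsup$ below $\inf \le \liminf$. The same scheme combined with Lemma~\ref{lem:NeuDir} would upgrade this to the boundary-condition-independent statement of Theorem~\ref{thm:thermodynamic-limit-general}.

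The only genuinely delicate point is the boundary shell: one must ensure its contribution is $o(L^2)$ at fixed $\ell$, i.e.\ of order $L\ell$. This is exactly where the uniform energy-per-area bound of Lemma~\ref{lem:thermodynamic-limit-boundedness} is used, and where one checks that a trial state of energy $O(\text{area})$ can be built even when the shell is too thin to contain unit disks (rescaling the disk radius to the shell width, which only improves the area bound). Everything else --- the gauge invariance and the additivity over disjoint supports --- is routine once the magnetic field is recognized as constant.
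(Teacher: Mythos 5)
Your strategy is essentially the paper's --- tile the large square with gauge-translated copies of a small-square Dirichlet minimizer and handle the leftover region with the disk construction of Lemma~\ref{lem:thermodynamic-limit-boundedness} --- and your Fekete-type packaging ($\limsup \le \inf_{\ell} f(\ell)/\ell^2 \le \liminf$) is in fact tidier than the paper's comparison of two sequences realizing the $\liminf$ and the $\limsup$; the gauge-translation identity and the additivity over disjointly supported pieces are handled correctly. The genuine gap sits exactly at the step you yourself flag as delicate, and your fix for it is wrong. If $f_r := r^{-1}f(\cdot/r)$ is the mass-preserving rescaling of the disk profile, then $\|\nabla f_r\|_{L^2}^2 = r^{-2}\|\nabla f\|_{L^2}^2$: shrinking the disks \emph{increases} the kinetic energy per unit mass like $r^{-2}$, so it does not ``only improve the area bound''. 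Concretely, the shell $K_L\setminus K_{k\ell}$ has width $w = L-k\ell\in[0,\ell)$ and must carry mass $\varrho(L^2-k^2\ell^2)\approx 2\varrho Lw$; with disks of radius $\sim w$ the kinetic cost is of order $(\varrho Lw)\cdot w^{-2} = \varrho L/w$, not $O(L\ell)$. Since $L$ ranges over all reals, $w$ can be arbitrarily small for arbitrarily large $L$ (take $L = k\ell+\varepsilon$), and then $\varrho L/w \gg L^2$, so the pasting inequality $f(L)/L^2 \le f(\ell)/\ell^2 + C\ell/L$ --- which your Fekete argument needs for \emph{all} large $L$ --- is not established. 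Worse, no cleverer shell state can rescue this: by the diamagnetic inequality and the one-dimensional Poincar\'e inequality across a region of width $w$, \emph{any} function supported in the shell has kinetic energy at least of order $w^{-2}$ per unit mass, hence total cost at least of order $\varrho L/w$.

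The repair is easy and stays within your scheme. Either (i) drop one ring of tiles, taking $k=\lfloor L/\ell\rfloor - 1$, so that the shell width lies in $[\ell,2\ell)$; for $\ell\ge 2$ the unit-disk construction of Lemma~\ref{lem:thermodynamic-limit-boundedness} then applies and gives shell energy $\le C(1+G\varrho)\varrho\, L\ell$, and restricting the infimum to $\ell\ge 2$ is harmless. Or (ii) dispense with the shell altogether: load each of the $k^2$ tiles with mass $\varrho L^2/k^2 = a\varrho\ell^2$, where $a = L^2/(k\ell)^2$, and use the amplitude-rescaling bound $E_0^{\rm GP}(K_\ell,aM)\le \max\{a,a^2\}\,E_0^{\rm GP}(K_\ell,M)$ (the same trick the paper uses in the proof of Theorem~\ref{thm:thermodynamic-limit-general}); since $a\to 1$ as $L\to\infty$ at fixed $\ell$, this yields $\limsup_{L\to\infty}f(L)/L^2\le f(\ell)/\ell^2$ directly, with no filler state at all. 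For fairness: the paper's own filler, built ``as in the proof of Lemma~\ref{lem:thermodynamic-limit-boundedness}'' on a region of width $k_{nm}<L_n$, is subject to the same thin-region caveat; but the paper does not assert, as you do, that shrinking the disks is harmless --- that assertion is the error, and it is the one point of your write-up that must be replaced.
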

	
	\begin{proof}
		Let $\left(L_{n}\right)_{n \in \mathbbm{N}}$ and $\left(L_{m}\right)_{m \in \mathbbm{N}}$ be two increasing sequences of positive real numbers such that $L_{n} \to \infty, L_{m} \to \infty$ and
		\begin{align*}
		\lim_{n\to\infty}\frac{E_{0}^{\rm GP}\left(K_{L_{n}},\varrho L_{n}^{2}\right)}{L_{n}^{2}} & = \liminf_{L \to \infty} \frac{E_{0}^{\rm GP}\left(K_{L},\varrho L^{2}\right)}{L^{2}},\\	\lim_{m\to\infty}\frac{E_{0}^{\rm GP}\left(K_{L_{m}},\varrho L_{m}^{2}\right)}{L_{m}^{2}} & = \limsup _{L \to \infty} \frac{E_{0}^{\rm GP}\left(K_{L},\varrho L^{2}\right)}{L^{2}}.
		\end{align*}
		For each $n$, there must exist a sequence of integers
		$$
		q_{nm} \to+\infty \quad \text {as} \quad m \to \infty
		$$
		such that, for $m$ large enough, e.g., $m \gg n$,
		$$
		L_{m}=q_{nm} L_{n}+k_{n m}, \quad 0 \leq k_{n m}<L_{n}.
		$$
		We build a trial state for $E_{0}^{\rm GP}\left(K_{L_{m}},\varrho L_{m}^{2}\right)$ as follows. The square $K_{L_{m}}$ must contain $q_{nm}^{2}$ disjoint squares of side length $L_{n}$ that we denote by $K_{L_{nj}}, j=1, \ldots, q_{nm}^{2}$. On the remaining part of the domain we can construct, as in the proof of Lemma~\ref{lem:thermodynamic-limit-boundedness}, a function $\tilde{u}_{0}$ of mass $\varrho (L_{m}^{2}-q_{nm}^{2}L_{n}^{2})$ with compact support in $K_{L_{m}} \backslash \bigcup_{j=1}^{q_{nm}^{2}} K_{L_{nj}}$, satisfying
		$$
		\mathcal{E}_{K_{L_{m}}}^{\rm GP}[\tilde{u}_{0}] \leq C\left(L_{m}^{2}-q_{nm}^{2} L_{n}^{2}\right) \leq C L_{m} k_{n m}.
		$$
		We define an admissible trial state
		$$
		u:=\sum_{j=1}^{q_{nm}^{2}} e^{\mathrm{i}\phi_{j}}u_{j} + \tilde{u}_{0}
		$$
		where
		$$
		u_{j}(\mathbf{x}) = u_{0}(\mathbf{x}-\mathbf{x}_{j})
		$$
		with $u_{0}$ a minimizer for $E_{0}^{\rm GP}\left(K_{L_{n}},\varrho L_{n}^{2}\right)$, and $\mathbf{x}_{j}$ the center points of $K_{L_{nj}}$. The phases $\phi_{j}$ are chosen in such a way that
		$$
		\mathbf{x}^{\perp} - (\mathbf{x}-\mathbf{x}_{j})^{\perp}= \nabla \phi_{j} \quad \text{in} \quad  K_{L_{nj}}.
		$$
		Computing the energy, we have
		\begin{align*}
		\mathcal{E}_{K_{L_{m}}}^{\rm GP}[u] = \sum_{j=1}^{q_{nm}^{2}} \mathcal{E}_{K_{L_{nj}}}^{\rm GP}[e^{\mathrm{i}\phi_{j}}u_{j}] + \mathcal{E}_{K_{L_{m}}}^{\rm GP}[\tilde{u}_{0}] &= \sum_{j=1}^{q_{nm}^{2}} \mathcal{E}_{K_{L_{n}}}^{\rm GP}[u_{0}] + \mathcal{E}_{K_{L_{m}}}^{\rm GP}[\tilde{u}_{0}] \\
		& = q_{nm}^{2} E_{0}^{\rm GP}\left(K_{L_{n}},\varrho L_{n}^{2}\right)+\mathcal{O}\left(L_{m} k_{n m}\right).
		\end{align*}
		Since $\int_{K_{L_{m}}}|u|^{2} = \varrho L_{m}^{2}$, it follows from the variational principle that
		$$
		\frac{E_{0}^{\rm GP}\left(K_{L_{m}},\varrho L_{m}^{2}\right)}{L_{m}^{2}} \leq \frac{E_{0}^{\rm GP}\left(K_{L_{n}},\varrho L_{n}^{2}\right)}{L_{n}^{2}}\left(1+\mathcal{O}\left(\frac{k_{n m}}{L_{m}}\right)\right)+\mathcal{O}\left(\frac{k_{n m}}{L_{m}}\right)
		$$
		where we have used the fact that
		$$
		q_{nm}^{2}=\frac{L_{m}^{2}}{L_{n}^{2}}\left(1-\frac{k_{n m}}{L_{m}}\right)^{2}.
		$$
		Passing to the limit $m \to \infty$ first and then $n \to \infty$ yields
		$$
		\limsup _{L \to \infty} \frac{E_{0}^{\rm GP}\left(K_{L},\varrho L^{2}\right)}{L^{2}} \leq\liminf _{L \to \infty} \frac{E_{0}^{\rm GP}\left(K_{L},\varrho L^{2}\right)}{L^{2}}
		$$
		and thus the limit exists.
	\end{proof}

	Now we are in the position to construct the thermodynamic limit in the general case.
	
	\begin{proof}[Proof of Theorem~\ref{thm:thermodynamic-limit-general}] The result is proven as usual by comparing suitable upper and lower bounds to the energy.
		
		\textbf{Upper bound.} We cover $L\mathcal{D}$ with squares $K_{j}, j=1, \ldots, N_{\ell}$, of side length $\ell=L^{\eta}, 0<\eta<1$, retaining only the squares completely contained in $L\mathcal{D} .$ One can estimate the area not covered by such squares as
		\begin{equation}\label{ineq-upper:thermodynamic-limit-general-1}
		\left|L\mathcal{D} \backslash\left(\bigcup_{j=1}^{N_{\ell}} K_{j}\right)\right| \leq C \ell L=o(L^{2}).
		\end{equation}
		Then we define the trial state
		$$
		u := \sum_{j=1}^{N_{\ell}} e^{\mathrm{i}\phi_{j}}u_{j}
		$$
		where
		$$
		u_{j}(\mathbf{x}):=u_{0}\left(\mathbf{x}-\mathbf{x}_{j}\right)\mathbbm{1}_{K_{j}},
		$$
		with $u_{0}$ a minimizer for the Dirichlet problem with mass $\varrho |L\mathcal{D}|N_{\ell}^{-1}$ in a square $K_{\ell}$ of side length $\ell$, centered at the origin, and $\mathbf{x}_{j}$ the center point of $K_{j}$. The phases $\phi_{j}$ are chosen in such a way that
		$$
		\mathbf{x}^{\perp} - (\mathbf{x}-\mathbf{x}_{j})^{\perp}= \nabla \phi_{j} \quad \text{in} \quad  K_{j}.
		$$
		Note that 
		$$
		\int_{L\mathcal{D}}|u|^{2} = \sum_{j=1}^{N_{\ell}}\int_{K_{j}}|u_{j}|^{2} =  N_{\ell}\int_{K_{\ell}}|u_{0}|^{2} = \varrho |L\mathcal{D}|.
		$$
		Hence, it follows from the variational principle that
		$$
		E_{0}^{\rm GP}(L\mathcal{D},\varrho |L\mathcal{D}|) \leq \mathcal{E}_{L\mathcal{D}}^{\rm GP}[u] = \sum_{j=1}^{N_{\ell}} \mathcal{E}_{K_{j}}^{\rm GP}[e^{\mathrm{i}\phi_{j}}u_{j}] = \sum_{j=1}^{N_{\ell}} \mathcal{E}_{K_{\ell}}^{\rm GP}[u_{0}] = N_{\ell}E_{0}^{\rm GP}\left(K_{\ell},\varrho |L\mathcal{D}| N_{\ell}^{-1}\right).
		$$
		By changing variables
		$$
		v = \left(\frac{\ell^{2}N_{\ell}}{|L\mathcal{D}|}\right)^{\frac{1}{2}}u,
		$$
		we obtain
		\begin{align*}
		& E_{0}^{\rm GP}\left(K_{\ell},\varrho |L\mathcal{D}| N_{\ell}^{-1}\right) \\
		= & \inf\left\{\frac{1}{2}\int_{K_{\ell}} \big|\big(-\mathrm{i}\nabla - \mathbf{x}^{\perp}\big)u\big|^{2} + G|u|^{4} : u\in H^{1}_{0}(K_{\ell}), \int_{K_{\ell}}|u|^{2} = \varrho |L\mathcal{D}| N_{\ell}^{-1}\right\} \\
		= & \inf\left\{\frac{1}{2}\int_{K_{\ell}}\frac{|L\mathcal{D}|}{\ell^{2}N_{\ell}} \big|\big(-\mathrm{i}\nabla - \mathbf{x}^{\perp}\big)v\big|^{2} + \left(\frac{|L\mathcal{D}|}{\ell^{2}N_{\ell}}\right)^{2}G|v|^{4} : v\in H^{1}_{0}(K_{\ell}), \int_{K_{\ell}}|v|^{2} = \varrho\ell^{2}\right\} \\
		\leq & \max\left\{\frac{|L\mathcal{D}|}{\ell^{2}N_{\ell}},\left(\frac{|L\mathcal{D}|}{\ell^{2}N_{\ell}}\right)^{2}\right\} \inf\left\{\frac{1}{2}\int_{K_{\ell}} \big|\big(-\mathrm{i}\nabla - \mathbf{x}^{\perp}\big)v\big|^{2} + G|v|^{4} : v\in H^{1}_{0}(K_{\ell}), \int_{K_{\ell}}|v|^{2} = \varrho\ell^{2}\right\} \\
		= & \max\left\{\frac{|L\mathcal{D}|}{\ell^{2}N_{\ell}},\left(\frac{|L\mathcal{D}|}{\ell^{2}N_{\ell}}\right)^{2}\right\} E_{0}^{\rm GP}\left(K_{\ell},\varrho\ell^{2}\right).
		\end{align*}
		Thus, we conclude that
		\begin{equation}\label{ineq-upper:thermodynamic-limit-general-2}
		\frac{E_{0}^{\rm GP}(L\mathcal{D},\varrho |L\mathcal{D}|)}{|L\mathcal{D}|} \leq \frac{\ell^{2}N_{\ell}}{|L\mathcal{D}|} \max\left\{\frac{|L\mathcal{D}|}{\ell^{2}N_{\ell}},\left(\frac{|L\mathcal{D}|}{\ell^{2}N_{\ell}}\right)^{2}\right\} \frac{E_{0}^{\rm GP}\left(K_{\ell},\varrho\ell^{2}\right)}{\ell^{2}}. 
		\end{equation}
		Notice that 
		$$\ell^{2}N_{\ell} = \big|\bigcup_{j=1}^{N_{\ell}} K_{j}\big| = (1+o(1)_{L \to \infty}) |L\mathcal{D}|,
		$$
		by \eqref{ineq-upper:thermodynamic-limit-general-1}, and $\ell = L^\eta \to \infty$. Thus, taking the limit $L\to\infty$ in \eqref{ineq-upper:thermodynamic-limit-general-2} and using Lemma~\ref{lem:thermodynamic-limit-square} we obtain the desired upper bound in \eqref{limit:thermodynamic}.
		
		\textbf{Lower bound.} We cover $L\mathcal{D}$ with squares $K_{j}$, $j=1, \ldots, N_{\ell}$ again, this time keeping the full covering but still having $\ell^{2} N_{\ell}|L\mathcal{D}|^{-1} \to 1$ as $L \to \infty$. Denote by $M_{\ell}$ the integer part of $N_{\ell}^{\frac{1}{2}}$, i.e.,
		$$
		M_{\ell} = \left\lfloor N_{\ell}^{\frac{1}{2}} \right\rfloor,
		$$
		where we used the notation $\lfloor x \rfloor = \max\{m\in \mathbb{Z} : m\leq x\}$. By definition, we have
		$$
		N_{\ell}^{\frac{1}{2}} - 1 \leq M_{\ell} \leq N_{\ell}^{\frac{1}{2}}.
		$$
		We pick any $M_{\ell}^{2}$ squares $K_{j}$, $j=1,\ldots,M_{\ell}^{2}$, among the $N_{\ell}$ squares. The area not covered said squares can be estimated as
		\begin{equation}\label{ineq-lower:thermodynamic-limit-general-1}
		\left|\left(\bigcup_{j=M_{\ell}^{2}+1}^{N_{\ell}} K_{j}\right)\right| = (N_{\ell} - M_{\ell}^{2})\ell^{2} \leq \left(N_{\ell}^{\frac{1}{2}} + M_{\ell} \right)\ell^{2} \leq 2N_{\ell}^{\frac{1}{2}}\ell^{2} = o(L^{2}).
		\end{equation}
		Next, we pick a minimizer $u^{\rm GP} = u_L^{\rm GP} \in H_{0}^{1}(L\mathcal{D})$ for $E_{0}^{\rm GP}(L\mathcal{D},\varrho |L\mathcal{D}|)$, and set
		$$
		\varrho_{j} := \frac{1}{\ell^{2}}\int_{K_{j}}\big|u^{\rm GP}\big|^{2}.
		$$
		Note that
		\begin{equation}\label{ineq-lower:thermodynamic-limit-general-2}
		\sum_{j=1}^{N_{\ell}}\varrho_{j}\ell^{2} = \int_{L\mathcal{D}}\big|u^{\rm GP}\big|^{2} = \varrho|L\mathcal{D}|
		\end{equation}
		and the mass concentrated outside $M_{\ell}^{2}$ squares is relatively small. Indeed, by Lemma~\ref{lem:thermodynamic-limit-boundedness} and \eqref{ineq-lower:thermodynamic-limit-general-1}, we have
		\begin{equation}\label{ineq-lower:thermodynamic-limit-general-3}
		\sum_{j=M_{\ell}^{2}+1}^{N_{\ell}}\varrho_{j}\ell^{2} \leq \left|\left(\bigcup_{j=M_{\ell}^{2}+1}^{N_{\ell}} K_{j}\right)\right|^{\frac{1}{2}} \left(\int_{L\mathcal{D}} \big|u^{\rm GP}\big|^{4}\right)^{\frac{1}{2}} \leq C N_{\ell}^{\frac{1}{4}}\ell L = o(L^{2}).
		\end{equation}
		Now we can estimate the energy. The idea of the proof is reminiscent of that in the upper bound part. We gauge away the rotation interaction between the $M_{\ell}^{2}$ squares, and this leads to a lower bound in terms of the Neumann energy in the square $K_{\ell M_{\ell}}$ of side length $\ell M_{\ell}$, centered at the origin. To bound the latter from below, we cover the square $K_{\ell M_{\ell}}$ with squares $\tilde{K}_{j}$, $j=1, \ldots, M_{\ell}^{2}$, of side length $\ell$, centered at $\tilde{x}_{j}$. We now estimate, using the gauge covariance of the functional on each $K_{j}$ and $\tilde{K}_{j}$,
		\begin{align}\label{ineq-lower:thermodynamic-limit-general-final-1}
		\mathcal{E}_{L\mathcal{D}}^{\rm GP}\big[u^{\rm GP}\big] \geq \sum_{j=1}^{M_{\ell}^{2}}\mathcal{E}_{K_{j}}^{\rm GP}\big[u^{\rm GP}\big]
		& = \sum_{j=1}^{M_{\ell}^{2}}\mathcal{E}_{\tilde{K}_{j}}^{\rm GP}\big[e^{\mathrm{i}\tilde{\phi}_{j}(\cdot-\tilde{\mathbf{x}}_{j})}e^{-\mathrm{i}\phi_{j}(\cdot+\mathbf{x}_{j}-\tilde{\mathbf{x}}_{j})}u^{\rm GP}(\cdot + \mathbf{x}_{j}-\tilde{x}_{j})\big] \nonumber \\
		& \geq \sum_{j=1}^{M_{\ell}^{2}}E^{\rm GP}\big(\tilde{K}_{j},\varrho_{j}\ell^{2}\big),
		\end{align}
		where $\phi_{j}$ and $\tilde{\phi}_{j}$ satisfy
		$$
		\begin{cases}
		\mathbf{x}^{\perp} - (\mathbf{x}-\mathbf{x}_{j})^{\perp}= \nabla \phi_{j} & \text{in} \quad  K_{j},\\
		(\mathbf{x}+\tilde{\mathbf{x}}_{j})^{\perp}- \mathbf{x}^{\perp} = \nabla \tilde{\phi}_{j} & \text{in} \quad  \tilde{K}_{j}.
		\end{cases}
		$$
		By \eqref{upper:estimate-DirNeu}, we have
		\begin{equation}\label{ineq-lower:thermodynamic-limit-general-final-2}
		E^{\rm GP}\big(\tilde{K}_{j},\varrho_{j}\ell^{2}\big) \geq E_{0}^{\rm GP}\big(\tilde{K}_{j},\varrho_{j}\ell^{2}\big) - C(1+G\varrho_{j})\left(\ell G^{-1} + \varrho_{j} \ell^{\frac{3}{2}}\right).
		\end{equation}
		Now we consider $\tilde{u}_{j}$, $j = 1,\ldots,M_{\ell}^{2}$, a minimizer for $E_{0}^{\rm GP}\big(\tilde{K}_{j},\varrho_{j}\ell^{2}\big)$. We use $\sum_{j=1}^{M_{\ell}^{2}}\tilde{u}_{j}$ as a trial state for the Dirichlet problem of mass $\sum_{j=1}^{M_{\ell}^{2}}\varrho_{j}\ell^{2}$ in a square $K_{\ell M_{\ell}}$ with side length $\ell M_{\ell}$ centered at the origin. We finally obtain from \eqref{ineq-lower:thermodynamic-limit-general-final-1} and \eqref{ineq-lower:thermodynamic-limit-general-final-2} that
		\begin{align}\label{ineq-lower:thermodynamic-limit-general-8}
		\frac{E_{0}^{\rm GP}(L\mathcal{D},\varrho |L\mathcal{D}|)}{|L\mathcal{D}|} & \geq \frac{E_{0}^{\rm GP}\left(K_{\ell M_{\ell}}, \sum_{j=1}^{M_{\ell}^{2}}\varrho_{j}\ell^{2}\right)}{|L\mathcal{D}|} - C\frac{\sum_{j=1}^{M_{\ell}^{2}}(1+G\varrho_{j})\left(\ell G^{-1} + \varrho_{j} \ell^{\frac{3}{2}}\right)}{|L\mathcal{D}|} \nonumber \\
		& \geq \frac{\ell^{2}M_{\ell}^{2}}{|L\mathcal{D}|} \min\left\{\frac{\sum_{j=1}^{M_{\ell}^{2}}\varrho_{j}\ell^{2}}{\varrho\ell^{2}M_{\ell}^{2}},\left(\frac{\sum_{j=1}^{M_{\ell}^{2}}\varrho_{j}\ell^{2}}{\varrho\ell^{2}M_{\ell}^{2}}\right)^{2}\right\}\frac{E_{0}^{\rm GP}\left(K_{\ell M_{\ell}}, \varrho\ell^{2}M_{\ell}^{2}\right)}{\ell^{2}M_{\ell}^{2}} \nonumber \\
		& \quad - C\frac{G^{-1}\ell M_{\ell}^{2} + \sum_{j=1}^{M_{\ell}^{2}}\varrho_{j}\ell + \varrho_{j}\ell^{\frac{3}{2}} + G\varrho_{j}^{2}\ell^{\frac{3}{2}}}{|L\mathcal{D}|}.
		\end{align}
		For the main term in \eqref{ineq-lower:thermodynamic-limit-general-8}, we have, by \eqref{ineq-lower:thermodynamic-limit-general-1}, \eqref{ineq-lower:thermodynamic-limit-general-2} and \eqref{ineq-lower:thermodynamic-limit-general-3},
		$$
		\ell^{2}M_{\ell}^{2} = \ell^{2}N_{\ell} + o(L^{2}) = |L\mathcal{D}| + o(L^{2})
		$$
		and
		$$
		\sum_{j=1}^{M_{\ell}^{2}}\varrho_{j}\ell^{2} = \varrho|L\mathcal{D}| + o(L^{2}).
		$$
		For the error term in \eqref{ineq-lower:thermodynamic-limit-general-8}, if we assume that $\ell = L^{\eta}$ with $\eta > \frac{4}{5}$ then
		$$
		\sum_{j=1}^{M_{\ell}^{2}}\varrho_{j}^{2}\ell^{\frac{3}{2}} = \ell^{-\frac{5}{2}}\sum_{j=1}^{M_{\ell}^{2}}\left(\varrho_{j}\ell^{2}\right)^{2} \leq \ell^{-\frac{5}{2}}\left(\sum_{j=1}^{M_{\ell}^{2}}\varrho_{j}\ell^{2}\right)^{2} \leq \ell^{-\frac{5}{2}}\left(\varrho|L\mathcal{D}|\right)^{2} = o(L^{2}).
		$$
		Note that $\ell M_{\ell} = L^\eta M_{\ell} \to \infty$. Thus, taking the limit $L\to\infty$ in \eqref{ineq-lower:thermodynamic-limit-general-8} and using Lemma~\ref{lem:thermodynamic-limit-square}, we obtain the desired lower bound in \eqref{limit:thermodynamic}.
	\end{proof}	
	
	\subsection{Low density regime}\label{sec:rho to 0}
	
	Now that we have proved that the thermodynamic limit of the homogeneous energy is the same with Neumann or Dirichlet conditions, it makes sense that the limit with periodic boundary conditions also coincides. Some care must be taken to define the latter, for the magnetic Laplacian does not commute with translations. The remedy is well-known (see e.g., \cite[Section~3.13]{Jain-07} or the discussion in \cite{Perice-22}): we impose so-called magnetic periodic boundary conditions on squares containing a quantized magnetic flux. The Abrikosov constant~\eqref{eq:def Ab} is best defined in terms of the low-density limit of the so-obtained problem, for there is then a well-defined, explicit, analogue \cite{AftSer-07,Almog-06,FouKac-11} of the lowest Landau level~\eqref{space:LLL}.
	
	Let $L > 0$ and denote by $K_{L}$ the unit square of the lattice $L(\mathbb{Z} \oplus i \mathbb{Z})$. We assume the quantization condition that $(2 \pi)^{-1}|K_{L}|$ is an integer, i.e., there exists $d \in \mathbb{N}$ such that
	\begin{equation}\label{space:LLL-dimension}
	L^{2}=2 \pi d.
	\end{equation}
	Let us introduce the following space
	\begin{equation}\label{space:LLL-square}
	\begin{aligned}
	H_{\rm per}^{1}(K_{L}) = \Big\{u \in H^{1}(K_{L}): u\left(x_{1}+L, x_{2}\right) & = e^{i \frac{\pi d x_{2}}{L}} u\left(x_{1}, x_{2}\right) \\ u\left(x_{1}, x_{2}+L\right) & = e^{-i \frac{\pi dx_{1}}{L}} u\left(x_{1},x_{2}\right)\Big\}.
	\end{aligned}
	\end{equation}
	The operator $\frac{1}{2}\big(-\mathrm{i}\nabla - \mathbf{x}^{\perp}\big)^{2}$ in $L^{2}(K_{L})$ is self-adjoint positive over the subspace $H_{\rm per}^{1}(K_{L})$. Properties of this operator were studied by Aftalion and Serfaty \cite{AftSer-07} (see also Almog \cite{Almog-06}). The following proposition is essentially \cite[Proposition 3.1]{AftSer-07}.
	
	\begin{proposition}[\textbf{Finite dimensional lowest Landau level}]\label{pro:LLL-finite}~\\
		Assume $L$ is such that $|K_{L}| \in 2\pi \mathbb{N}$. We have the following spectral properties:
		\begin{itemize}
			\item[(i)] The lowest eigenvalue of $\frac{1}{2}\big(-\mathrm{i}\nabla - \mathbf{x}^{\perp}\big)^{2}$ is equal to $1$, and the associated eigenspace, called $\mathcal{LLL}_{L}$, has complex dimension $d$ given by \eqref{space:LLL-dimension}.
			\item[(ii)] The second eigenvalue of $\frac{1}{2}\big(-\mathrm{i}\nabla - \mathbf{x}^{\perp}\big)^{2}$ is greater than $3$.
		\end{itemize}
	\end{proposition}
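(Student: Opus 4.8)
The plan is to diagonalize the operator \eqref{eq:Landau hamil} on $H_{\rm per}^1(K_L)$ by the Landau ladder-operator method and then to reduce (i)--(ii) to (a) a holomorphic description of the lowest level together with a dimension count, and (b) the Landau level spacing. The starting observation is that the magnetic translations by $L(\mathbb Z\oplus \mathrm i\mathbb Z)$ whose invariant functions are exactly those of \eqref{space:LLL-square} commute with \eqref{eq:Landau hamil}. This is what makes the operator self-adjoint and non-negative on this space, the quantization \eqref{space:LLL-dimension} being precisely the condition that the two generating translations commute, and it lets the whole spectral analysis be run level by level.

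First I would set $\Pi_1 := -\mathrm i\partial_{x_1} + x_2$ and $\Pi_2 := -\mathrm i\partial_{x_2} - x_1$, so that \eqref{eq:Landau hamil} equals $\tfrac12(\Pi_1^2 + \Pi_2^2)$ and $[\Pi_1,\Pi_2] = 2\mathrm i$. With $a := \tfrac12(\Pi_1 + \mathrm i\Pi_2)$ and $a^\ast := \tfrac12(\Pi_1 - \mathrm i\Pi_2)$ one checks $[a,a^\ast] = 1$ and
$$\tfrac12(-\mathrm i\nabla - \mathbf x^\perp)^2 = 2\,a^\ast a + 1 .$$
After checking that $a$ and $a^\ast$ preserve $H_{\rm per}^1(K_L)$, the number operator $N := a^\ast a$ has non-negative integer spectrum, $a^\ast$ raises and $a$ lowers its levels, and the identity $\|a^\ast u\|^2 = \|u\|^2 + \|a u\|^2$ shows that $a^\ast$ is injective on $\ker a$. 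Hence the spectrum of \eqref{eq:Landau hamil} is contained in $\{2k+1 : k\in\mathbb N\}$, with the lowest value $1$ attained on $\ker a$ and the next value $3$ attained on the nonzero space $a^\ast\ker a$.

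For (i) I would identify $\mathcal{LLL}_L = \ker a$. In the complex coordinates of \eqref{space:LLL} one has $a = -\mathrm i(\partial_{\bar z} + \tfrac z2)$, so writing $u = e^{-|z|^2/2} f$ the equation $au = 0$ becomes $\partial_{\bar z} f = 0$; thus $\mathcal{LLL}_L$ consists of the $u = e^{-|z|^2/2} f$ with $f$ holomorphic, the quasi-periodicity of \eqref{space:LLL-square} being rewritten as the transformation law of a theta function for the lattice $L(\mathbb Z\oplus\mathrm i\mathbb Z)$, and its eigenvalue is $1$. The heart of the matter is then $\dim_{\mathbb C}\mathcal{LLL}_L = d$. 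I would obtain this from the degree of the holomorphic line bundle over the torus $\mathbb C/L(\mathbb Z\oplus\mathrm i\mathbb Z)$ whose clutching functions are the factors in \eqref{space:LLL-square}: that degree is $\tfrac1{2\pi}$ times the total winding of those factors around the cell, which \eqref{space:LLL-dimension} fixes to $d$, and a positive-degree line bundle on a torus has $h^0$ equal to its degree. Equivalently, and more concretely, the argument principle applied to $f'/f$ on $\partial K_L$ shows that any nonzero such $f$ has exactly $d$ zeros in the fundamental cell, while a theta function of this order is determined up to a constant by its divisor, giving a $d$-dimensional space.

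Assertion (ii) is then immediate from the ladder structure: the first level above $\mathcal{LLL}_L$ is $a^\ast\mathcal{LLL}_L$, at eigenvalue $3$, so the spectrum above the lowest Landau level starts at $3$, which is the asserted gap. The main obstacle is the dimension count in (i): one must carefully match the boundary conditions \eqref{space:LLL-square} to the quantized flux and invoke theta-function / line-bundle theory to get exactly $d$, rather than a mere inequality. A secondary technical point is the domain check that $a,a^\ast$ respect \eqref{space:LLL-square}; the individual $\Pi_j$ fail to commute with ordinary translations, and it is the quantization \eqref{space:LLL-dimension} that restores their compatibility with the magnetic ones. All of this is carried out in \cite[Proposition~3.1]{AftSer-07}, whose argument I would follow.
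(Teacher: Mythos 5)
The paper does not actually prove this proposition: it records it as ``essentially \cite[Proposition 3.1]{AftSer-07}'' and defers entirely to that reference, so your proposal is being compared against a citation rather than an argument. Your sketch is a correct rendition of the standard proof behind that citation --- the commutation relations $[a,a^\ast]=1$ and $\tfrac12(-\mathrm{i}\nabla-\mathbf{x}^{\perp})^2 = 2a^\ast a+1$ are right, the identification of $\ker a$ with $e^{-|z|^2/2}\times(\text{holomorphic, theta-type})$ functions is right, and the flux-quantization dimension count $\dim_{\mathbb{C}}\mathcal{LLL}_L=d$ via the argument principle / Riemann--Roch is the standard route --- so you have supplied strictly more detail than the paper itself, while following the same source it invokes.
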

	
	The space $\mathcal{LLL}_{L}$ is the finite-dimensional analogue of the lowest Landau level in \eqref{space:LLL}. Let us now define the following energy with magnetic-periodic boundary conditions
	$$
	E_{\rm per}^{\rm GP}(\mathcal{D},M) = \inf\left\{\mathcal{E}_{\mathcal{D}}^{\rm GP}[u] : u\in H_{\rm per}^{1}(\mathcal{D}), \int_{\mathcal{D}}|u|^{2} = M\right\}.
	$$
	Since $H_{0}^{1}(K_{L})$ can be viewed as a subspace of $H_{\rm per}^{1}(K_{L})$ (conditions \eqref{space:LLL-square} are satisfied), we have
	\begin{equation}\label{DirNeuPer}
	E^{\rm GP}(K_{L},M) \leq E_{\rm per}^{\rm GP}(K_{L},M) \leq E_{0}^{\rm GP}(K_{L},M).
	\end{equation}
	Then Lemma~\ref{lem:NeuDir} implies that, for fixed $G > 0$ and $\varrho > 0$,
	\begin{equation}\label{limit:thermodynamic-periodic}
	\boxed{e^{\rm GP}(\varrho) = \lim_{L\to\infty}\frac{E_{\rm per}^{\rm GP}\left(K_{L},\varrho L^{2}\right)}{L^{2}}.}
	\end{equation}
	Using \eqref{limit:thermodynamic-periodic}, we derive an asymptotic formula for the thermodynamic limit $e^{\rm GP}(\varrho)$ as $\varrho \to 0$. This will be an important ingredient in the proof of our main result.
	
	\begin{theorem}[\textbf{Energy in the low density limit}]\label{thm:Abrikosov}~\\
		Let $G > 0$ be fixed and $\varrho \ll 1$. We have, as $L\to\infty$,
		\begin{equation}\label{scaling-law}
		\begin{aligned}
		\varrho L^{2} + (1 + o(1))\frac{e^{\rm Ab}(1)}{2}G\varrho^{2}L^{2} & \geq E_{\rm per}^{\rm GP}\left(K_{L},\varrho L^{2}\right) \\
		& \geq \varrho L^{2} + (1 + o(1))\frac{e^{\rm Ab}(1)}{2}\left(G\varrho^{2}-CG^{\frac{3}{2}}\varrho^{\frac{5}{2}}-CG^{2}\varrho^{3}\right)L^{2}
		\end{aligned}
		\end{equation}		
		for a constant $C > 0$. Here
		$$
		e^{\rm Ab}(\varrho) := \lim_{L\to\infty}\frac{1}{L^{2}} \inf\left\{\int_{K_{L}}|u|^{4} : u\in \mathcal{LLL}_{L}, \int_{K_{L}}|u|^{2} = \varrho L^{2}\right\}.
		$$
	\end{theorem}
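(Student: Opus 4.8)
The crucial structural fact is the exact scaling $e^{\rm Ab}(\varrho) = \varrho^{2}\,e^{\rm Ab}(1)$, obtained by substituting $u = \sqrt{\varrho}\,v$ in the $L^{4}$-minimisation defining $e^{\rm Ab}$: the mass constraint becomes $\int_{K_{L}}|v|^{2} = L^{2}$ while $\int|u|^{4} = \varrho^{2}\int|v|^{4}$. The upper bound then follows by testing $E_{\rm per}^{\rm GP}(K_{L},\varrho L^{2})$ with a near-minimiser $u_{0}\in\mathcal{LLL}_{L}$ of that $L^{4}$-problem: by Proposition~\ref{pro:LLL-finite}(i), $u_{0}$ sits at the bottom of $H:=\frac{1}{2}(-\mathrm{i}\nabla-\mathbf{x}^{\perp})^{2}$, so its kinetic energy equals $\|u_{0}\|_{2}^{2} = \varrho L^{2}$ exactly, and $\mathcal{E}_{K_{L}}^{\rm GP}[u_{0}] = \varrho L^{2} + \frac{G}{2}\int|u_{0}|^{4} = \varrho L^{2} + (1+o(1))\frac{e^{\rm Ab}(1)}{2}G\varrho^{2}L^{2}$ by the scaling law.

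For the lower bound I would take a minimiser $u$ (existence by standard compactness on the bounded domain $K_{L}$) and split $u = u_{0}+u_{\perp}$ with $u_{0}=\Pi u$, $\Pi$ the orthogonal projection onto $\mathcal{LLL}_{L}$. As $u_{0},u_{\perp}$ lie in orthogonal spectral subspaces of $H$, the kinetic energy splits and the gap of Proposition~\ref{pro:LLL-finite}(ii) gives $\frac{1}{2}\int|(-\mathrm{i}\nabla-\mathbf{x}^{\perp})u|^{2} = \|u_{0}\|_{2}^{2} + \langle u_{\perp},Hu_{\perp}\rangle \geq \varrho L^{2} + 2\|u_{\perp}\|_{2}^{2}$. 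Comparing with the (already proved) upper bound yields the a priori estimates $\|u_{\perp}\|_{2}^{2}\lesssim G\varrho^{2}L^{2}$ and $\int|u|^{4}\lesssim\varrho^{2}L^{2}$. For the interaction, the scaling law and the definition of $e^{\rm Ab}$ give $\int|u_{0}|^{4}\geq(1+o(1))e^{\rm Ab}(1)L^{-2}\|u_{0}\|_{2}^{4}$; inserting $\|u_{0}\|_{2}^{2}=\varrho L^{2}-\|u_{\perp}\|_{2}^{2}$ produces a loss $\sim Ge^{\rm Ab}(1)\varrho\|u_{\perp}\|_{2}^{2}$ which, since $G\varrho\to 0$, is entirely absorbed by the kinetic penalty $2\|u_{\perp}\|_{2}^{2}$, so it contributes no net error.

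It then remains to compare $\int|u|^{4}$ with $\int|u_{0}|^{4}$. Expanding $|u|^{4}=|u_{0}+u_{\perp}|^{4}$ gives $\int|u|^{4}\geq\int|u_{0}|^{4} - C\big(\int|u_{0}|^{3}|u_{\perp}| + \int|u_{0}|^{2}|u_{\perp}|^{2} + \int|u_{0}||u_{\perp}|^{3} + \int|u_{\perp}|^{4}\big)$. Granting the extensive, density-correct bounds $\|u_{0}\|_{L^{4}}^{4}\lesssim\varrho^{2}L^{2}$ and $\|u_{\perp}\|_{L^{4}}^{4}\lesssim G^{2}\varrho^{4}L^{2}$, Hölder's inequality turns the first two cross terms (after multiplication by $G/2$) into $\lesssim G^{3/2}\varrho^{5/2}L^{2}$ and $\lesssim G^{2}\varrho^{3}L^{2}$ respectively, matching exactly the two error terms in the statement, while the last two are of smaller order. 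The bound on $u_{0}$ follows from $\int|u|^{4}\lesssim\varrho^{2}L^{2}$ together with the uniform-in-$L$ boundedness of $\Pi$ on $L^{4}(K_{L})$ established in Appendix~\ref{app:boundedness-lll}.

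The main difficulty, as I see it, is the estimate $\|u_{\perp}\|_{L^{4}}^{4}\lesssim G^{2}\varrho^{4}L^{2}$. I would derive it from the Euler--Lagrange equation $Hu=(\mu-G|u|^{2})u$, whose projection off $\mathcal{LLL}_{L}$ reads $(H-\mu)u_{\perp} = -G(1-\Pi)(|u|^{2}u)$; here $\mu=1+O(G\varrho)$ lies well below the second eigenvalue, so $(H-\mu)^{-1}(1-\Pi)$ is bounded and smoothing. A naive global $L^{p}\to L^{q}$ resolvent estimate is fatal, because $L^{q}$-norms with $q>p$ fail to be extensive and one spuriously gains a power of $L$; the cure is to run the elliptic estimate locally on unit cells and sum, feeding in the pointwise control $\|u\|_{\infty}^{2}\lesssim\mu/G$ that the maximum principle (Kato's inequality applied to the Euler--Lagrange equation) supplies, together with the energy $L^{4}$-bound. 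Carrying out this localisation uniformly in $L$ and compatibly with the magnetic-periodic boundary conditions is the technical heart of the argument. With both $L^{4}$-bounds in hand, assembling the kinetic lower bound, the main interaction term $(1+o(1))\frac{e^{\rm Ab}(1)}{2}G\varrho^{2}L^{2}$ and the two cross-term errors gives the lower bound, and the limit $L\to\infty$ supplies the $(1+o(1))$ factors.
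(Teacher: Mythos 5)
Your upper bound and the skeleton of your lower bound (the decomposition $u=\Pi_L u+\Pi_L^{\perp}u$, the spectral gap of Proposition~\ref{pro:LLL-finite}, the a priori bounds $\|\Pi_L^{\perp}u\|_{L^2}^2\lesssim G\varrho^2L^2$ and $\int|u|^4\lesssim\varrho^2L^2$ obtained by comparison with the upper bound, and the expansion of the quartic term) coincide with the paper's. The gap is exactly in the step you flag as the ``technical heart'': the extensive bound $\|u_{\perp}\|_{L^4}^4\lesssim G^2\varrho^4L^2$. You propose to derive it from the resolvent equation $(H-\mu)u_{\perp}=-G(1-\Pi)(|u|^2u)$ together with the pointwise control $\|u\|_{L^\infty}^2\lesssim \mu/G$ coming from the maximum principle. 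That pointwise bound is correct but lives at the \emph{saturation} scale $1/G$, whereas the relevant scale is the actual density $\varrho\ll 1/G$; the two differ by the diverging factor $(G\varrho)^{-1}$. Quantitatively: with only $\|u\|_{L^\infty}^2\lesssim 1/G$, the source term obeys $\|G|u|^2u\|_{L^4}\leq G\|u\|_{L^\infty}^2\left(\int|u|^4\right)^{1/4}\lesssim\varrho^{1/2}L^{1/2}$, so even granting uniform-in-$L$ boundedness of $(H-\mu)^{-1}(1-\Pi)$ on $L^4$ you only reach $\|u_{\perp}\|_{L^4}^4\lesssim\varrho^2L^2$, i.e.\ the same size as $\|u\|_{L^4}^4$, not smaller. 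Your cross term then satisfies only $G\int|u_0|^2|u_{\perp}|^2\leq G\|u_0\|_{L^4}^2\|u_{\perp}\|_{L^4}^2\lesssim G\varrho^2L^2$, which is of the \emph{same order} as the main term, and the lower bound collapses. Localizing on unit cells does not repair this: the energy bound $\int|u|^4\lesssim\varrho^2L^2$ still allows a fraction $\sim G^2\varrho^2$ of the cells to carry density $|u|^2\sim 1/G$, and those cells alone contribute $\sim\varrho^2L^2$ to $\int|u_{\perp}|^4$. The enemy is the pointwise scale, not extensivity.

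What is actually needed, and what the paper's proof turns on, is the much stronger estimate $\|u\|_{L^\infty}^2\lesssim(\lambda-1)/G\sim\varrho$. This is \emph{not} a maximum-principle statement: for $v=(G/\lambda)^{1/2}u$ the maximum principle only yields $\|v\|_{L^\infty}\leq 1$ (your bound); the refinement $\|v\|_{L^\infty}\leq C_{\max}(\lambda-1)^{1/2}$ is Theorem~3.1 of Fournais--Helffer \cite{FouHel-10}, proved by a blow-up/compactness argument, which the paper extends to magnetic-periodic solutions by tiling the plane with gauge-translated copies of $u$. With this bound and the $L^\infty$-boundedness of $\Pi_L$ (Appendix~\ref{app:boundedness-lll}), the paper estimates the single surviving cross term as $G\int|\Pi_Lu|^3|\Pi_L^{\perp}u|\leq G\|\Pi_Lu\|_{L^\infty}^2\|\Pi_Lu\|_{L^2}\|\Pi_L^{\perp}u\|_{L^2}\lesssim G^{3/2}\varrho^{5/2}L^2$, putting all of $u_\perp$ in $L^2$. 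A further simplification you could adopt from the paper's expansion \eqref{scaling-law:lower-bound}: no $L^4$ control of $u_{\perp}$ is ever needed, since $-\frac{G}{2}\int|\Pi_L^{\perp}u|^4$ is exactly cancelled by the quartic part of $\mathcal{E}_{K_L}^{\rm GP}[\Pi_L^{\perp}u]$, while the terms $G\int|\Pi_Lu|^2|\Pi_L^{\perp}u|^2$ and the perfect square enter with a favorable sign and are simply dropped (the $G^2\varrho^3$ error then arises from the mass renormalization $\|\Pi_Lu\|_{L^2}^2\geq\varrho L^2-CG\varrho^2L^2$ in the Abrikosov term, rather than from a cross term). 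So your architecture can be repaired, but only by importing the Fournais--Helffer elliptic estimate; that estimate is the essential missing ingredient, not a technicality that localization of the maximum principle can replace.
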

	
	\begin{remark}[Thermodynamic limit at low density]~\\
		As a consequence of \eqref{limit:thermodynamic-periodic} and \eqref{scaling-law}, we have
		\begin{equation}\label{thermodynamic-limit:asymptotic}
		\frac{2}{G}\lim_{\varrho \to 0} \frac{e^{\rm GP}(\varrho) - \varrho}{\varrho^{2}} = e^{\rm Ab}(1).
		\end{equation}
		One can see immediately from the definition that $e^{\rm Ab}(1) \geq 1$. The minimization of $\fint_{K_{L}}|u|^{4}$ over $u \in \mathcal{LLL}_{L}$ is another formulation of the Abrikosov problem in finite domains \cite{AftBlaNie-06b,Almog-06}. Part of the proof of Theorem~\ref{thm:Abrikosov} is similar to that of \cite{AftBla-08} for the reduction of the Gross--Pitaevskii energy to the infinite-dimensional lowest Landau level. By using the Euler--Lagrange equation and elliptic estimates, we check that the periodic Gross--Pitaevskii minimizer and its projection onto the space $\mathcal{LLL}_{L}$ are close. In \cite{AftSer-07}, the projection onto the finite-dimensional lowest Landau level is also used, but with different elliptic estimates. By arguments similar to those in \cite{AftBla-08,AftSer-07}, an analogue of \eqref{thermodynamic-limit:asymptotic} is obtained in the regime
		\begin{equation}\label{rule-out:scaling}
		G^{\frac{1}{2}}\varrho^{\frac{1}{2}}L^{2} \to 0 \quad \text{where} \quad L\to\infty.
		\end{equation}
		We will not be at liberty to assume~\eqref{rule-out:scaling} when performing the local density approximation in the proof of our main theorem. In the following, we use elliptic estimates based on work by Fournais and Helffer \cite{FouHel-10} to circumvent the condition \eqref{rule-out:scaling}. This is reminiscent of considerations from \cite{FouKac-13},
		see in particular Theorem 2.12 and Remark 2.13 therein. \hfill$\diamond$
	\end{remark}
	%
	%
	%
	\begin{proof}[Proof of Theorem \ref{thm:Abrikosov}]
		Let $u$ be a minimizer for the variational problem
		$$
		E^{\rm Ab}\left(K_{L},\varrho L^{2}\right) := \inf\left\{\int_{K_{L}}|u|^{4} : u\in \mathcal{LLL}_{L}, \int_{K_{L}}|u|^{2} = \varrho L^{2}\right\}.
		$$
		By a simple scaling,
		$$
		e^{\rm Ab}(\varrho) = \lim_{L\to\infty}\frac{E^{\rm Ab}\left(K_{L},\varrho L^{2}\right)}{L^{2}} = e^{\rm Ab}(1)\varrho^{2}.
		$$
		By the variational principle, we have
		$$
		E_{\rm per}^{\rm GP}\left(K_{L},\varrho L^{2}\right) \leq \mathcal{E}_{K_{L}}^{\rm GP}[u] = \varrho L^{2} + \frac{e^{\rm Ab}(1)}{2}G\varrho^{2}L^{2}(1 + o(1)_{L\to\infty}).
		$$
		This is the desired upper bound in \eqref{scaling-law}.
		
		In order to obtain the lower bound in \eqref{scaling-law}, we denote by $u$ a minimizer for $E_{\rm per}^{\rm GP}\left(K_{L},\varrho L^{2}\right)$. Such $u$ solves the Ginzburg--Landau type equation
		\begin{equation}\label{eq:GL-bounded}
		\frac{1}{2}\big(-\mathrm{i}\nabla - \mathbf{x}^{\perp}\big)^{2}u + G|u|^{2}u = \lambda u \quad \text{in} \quad  K_{L},
		\end{equation}
		where $\lambda$ is the Euler--Lagrange multiplier. It follows from the above equation that
		\begin{equation}\label{eq:EL-GL}
		\lambda\int_{K_{L}}|u|^{2} =  \frac{1}{2}\int_{K_{L}} \big|\big(-\mathrm{i}\nabla - \mathbf{x}^{\perp}\big)u\big|^{2} + G\int_{K_{L}}|u|^{4}.
		\end{equation}
		The lowest eigenvalue of $\frac{1}{2}\big(-\mathrm{i}\nabla - \mathbf{x}^{\perp}\big)^{2}$ is equal to $1$, by Proposition \ref{pro:LLL-finite}.  We then infer from \eqref{eq:EL-GL} that $\lambda \geq 1$. On the other hand, it follows from \eqref{eq:EL-GL} and the upper bound on $E_{\rm per}^{\rm GP}\left(K_{L},\varrho L^{2}\right)$ in \eqref{scaling-law} that
		$$
		(\lambda + 1)\varrho L^{2} = (\lambda + 1)\int_{K_{L}}|u|^{2} \leq 2\mathcal{E}_{K_{L}}^{\rm GP}[u] \leq 2\varrho L^{2} + e^{\rm Ab}(1)G\varrho^{2}L^{2}(1 + o(1)_{L\to\infty}).
		$$
		This implies that
		\begin{equation}\label{ineq:projection-0}
		\lambda - 1 \leq e^{\rm Ab}(1)G\varrho(1 + o(1)_{L\to\infty}).
		\end{equation}
		Next, we define $v = \left(\frac{G}{\lambda}\right)^{\frac{1}{2}}u$. Then $v$ solves the equation
		\begin{equation}\label{eq:GL-bounded-fake}
		\frac{1}{2}\big(-\mathrm{i}\nabla - \mathbf{x}^{\perp}\big)^{2}v = \lambda(1-|v|^{2})v \quad \text{in} \quad  K_{L}.
		\end{equation}
		It follows from \cite[Theorem 3.1]{FouHel-10} and \eqref{ineq:projection-0} that
		\begin{equation}\label{ineq:elliptic-bounded}
		\|v\|_{L^{\infty}(K_{L})} \leq \min\left\{1,C_{\max}(\lambda-1)^{\frac{1}{2}}\right\},
		\end{equation}
		for a universal constant $C_{\max} > 0$. We remark that Fournais and Helffer \cite{FouHel-10} derived a uniform bound for Ginzburg--Landau type solutions on the whole space $\mathbb{R}^{2}$. This result is also true for ``periodic" solutions of the equation \eqref{eq:GL-bounded} in a bounded domain. Indeed, we tile the plane with squares $K_{j}$, $j=1,2,\ldots$, centered at $\mathbf{x}_{j}$ and of the side length $L$. We obtain from \eqref{eq:GL-bounded-fake} that
		$$
		\frac{1}{2}\big(-\mathrm{i}\nabla - \mathbf{x}^{\perp}\big)^{2}v_{j} = \lambda(1-|v_{j}|^{2})v_{j} \quad \text{in} \quad  K_{j}
		$$
		where $v_{j} = e^{\mathrm{i}\phi_{j}}v(\cdot+\mathbf{x}_{j})$ and the phase $\phi_{j}$ are chosen in such a way that
		$$
		\mathbf{x}^{\perp} - (\mathbf{x}+\mathbf{x}_{j})^{\perp}= \nabla \phi_{j} \quad \text{in} \quad  K_{j}.
		$$
		The function $v_{0} := \sum_{j}v_{j}\mathbbm{1}_{K_{j}}$ is a solution of the equation
		$$
		\frac{1}{2}\big(-\mathrm{i}\nabla - \mathbf{x}^{\perp}\big)^{2}v_{0} = \lambda(1-|v_{0}|^{2})v_{0} \quad \text{in} \quad  \mathbb{R}^{2}.
		$$
		Then \cite[Theorem 3.1]{FouHel-10} implies that
		$$
		\|v_{0}\|_{L^{\infty}(\mathbb{R}^{2})} \leq \min\left\{1,C_{\max}(\lambda-1)^{\frac{1}{2}}\right\},
		$$
		and hence \eqref{ineq:elliptic-bounded}. Now, \eqref{ineq:elliptic-bounded} and \eqref{ineq:projection-0} imply that
		\begin{equation}\label{ineq:projection-boundedness}
		\|u\|_{L^{\infty}(K_{L})} = \left(\frac{\lambda}{G}\right)^{\frac{1}{2}}\|v\|_{L^{\infty}(K_{L})} \leq C\left(\frac{\lambda}{G}\right)^{\frac{1}{2}}(\lambda-1)^{\frac{1}{2}} \leq C\varrho^{\frac{1}{2}}(1+o(1)_{L\to\infty}),
		\end{equation}
		for a universal constant $C$ independent of $L$. 
		
		Let $\Pi_L$ be the orthogonal projector on $\mathcal{LLL}_L$. We show in Appendix~\ref{app:boundedness-lll} that it is bounded  on $L^{2} \cap L^{\infty}(K_{L})$, independently of $L$. Hence 
		\begin{equation}\label{ineq:projection-1}
		\|\Pi_{L}u\|_{L^{\infty}(K_{L})} \leq C\|u\|_{L^{\infty}(K_{L})} \leq C\varrho^{\frac{1}{2}}(1+o(1)_{L\to\infty}),
		\end{equation} 
		follows from \eqref{ineq:projection-boundedness} for some constant $C$ independent of $L$. Let 
		$$
		\Pi_{L}^{\perp}u := u - \Pi_{L}u.
		$$
		Recall that the second eigenvalue of $\frac{1}{2}\big(-\mathrm{i}\nabla - \mathbf{x}^{\perp}\big)^{2}$ is at least $3$, by Proposition \ref{pro:LLL-finite}. Consequently,
		$$
		\frac{1}{2}\int_{K_{L}} \big|\big(-\mathrm{i}\nabla - \mathbf{x}^{\perp}\big)\Pi_{L}^{\perp}u\big|^{2} \geq 3\int_{K_{L}} \big|\Pi_{L}^{\perp}u\big|^{2}.
		$$
		Therefore,
		\begin{align*}
		E_{\rm per}^{\rm GP}\left(K_{L},\varrho L^{2}\right) = \mathcal{E}_{K_{L}}^{\rm GP}[u] & = \frac{1}{2}\int_{K_{L}} \big|\big(-\mathrm{i}\nabla - \mathbf{x}^{\perp}\big)\Pi_{L}u\big|^{2} + \big|\big(-\mathrm{i}\nabla - \mathbf{x}^{\perp}\big)\Pi_{L}^{\perp}u\big|^{2} + G|u|^{4} \\
		& \geq \int_{K_{L}} \big|\Pi_{L}u\big|^{2} + 3\big|\Pi_{L}^{\perp}u\big|^{2} + \frac{G}{2}|u|^{4} \\
		& \geq \varrho L^{2} + 2\int_{K_{L}} \big|\Pi_{L}^{\perp}u\big|^{2}.
		\end{align*}
		Then the upper bound on $E_{\rm per}^{\rm GP}\left(K_{L},\varrho L^{2}\right)$ in \eqref{scaling-law} implies that
		\begin{equation}\label{ineq:projection-2}
		\|\Pi_{L}^{\perp}u\|_{L^{2}(K_{L})} \leq CG^{\frac{1}{2}}\varrho L(1 + o(1)_{L\to\infty}).
		\end{equation}
		Next, we expand the quartic term of the energy as in \cite{AftBla-08}, and find
		\begin{align}
		\mathcal{E}_{K_{L}}^{\rm GP}[u] &= \mathcal{E}_{K_{L}}^{\rm GP}\big[\Pi_{L}u\big] + \mathcal{E}_{K_{L}}^{\rm GP}\big[\Pi_{L}^{\perp}u\big] - \frac{G}{2}\int_{K_{L}}\big|\Pi_{L}^{\perp}u\big|^{4} \nonumber \\ 
		& \quad + G\int_{K_{L}} \big|\Pi_{L}u\big|^{2}\big|\Pi_{L}^{\perp}u\big|^{2} + 2\left(\Re\left(\Pi_{L}u \overline{\Pi_{L}^{\perp}u}\right) + \frac{1}{2}\big|\Pi_{L}^{\perp}u\big|^{2}\right)^{2} + 2\big|\Pi_{L}u\big|^{2} \Re\left(\Pi_{L}u \overline{\Pi_{L}^{\perp}u}\right) \nonumber \\
		& \geq \mathcal{E}_{K_{L}}^{\rm GP}\big[\Pi_{L}u\big] + \mathcal{E}_{K_{L}}^{\rm GP}\big[\Pi_{L}^{\perp}u\big] - \frac{G}{2}\int_{K_{L}}\big|\Pi_{L}^{\perp}u\big|^{4} - 2G\int_{K_{L}}\big|\Pi_{L}u\big|^{3}\big|\Pi_{L}^{\perp}u\big|. \label{scaling-law:lower-bound}
		\end{align}
		On the one hand, it follows from H\"older' inequality and \eqref{ineq:projection-1}, \eqref{ineq:projection-2} that
		\begin{equation}\label{ineq:projection-3}
		G\int_{K_{L}}\big|\Pi_{L}u\big|^{3}\big|\Pi_{L}^{\perp}u\big| \leq G\|\Pi_{L}u\|_{L^{\infty}(K_{L})}^{2}\|\Pi_{L}u\|_{L^{2}(K_{L})}\|\Pi_{L}^{\perp}u\|_{L^{2}(K_{L})} \leq CG^{\frac{3}{2}}\varrho^{\frac{5}{2}}L^{2}(1+o(1)_{L\to\infty}).
		\end{equation}
		On the other hand, let $v = \Pi_{L}u\|\Pi_{L}u\|_{L^{2}(K_{L})}^{-1}L$. Since $v \in \mathcal{LLL}_{L}$ and $\|v\|_{L^{2}(K_{L})}^{2} = L^{2}$ we have
		\begin{equation}\label{ineq:projection-4}
		G\int_{K_{L}}\big|\Pi_{L}u\big|^{4} = G\frac{\|\Pi_{L}u\|_{L^{2}}^{4}}{L^{4}}\int_{K_{L}}|v|^{4} \geq e^{\rm Ab}(1)G\left(\varrho - CG\varrho^{2}\right)^{2}L^{2}(1+o(1)_{L\to\infty}),
		\end{equation}
		where we have used \eqref{ineq:projection-2}. Inserting \eqref{ineq:projection-3}, \eqref{ineq:projection-4} into \eqref{scaling-law:lower-bound} and using again Proposition \ref{pro:LLL-finite}, we thus obtain 
		$$
		\mathcal{E}_{K_{L}}^{\rm GP}[u] \geq \varrho L^{2} + \frac{e^{\rm Ab}(1)}{2}\left(G\varrho^{2}-CG^{\frac{3}{2}}\varrho^{\frac{5}{2}}-CG^{2}\varrho^{3}\right)L^{2}\left(1 + o(1)_{L\to\infty}\right).
		$$
		This is the desired lower bound in \eqref{scaling-law}.
	\end{proof}

	\section{Local density approximation}\label{sec:proof-main-result}
	
	In this section, we prove the energy convergence of $E_{\Omega}^{\rm GP}$ to $E_{\Omega}^{\rm TF}$ presented in Theorem~\ref{thm:main}. The asymptotic behavior of $E_{\Omega}^{\rm LLL}$ then follows from that of $E_{\Omega}^{\rm GP}$ and the comparison between the GP and LLL energies in Appendix~\ref{app:gp-lll}. We choose $G = G_{\Omega} = \left(1-\Omega^{2}\right)^{-\delta}$ with $-1 < \delta < 1$. In this case, we have, by \eqref{energy:TF-scaling} and \eqref{supp:TF-min},
	\begin{equation}\label{TF:energy-supp}
	E_{\Omega}^{\rm TF} \propto \left(1-\Omega^{2}\right)^{\frac{1-\delta}{2}} \quad \text{and} \quad L_{\Omega}^{\rm TF} = \left(1-\Omega^{2}\right)^{-\frac{1+\delta}{4}}.
	\end{equation}
	
	\subsection{Energy upper bound} Here we prove the upper bound corresponding to \eqref{convergence:energy-GP-TF}, i.e.,
	\begin{equation}\label{upper:local-density-approximation}
	E_{\Omega}^{\rm GP} - 1 \leq (1+o(1))E_{\Omega}^{\rm{TF}}.
	\end{equation}
	Let $\rho_{\Omega}^{\rm{TF}}$ be a minimizer for $E_{\Omega}^{\rm{TF}}$. We start by covering the support of $\rho_{\Omega}^{\rm{TF}}$ with squares $K_{j}$,  $j=1, \ldots, N_{L}$, centered at points $\mathbf{x}_{j}$ and of side length $L$ with
	\begin{equation}\label{length:square}
	L = \left(1-\Omega^{2}\right)^{-\eta} \quad \text{where} \quad 0 < \eta < \frac{1+\delta}{4}
	\end{equation}
	ensuring that 
	$$ 1 \ll L \ll \LTF.$$
	We choose the tiling in such a way that $K_{j} \cap \operatorname{supp}\left(\rho_{\Omega}^{\rm{TF}}\right) \neq \varnothing$, for any $j=1, \ldots, N_{L}$. The upper bound on $L$ indicates that the length scale of the tiling is much smaller than the size of the Thomas--Fermi support. Our trial state is defined much as in the proof of Lemmas \ref{lem:thermodynamic-limit-boundedness} and \ref{lem:thermodynamic-limit-square}:
	\begin{equation}\label{upper:trial-state}
	u^{\rm{test}} := \sum_{j=1}^{N_{L}} e^{\mathrm{i}\phi_{j}}u_{j}(\cdot - \mathbf{x}_{j}).
	\end{equation}
	Here $u_{j}$ realizes the Dirichlet infimum
	$$
	E_{0}^{\rm GP}\left(K_{L},\Omega,\varrho_{j}L^{2}\right) := \inf\left\{\mathcal{E}_{K_{L},\Omega}^{\rm GP}[u] : u\in H_{0}^{1}(K_{L}), \int_{K_{L}}|u|^{2} = \varrho_{j}L^{2}\right\}
	$$
	where
	$$
	\mathcal{E}_{K_{L},\Omega}^{\rm GP}[u] = \frac{1}{2}\int_{K_{L}} \big|\big(-\mathrm{i}\nabla - \mathbf{x}^{\perp}\big)u\big|^{2} + G_{\Omega}|u|^{4}
	$$
	and we set
	$$
	\varrho_{j} = \frac{1}{L^{2}}\int_{K_{L}}|u_{j}|^{2} := \frac{1}{L^{2}}\int_{K_{j}} \rho_{\Omega}^{\rm{TF}}.
	$$
	The phase factors in \eqref{upper:trial-state} are chosen in such a way that
	$$
	\mathbf{x}^{\perp} - (\mathbf{x}-\mathbf{x}_{j})^{\perp}= \nabla \phi_{j} \quad \text{in} \quad  K_{j}.
	$$ 
	This construction yields an admissible trial state since $u^{\rm{test}}$ is locally in $H^{1}(\mathbb{R}^{2})$, continuous across squares by being zero on the boundaries, and clearly
	$$
	\int_{\mathbb{R}^{2}}|u^{\rm{test}}|^{2} = \sum_{j=1}^{N_{L}} \int_{K_{L}}|u_{j}|^{2}=\sum_{j=1}^{N_{L}} \int_{K_{j}} \rho_{\Omega}^{\rm{TF}}=1.
	$$
	Much as in the proofs of Lemmas \ref{lem:thermodynamic-limit-boundedness} and \ref{lem:thermodynamic-limit-square} we thus obtain
	\begin{align}
	E_{\Omega}^{\rm GP} \leq \mathcal{E}_{\Omega}^{\rm GP}[u^{\rm{test}}] &= \sum_{j=1}^{N_{L}} \mathcal{E}_{K_{L},\Omega}^{\rm GP}[u_{j}] + \frac{1-\Omega^{2}}{2\Omega^{2}}\int_{\mathbb{R}^{2}} |\mathbf{x}|^{2}|u^{\rm{test}}|^{2} \nonumber \\
	& = \sum_{j=1}^{N_{L}} E_{0}^{\rm GP}\left(K_{L},\Omega,\varrho_{j}L^{2}\right) + \frac{1-\Omega^{2}}{2\Omega^{2}}\int_{\mathbb{R}^{2}} |\mathbf{x}|^{2}|u^{\rm{test}}|^{2}.\label{upper:estimate}
	\end{align}
	By \eqref{upper:estimate-DirNeu} we have
	$$
	E_{0}^{\rm GP}\left(K_{L},\Omega,\varrho_{j}L^{2}\right) \leq E^{\rm GP}\left(K_{L},\Omega,\varrho_{j}L^{2}\right) + C\left(LG_{\Omega}^{-1} + \varrho_{j}L^{\frac{3}{2}}\right).
	$$
	Therefore,
	\begin{equation}\label{upper:estimate-NeuDir-1}
	\sum_{j=1}^{N_{L}}E_{0}^{\rm GP}\left(K_{L},\Omega,\varrho_{j}L^{2}\right) \leq \sum_{j=1}^{N_{L}}E^{\rm GP}\left(K_{L},\Omega,\varrho_{j}L^{2}\right) + CN_{L}LG_{\Omega}^{-1} + CL^{-\frac{1}{2}}.
	\end{equation}
	By \eqref{TF:energy-supp} and \eqref{length:square}, the error term $N_{L}LG_{\Omega}^{-1}$ which is proportional to $\left(L_{\Omega}^{\rm TF}\right)^{2}L^{-1}G_{\Omega}^{-1}$ is of order $o\left(E_{\Omega}^{\rm TF}\right)$ when
	$$
	\eta > 1 - \delta.
	$$
	Together with the upper bound on $\eta$ in \eqref{length:square}, one needs
	\begin{equation}\label{condition:g}
	\boxed{\delta > \frac{3}{5}.}
	\end{equation}
	Furthermore, the error term $L^{-\frac{1}{2}}$ is also of order $o\left(E_{\Omega}^{\rm TF}\right)$ under the same condition \eqref{condition:g}. On the other hand, note that
	$$
	\varrho_{j} \leq L^{-2} = \left(1-\Omega^{2}\right)^{2\eta} \to 0
	$$	
	as $\Omega \nearrow 1$, uniformly with respect to $j=1,2,\ldots,N_L$. We thus deduce from \eqref{DirNeuPer} and \eqref{scaling-law} that
	$$
	E^{\rm GP}\left(K_{L},\Omega,\varrho_{j}L^{2}\right) \leq \varrho_{j}L^{2} + (1+o(1))\frac{e^{\rm Ab}(1)}{2}G_{\Omega}\varrho_{j}^{2} L^{2}.
	$$
	Therefore,
	\begin{equation}\label{upper:estimate-1}
	\sum_{j=1}^{N_{L}} E^{\rm GP}\left(K_{L},\Omega,\varrho_{j}L^{2}\right) \leq \sum_{j=1}^{N_{L}} \varrho_{j}L^{2} + (1+o(1)) \frac{e^{\rm Ab}(1)}{2}G_{\Omega}\sum_{j=1}^{N_{L}} \varrho_{j}^{2} L^{2}.
	\end{equation}
	By H\"older' inequality, we have
	\begin{equation}\label{upper:estimate-2}
	\sum_{j=1}^{N_{L}}\varrho_{j}^{2} L^{2} = \sum_{j=1}^{N_{L}}L^{-2}\left(\int_{K_{j}} \rho_{\Omega}^{\rm{TF}}\right)^{2} \leq \sum_{j=1}^{N_{L}}\int_{K_{j}}\left(\rho_{\Omega}^{\rm{TF}}\right)^{2} = \int_{\mathbb{R}^{2}}\left(\rho_{\Omega}^{\rm{TF}}\right)^{2}.
	\end{equation}
	Finally, we estimate the quadratic term in \eqref{upper:estimate}. We note that
	$$
	|\mathbf{x}| \leq CL_{\Omega}^{\rm TF}
	$$
	for any $\mathbf{x} \in \rm{supp}\left(\rho_{\Omega}^{\rm{TF}}\right)$, by \eqref{supp:TF-min}. Then
	\begin{align*}
	\int_{\mathbb{R}^{2}} |\mathbf{x}|^{2}|u^{\rm{test}}|^{2} & = \sum_{j=1}^{N_{L}} \int_{K_{j}} |\mathbf{x}|^{2}|u_{j}(\mathbf{x}-\mathbf{x}_{j})|^{2} \\
	& \leq \sum_{j=1}^{N_{L}} \int_{K_{j}} |\mathbf{x}_{j}|^{2}|u_{j}(\mathbf{x}-\mathbf{x}_{j})|^{2} + L\sum_{j=1}^{N_{L}} \int_{K_{j}} (|\mathbf{x}|+|\mathbf{x}_{j}|)|u_{j}(\mathbf{x}-\mathbf{x}_{j})|^{2} \\ 
	& \leq \sum_{j=1}^{N_{L}} \int_{K_{j}} |\mathbf{x}_{j}|^{2}\rho_{\Omega}^{\rm TF} + CLL_{\Omega}^{\rm TF}\sum_{j=1}^{N_{L}} \int_{K_{j}} |u_{j}(\mathbf{x}-\mathbf{x}_{j})|^{2} \\ 
	& \leq \sum_{j=1}^{N_{L}} \int_{K_{j}} |\mathbf{x}|^{2}\rho_{\Omega}^{\rm TF} + L\sum_{j=1}^{N_{L}} \int_{K_{j}} (|\mathbf{x}|+|\mathbf{x}_{j}|)\rho_{\Omega}^{\rm TF} + CLL_{\Omega}^{\rm TF}\\
	& \leq \int_{\mathbb{R}^{2}} |\mathbf{x}|^{2}\rho_{\Omega}^{\rm{TF}} + CLL_{\Omega}^{\rm TF}\sum_{j=1}^{N_{L}} \int_{K_{j}} \rho_{\Omega}^{\rm TF} + CLL_{\Omega}^{\rm TF}\\
	& \leq \int_{\mathbb{R}^{2}} |\mathbf{x}|^{2}\rho_{\Omega}^{\rm{TF}} + CLL_{\Omega}^{\rm TF}.
	\end{align*}
	The above implies that
	\begin{equation}\label{upper:estimate-3}
	\frac{1-\Omega^{2}}{\Omega^{2}}\int_{\mathbb{R}^{2}} |\mathbf{x}|^{2}|u^{\rm{test}}|^{2} \leq \frac{1-\Omega^{2}}{\Omega^{2}}\int_{\mathbb{R}^{2}} |\mathbf{x}|^{2}\rho_{\Omega}^{\rm{TF}} + CLG_{\Omega}^{\frac{1}{4}}\left(1-\Omega^{2}\right)^{\frac{3}{4}}.
	\end{equation}
	The error term $LG_{\Omega}^{\frac{1}{4}}\left(1-\Omega^{2}\right)^{\frac{3}{4}}$ is of order $ o\left(E_{\Omega}^{\rm{TF}}\right)$, by \eqref{TF:energy-supp} and \eqref{length:square}. The desired upper bound \eqref{upper:local-density-approximation} follows from \eqref{upper:estimate-1}, \eqref{upper:estimate-2} and \eqref{upper:estimate-3}.
	
	\subsection{Energy lower bound} Let us now complement \eqref{upper:local-density-approximation} by proving the lower bound
	\begin{equation}\label{lower:local-density-approximation}
	E_{\Omega}^{\rm GP} - 1 \geq (1+o(1))E_{\Omega}^{\rm{TF}}
	\end{equation}
	thus completing the proof of \eqref{convergence:energy-GP-TF}.
	
	Let $u^{\rm GP}$ be a minimizer for $E_{\Omega}^{\rm GP}$. The associated Euler--Lagrange equation takes the form
	\begin{equation}\label{eq:EL-GP}
	\frac{1}{2}\big(-\mathrm{i}\nabla - \mathbf{x}^{\perp}\big)^{2}u^{\rm GP} + G_{\Omega}\big|u^{\rm GP}\big|^{2}u^{\rm GP} + \frac{1-\Omega^{2}}{2\Omega^{2}}|\mathbf{x}|^{2}u^{\rm GP} = \lambda u^{\rm GP},
	\end{equation}
	with the Lagrange multiplier $\lambda$ given by
	$$
	\lambda = E_{\Omega}^{\rm GP} + \frac{G_{\Omega}}{2}\int_{\mathbb{R}^{2}}\big|u^{\rm GP}\big|^{4}.
	$$
	Using that $\lambda \geq 1$ we shall obtain uniform decay estimates \emph{\`a la} Agmon \cite{Agmon-82} for $u^{\rm GP}$. We will need the following exponential decay estimate in order to obtain that the GP mass outside the ball of TF radius $L_{\Omega}^{\rm TF}$ decays fast enough.
	
	\begin{lemma}[\textbf{Exponential decay of GP minimizers}]\label{lem:exponential-decay}~\\
		Let $u^{\rm GP}$ be $L^{2}$-normalized and solve~\eqref{eq:EL-GL} for some $\lambda >1$. There is a universal constant $C>0$ such that
		\begin{equation}\label{gp:decay-0}
		\int_{\mathbb{R}^{2}}e^{\sqrt{\lambda-1}|\mathbf{x}|}\big|u^{\rm GP}\big|^{2} \leq C\exp\left( C(\lambda-1)^{\frac{3}{2}}\left(1-\Omega^{2}\right)^{-\frac{1}{2}}\right).
		\end{equation}
	\end{lemma}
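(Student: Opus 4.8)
The natural route is a weighted $L^2$ estimate of Agmon type \cite{Agmon-82}, adapted to the magnetic setting. Writing $\pi = -\mathrm{i}\nabla - \bx^{\perp}$ and $a = \tfrac{1-\Omega^2}{2\Omega^2}$, the plan is to fix a real, bounded, Lipschitz weight $\Phi$ (specialized below), multiply the Euler--Lagrange equation \eqref{eq:EL-GP} by $e^{2\Phi}\overline{u^{\rm GP}}$, integrate over $\R^2$ and take the real part. The only delicate term is the magnetic kinetic one, and here I would use the gauge-covariant Agmon identity
\[
\mathrm{Re}\int_{\R^2} e^{2\Phi}\,\overline{u}\,\pi^2 u = \int_{\R^2}\big|\pi(e^{\Phi}u)\big|^2 - \int_{\R^2}|\nabla\Phi|^2 e^{2\Phi}|u|^2,
\]
whose cross terms cancel because $\Phi$ is real and $\bx^{\perp}$ is a real vector potential. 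Discarding the nonnegative interaction contribution $G\int e^{2\Phi}|u|^4\ge 0$ and using that the lowest eigenvalue of $\tfrac12\pi^2$ equals $1$, so that $\tfrac12\int|\pi w|^2\ge\int|w|^2$ (see \eqref{eq:Landau hamil}), this yields the one-sided inequality
\[
\int_{\R^2}\Big(a|\bx|^2 - (\lambda-1) - \tfrac12|\nabla\Phi|^2\Big)\,e^{2\Phi}\,|u^{\rm GP}|^2 \le 0 .
\]

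Second, I would specialize $\Phi = \tfrac12\sqrt{\lambda-1}\,|\bx|$, so that $e^{2\Phi}$ is exactly the target weight and $|\nabla\Phi|^2 = \tfrac14(\lambda-1)$ away from the origin. The bracket becomes $a|\bx|^2 - \tfrac98(\lambda-1)$, which isolates a classically forbidden region $\{|\bx|\ge R_0\}$ with $a R_0^2 \simeq \lambda-1$, i.e. $R_0\simeq \sqrt{(\lambda-1)/a}$. Splitting the integral at $R_0$, the integrand dominates $\tfrac12 a|\bx|^2 e^{2\Phi}|u|^2$ on $\{|\bx|\ge R_0\}$, while on $\{|\bx|\le R_0\}$ one bounds $e^{2\Phi}\le e^{2\Phi(R_0)}$ and uses $\int|u|^2=1$. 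Both the tail and the near contribution are thus controlled by the value of the weight at the turning point, $\exp\!\big(C\sqrt{\lambda-1}\,R_0\big)$, which is the prefactor announced in \eqref{gp:decay-0}.

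Third comes the technical care. Since $|\bx|$ is not smooth at the origin and $e^{2\Phi}$ is unbounded, I would run the argument with a family $\Phi_{\varepsilon,R}$ that is smooth, bounded and compactly supported (for instance replacing $|\bx|$ by $\sqrt{|\bx|^2+\varepsilon^2}$ truncated beyond radius $R$, keeping $|\nabla\Phi_{\varepsilon,R}|\le\tfrac12\sqrt{\lambda-1}$), derive the above inequality uniformly in $(\varepsilon,R)$, and then let $\varepsilon\to 0$ and $R\to\infty$ by monotone convergence. One must also check that $e^{2\Phi}\overline{u}$ is an admissible test function: this follows from $u^{\rm GP}\in H^1$ together with the a priori (Gaussian) decay forced by the confining potential, which guarantees convergence of all integrals and the absence of boundary terms.

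The hard part is purely the bookkeeping around the weight: making sure the Agmon identity and the spectral lower bound survive the regularization, that dropping the interaction term is legitimate, and — most importantly — that the turning-point radius $R_0$ is tracked accurately enough for the exponent of the prefactor to come out in the form stated in \eqref{gp:decay-0}. Everything else is a routine splitting estimate; no compactness or genuinely nonlinear analysis is needed beyond the input $\lambda>1$.
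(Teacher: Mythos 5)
Your proposal follows the paper's proof essentially line by line: the same weighted (magnetic Agmon) identity --- the paper quotes its non-magnetic version from \cite[Lemma 3.2]{LewNamRou-17-proc} and applies it to $-\mathrm{i}\nabla-\bx^{\perp}$, the cross terms cancelling for exactly the reason you give --- the same weight $e^{\frac{1}{2}\sqrt{\lambda-1}|\bx|}$, the same two reductions (drop the quartic term, use the spectral bound \eqref{ineq:operator}), and the same splitting at the classical turning radius.

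There is, however, one concrete point you should not gloss over: the exponent your argument produces is not the one stated in \eqref{gp:decay-0}. With $\Phi=\frac{1}{2}\sqrt{\lambda-1}\,|\bx|$ and turning radius $R_{0}\simeq C(\lambda-1)^{\frac{1}{2}}(1-\Omega^{2})^{-\frac{1}{2}}$ (forced by $aR_{0}^{2}\simeq \lambda-1$ with $a\propto 1-\Omega^{2}$), the prefactor you obtain is
\[
\exp\left(C\sqrt{\lambda-1}\,R_{0}\right)=\exp\left(C(\lambda-1)\,(1-\Omega^{2})^{-\frac{1}{2}}\right),
\]
i.e.\ the power of $(\lambda-1)$ is $1$, not $\tfrac{3}{2}$. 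Since $\lambda-1\to 0$ in the regime where the lemma is applied (cf.\ \eqref{GL:bound-eigenvalue}), the stated bound \eqref{gp:decay-0} is \emph{strictly stronger} than what you prove, so your closing claim that the exponent ``comes out in the form stated'' is not accurate. To be fair, the paper's own proof has exactly the same feature: it takes $R=C(\lambda-1)^{\frac{1}{2}}(1-\Omega^{2})^{-\frac{1}{2}}$ and concludes with the factor $e^{\sqrt{\lambda-1}\,R}=e^{C(\lambda-1)(1-\Omega^{2})^{-1/2}}$, so the $\tfrac{3}{2}$ in the statement appears to be a slip rather than something either argument achieves; indeed, for this particular weight the power $1$ is what an Agmon-distance heuristic predicts to be sharp. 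The honest exponent does have a downstream cost: in \eqref{gp:decay-1} the two competing terms $-(\lambda-1)^{\frac{1}{2}}L^{\rm TF}$ and $+C(\lambda-1)(1-\Omega^{2})^{-\frac{1}{2}}$ are then of the \emph{same} order $G_{\Omega}^{\frac{1}{2}}$, so the claimed smallness of the exterior mass no longer follows from the powers alone and requires, e.g., covering a ball $B_{KL^{\rm TF}}(0)$ with a sufficiently large fixed constant $K$ so that the negative term dominates.
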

	
	\begin{proof}
		We use that (see e.g., \cite[Lemma 3.2]{LewNamRou-17proc})
		$$
		-\Re\left\langle u^{\rm GP},e^{\alpha |\mathbf{x}|}\Delta u^{\rm GP} \right\rangle = \int_{\mathbb{R}^{2}}\big|\nabla e^{\frac{\alpha}{2}|\mathbf{x}|}u^{\rm GP}\big|^{2} - \frac{\alpha^{2}}{4}\int_{\mathbb{R}^{2}}e^{\alpha |\mathbf{x}|}\big|u^{\rm GP}\big|^{2}.
		$$
		Then, we integrate the Euler--Lagrange equation \eqref{eq:EL-GP} against $e^{\alpha|\mathbf{x}|}\overline{u^{\rm GP}}$ and obtain
		$$
		\frac{1}{2}\int_{\mathbb{R}^{2}}\big|\big(-\mathrm{i}\nabla - \mathbf{x}^{\perp}\big)e^{\frac{\alpha}{2}|\mathbf{x}|}u^{\rm GP}\big|^{2} + 2G_{\Omega}e^{\alpha|\mathbf{x}|}\big|u^{\rm GP}\big|^{4} + \frac{1-\Omega^{2}}{\Omega^{2}}e^{\alpha|\mathbf{x}|}|\mathbf{x}|^{2}\big|u^{\rm GP}\big|^{2} = \left(\lambda + \frac{\alpha^{2}}{4}\right) \int_{\mathbb{R}^{2}}e^{\alpha |\mathbf{x}|}\big|u^{\rm GP}\big|^{2}.
		$$
		Using the operator inequality 
		\begin{equation}\label{ineq:operator}
		\frac{1}{2}(-\mathrm{i}\nabla - \mathbf{x}^\perp)^{2} \geq 1,
		\end{equation}
		choosing $\alpha = \sqrt{\lambda-1}$ and dropping the quartic term, we obtain that
		$$
		\frac{1-\Omega^{2}}{2\Omega^{2}}\int_{\mathbb{R}^{2}} e^{\sqrt{\lambda-1}|\mathbf{x}|}|\mathbf{x}|^{2}\big|u^{\rm GP}\big|^{2} \leq \frac{5}{4}(\lambda-1)\int_{\mathbb{R}^{2}}e^{\sqrt{\lambda-1}|\mathbf{x}|}\big|u^{\rm GP}\big|^{2}.
		$$
		Taking $R = C(\lambda - 1)^{\frac{1}{2}}(1 - \Omega^{2})^{-\frac{1}{2}}$, for some fixed large constant $C > 0$, and noticing that $u^{\rm GP}$ is of unit mass, the above inequality yields
		$$
		\int_{|\mathbf{x}| \geq R} e^{\sqrt{\lambda-1}|\mathbf{x}|}\big|u^{\rm GP}\big|^{2} \leq Ce^{\sqrt{\lambda-1}R},\quad \forall |\mathbf{x}| \geq R.
		$$
		This proves the desired exponential decay estimate.
	\end{proof}
	
	We now return to the energy lower bound \eqref{lower:local-density-approximation}. We again tile the plane with squares $K_{j}$,  $j=1, \ldots, N_{L}$, of side length $L = \left(1-\Omega^{2}\right)^{-\eta}$ satisfying
	\begin{equation}\label{length:square-enhance}
	1 - \delta < \eta < \frac{1+\delta}{4},
	\end{equation}
	and taken to cover the finite disk $B_{CL_{\Omega}^{\rm TF}}(0)$ (the support of $\rho_{\Omega}^{\rm{TF}}$). Let
	\begin{equation}\label{TF:appro-density}
	\varrho_{j} := \frac{1}{L^{2}}\int_{K_{j}}\big|u^{\rm GP}\big|^{2}.
	\end{equation}
	Define the piecewise constant function
	\begin{equation}\label{TF:appro-minimizer-GP}
	\overline{\rho}_{\Omega}^{\rm GP}:=\sum_{j=1}^{N_{L}} \varrho_{j} \mathbbm{1}_{K_{j}}.
	\end{equation}
	We first claim that we have, as $\Omega \nearrow 1$,
	\begin{equation}\label{gp:decay}
	\sum_{j=1}^{N_{L}} \varrho_{j}L^{2} = \int_{\mathbb{R}^{2}}\overline{\rho}_{\Omega}^{\rm GP} = 1 - o\left(E_{\Omega}^{\rm TF}\right). 
	\end{equation}
	Indeed, using the exponential decay of the GP minimizer in \eqref{gp:decay-0} we obtain
	\begin{equation}\label{gp:decay-1}
	\int_{B_{L_{\Omega}^{\rm TF}}^{c}(0)}\big|u^{\rm GP}\big|^{2} \leq C\exp\left(-(\lambda-1)^{\frac{1}{2}}L_{\Omega}^{\rm TF} + C(\lambda-1)^{\frac{3}{2}}\left(1-\Omega^{2}\right)^{-\frac{1}{2}}\right).
	\end{equation}
	From the operator inequality \eqref{ineq:operator} and the upper bound on $E_{\Omega}^{\rm GP}$ in \eqref{upper:local-density-approximation}, we have
	\begin{equation}\label{GL:bound-eigenvalue}
	\lambda - 1 \sim G_{\Omega}^{\frac{1}{2}}\left(1-\Omega^{2}\right)^{\frac{1}{2}} = \left(1-\Omega^{2}\right)^{\frac{1-\delta}{2}},
	\end{equation}
	with $\frac{3}{5} < \delta < 1$. Then one can easily check that, as $\Omega \nearrow 1$, the right hand side of \eqref{gp:decay-1} is extremely small. Hence
	$$
	\int_{B_{L_{\Omega}^{\rm TF}}^{c}(0)}\big|u^{\rm GP}\big|^{2} = o\left(E_{\Omega}^{\rm TF}\right),
	$$
	which yields \eqref{gp:decay}.
	
	Now we can estimate the energy. Dropping some positive terms we get
	\begin{align}
	E_{\Omega}^{\rm GP}=\mathcal{E}_{\Omega}^{\rm GP}\big[u^{\rm GP}\big] & \geq \sum_{j=1}^{N_{L}}\mathcal{E}_{K_{j},\Omega}^{\rm GP}\big[u^{\rm GP}\big] + \frac{1-\Omega^{2}}{2\Omega^{2}}\int_{\mathbb{R}^{2}} |\mathbf{x}|^{2}\big|u^{\rm GP}\big|^{2} \nonumber\\
	& = \sum_{j=1}^{N_{L}}\mathcal{E}_{K_{L},\Omega}^{\rm GP}\big[e^{-\mathrm{i}\phi_{j}(\cdot+\mathbf{x}_{j})}u^{\rm GP}(\cdot+\mathbf{x}_{j})\big] + \frac{1-\Omega^{2}}{2\Omega^{2}}\int_{\mathbb{R}^{2}} |\mathbf{x}|^{2}\big|u^{\rm GP}\big|^{2} \nonumber \\
	& \geq \sum_{j=1}^{N_{L}}E^{\rm GP}\left(K_{L},\Omega,\varrho_{j}L^{2}\right) + \frac{1-\Omega^{2}}{2\Omega^{2}}\int_{\mathbb{R}^{2}} |\mathbf{x}|^{2}\big|u^{\rm GP}\big|^{2}.\label{energy:lower-bound-1}
	\end{align}
	Here the local gauge phase factors are defined as in previous arguments by demanding that
	$$
	\mathbf{x}^{\perp} - (\mathbf{x}-\mathbf{x}_{j})^{\perp}= \nabla \phi_{j} \quad \text{in} \quad  K_{j}.
	$$
	On one hand, we deduce from \eqref{upper:estimate-NeuDir-1} that
	\begin{equation}\label{energy:lower-bound-2}
	\sum_{j=1}^{N_{L}}E^{\rm GP}\left(K_{L},\Omega,\varrho_{j}L^{2}\right) \geq \sum_{j=1}^{N_{L}}E_{0}^{\rm GP}\left(K_{L},\Omega,\varrho_{j}L^{2}\right) - o\left(E_{\Omega}^{\rm TF}\right),
	\end{equation}
	provided that \eqref{length:square-enhance} is satisfied. Furthermore, note that
	$$
	\varrho_{j} \leq L^{-2} = \left(1-\Omega^{2}\right)^{2\eta} \to 0.
	$$	
	as $\Omega \nearrow 1$, uniformly with respect to $j=1,2,\ldots,N_L$. We thus deduce from \eqref{DirNeuPer} and \eqref{scaling-law} that
	$$E_{0}^{\rm GP}\left(K_{L},\Omega,\varrho_{j}L^{2}\right) \geq \varrho_{j}L^{2} + (1+o(1)) \frac{e^{\rm Ab}(1)}{2}\left(G_{\Omega}\varrho_{j}^{2} - CG_{\Omega}^{\frac{3}{2}}\varrho_{j}^{\frac{5}{2}} - CG_{\Omega}^{2}\varrho_{j}^{3}\right)L^{2}.
	$$
	Note that \eqref{scaling-law} was proved when $G > 0$ is fixed. But the same result is obtained when $G = G_{\Omega} = \left(1-\Omega^{2}\right)^{-\delta}$ with $-1<\delta<1$. The arguments are the same by noting that $G_{\Omega}\varrho_{j} \leq G_{\Omega}L^{-2} \ll 1$. We deduce from the above that
	\begin{equation}\label{energy:lower-bound-3}
	\sum_{j=1}^{N_{L}}E_{0}^{\rm GP}\left(K_{L},\Omega,\varrho_{j}L^{2}\right) \geq \sum_{j=1}^{N_{L}} \varrho_{j}L^{2} + (1+o(1)) \frac{e^{\rm Ab}(1)}{2}\sum_{j=1}^{N_{L}} \left(G_{\Omega}\varrho_{j}^{2} - CG_{\Omega}^{\frac{3}{2}}\varrho_{j}^{\frac{5}{2}} - CG_{\Omega}^{2}\varrho_{j}^{3}\right)L^{2}.
	\end{equation}
	The error terms in \eqref{energy:lower-bound-3} can be estimated as follows
	$$
	\sum_{j=1}^{N_{L}}\left(G_{\Omega}^{\frac{3}{2}}\varrho_{j}^{\frac{5}{2}} + G_{\Omega}^{2}\varrho_{j}^{3}\right)L^{2} \leq \left(G_{\Omega}^{\frac{3}{2}}L^{-3} + G_{\Omega}^{2}L^{-4}\right) \sum_{j=1}^{N_{L}}\varrho_{j}L^{2} \leq  G_{\Omega}^{\frac{3}{2}}L^{-3} + G_{\Omega}^{2}L^{-4}.
	$$
	They are of order $o\left(E_{\Omega}^{\rm TF}\right)$ when
	$$
	\eta > \max\left\{\frac{2\delta+1}{6},\frac{3\delta+1}{8}\right\} = \frac{2\delta+1}{6}.
	$$
	Combining~\eqref{energy:lower-bound-2}, \eqref{energy:lower-bound-3} and using \eqref{gp:decay} we obtain
	\begin{equation}\label{energy:lower-bound-4}
	\sum_{j=1}^{N_{L}}E^{\rm GP}\left(K_{L},\Omega,\varrho_{j}L^{2}\right) \geq 1 + (1+o(1)) \frac{e^{\rm Ab}(1)}{2}G_{\Omega}\sum_{j=1}^{N_{L}} \varrho_{j}^{2}L^{2} + o\left(E_{\Omega}^{\rm{TF}}\right).
	\end{equation}
	On the other hand, let
	$$
	V_{1}(\mathbf{x}):=\sum_{j=1}^{N_{L}} |\mathbf{x}_{j}|^{2} \mathbbm{1}_{K_{j}}(\mathbf{x}) \quad \text{and} \quad V_{2}(\mathbf{x}):=\sum_{j=1}^{N_{L}} |\mathbf{x}_{j}| \mathbbm{1}_{K_{j}}(\mathbf{x}).
	$$
	Then we have, by the triangle inequality,
	\begin{align}
	\int_{\mathbb{R}^{2}} |\mathbf{x}|^{2}\big|u^{\rm GP}\big|^{2} \geq \sum_{j=1}^{N_{L}} \int_{K_{j}} |\mathbf{x}|^{2}\big|u^{\rm GP}\big|^{2} 
	& \geq \sum_{j=1}^{N_{L}} \int_{K_{j}} |\mathbf{x}_{j}|^{2}\big|u^{\rm GP}\big|^{2} - L\sum_{j=1}^{N_{L}} \int_{K_{j}} (|\mathbf{x}|+|\mathbf{x}_{j}|)\big|u^{\rm GP}\big|^{2}  \nonumber \\
	& \geq \int_{\mathbb{R}^{2}} V_{1} \overline{\rho}_{\Omega}^{\rm GP} - L\sum_{j=1}^{N_{L}} \int_{K_{j}} (2|\mathbf{x}|+L) \big|u^{\rm GP}\big|^{2} \nonumber \\
	& = \int_{\mathbb{R}^{2}} V_{1} \overline{\rho}_{\Omega}^{\rm GP} - 2L\int_{\mathbb{R}^{2}} |\mathbf{x}|\big|u^{\rm GP}\big|^{2} - L^{2}. \label{energy:lower-bound-5}
	\end{align}
	In the very same way however we can put back $|\mathbf{x}|^{2}$ in place of $V_{1}(\mathbf{x})$, obtaining
	\begin{align}
	\int_{\mathbb{R}^{2}} V_{1} \overline{\rho}_{\Omega}^{\rm GP} = \sum_{j=1}^{N_{L}} \int_{K_{j}} |\mathbf{x}_{j}|^{2}\varrho_{j}
	& \geq \sum_{j=1}^{N_{L}} \int_{K_{j}} |\mathbf{x}|^{2}\varrho_{j}  - L\sum_{j=1}^{N_{L}} \int_{K_{j}} (|\mathbf{x}|+|\mathbf{x}_{j}|)\varrho_{j}  \nonumber \\
	& \geq \int_{\mathbb{R}^{2}} |\mathbf{x}|^{2}\overline{\rho}_{\Omega}^{\rm GP} - \sum_{j=1}^{N_{L}} L\int_{K_{j}} (2|\mathbf{x}|+L) \varrho_{j} \nonumber \\
	& = \int_{\mathbb{R}^{2}} |\mathbf{x}|^{2}\overline{\rho}_{\Omega}^{\rm GP} - 2L\int_{\mathbb{R}^{2}} |\mathbf{x}|\overline{\rho}_{\Omega}^{\rm GP} - L^{2}. \label{energy:lower-bound-6}
	\end{align}
	Furthermore,
	\begin{align}
	\int_{\mathbb{R}^{2}} |\mathbf{x}| \overline{\rho}_{\Omega}^{\rm GP} = \sum_{j=1}^{N_{L}} \int_{K_{j}} |\mathbf{x}|\varrho_{j}
	& \leq \sum_{j=1}^{N_{L}} \int_{K_{j}} |\mathbf{x}_{j}|\varrho_{j} + L\sum_{j=1}^{N_{L}} \int_{K_{j}}\varrho_{j}  \nonumber \\
	& = \sum_{j=1}^{N_{L}} \int_{K_{j}} |\mathbf{x}_{j}|\big|u^{\rm GP}\big|^{2} + L \nonumber \\
	& \leq \sum_{j=1}^{N_{L}} \int_{K_{j}} |\mathbf{x}|\big|u^{\rm GP}\big|^{2} + L\sum_{j=1}^{N_{L}} \int_{K_{j}}\big|u^{\rm GP}\big|^{2}  + L \nonumber \\
	& = \int_{\mathbb{R}^{2}} |\mathbf{x}|\big|u^{\rm GP}\big|^{2} + 2L. \label{energy:lower-bound-7}
	\end{align}
	Combining \eqref{energy:lower-bound-5}, \eqref{energy:lower-bound-6} and \eqref{energy:lower-bound-7} yields
	\begin{align}
	\frac{1-\Omega^{2}}{\Omega^{2}}\int_{\mathbb{R}^{2}} |\mathbf{x}|^{2}\big|u^{\rm GP}\big|^{2} & \geq \frac{1-\Omega^{2}}{\Omega^{2}}\left(\int_{\mathbb{R}^{2}} |\mathbf{x}|^{2}\overline{\rho}_{\Omega}^{\rm GP} - 4L\int_{\mathbb{R}^{2}} |\mathbf{x}|\big|u^{\rm GP}\big|^{2} - 6L^{2}\right) \nonumber\\
	& = \frac{1-\Omega^{2}}{\Omega^{2}}\int_{\mathbb{R}^{2}} |\mathbf{x}|^{2}\overline{\rho}_{\Omega}^{\rm GP} + o\left(E_{\Omega}^{\rm TF}\right). \label{energy:lower-bound-8}
	\end{align}
	The last assertion follows from \eqref{TF:energy-supp} and \eqref{length:square}. Indeed, by \eqref{TF:energy-supp}, Holder's inequality and the upper bound on $E_{\Omega}^{\rm GP}$ in \eqref{upper:local-density-approximation}, we have
	$$
	\int_{\mathbb{R}^{2}} |\mathbf{x}|\big|u^{\rm GP}\big|^{2} \leq \left(\int_{\mathbb{R}^{2}} |\mathbf{x}|^{2}\big|u^{\rm GP}\big|^{2}\right)^{\frac{1}{2}}\left(\int_{\mathbb{R}^{2}} \big|u^{\rm GP}\big|^{2}\right)^{\frac{1}{2}} \leq C\left(1-\Omega^{2}\right)^{-\frac{1}{4}}G_{\Omega}^{\frac{1}{4}}.
	$$
	
	Putting together \eqref{energy:lower-bound-4}, \eqref{energy:lower-bound-8} and using \eqref{gp:decay}, one obtains
	\begin{align}
	E_{\Omega}^{\rm GP} & \geq 1 + (1+o(1)) \frac{e^{\rm Ab}(1)}{2}G_{\Omega}\int_{\mathbb{R}^{2}}\left(\overline{\rho}_{\Omega}^{\rm GP}\right)^{2} + \frac{1-\Omega^{2}}{2\Omega^{2}}\int_{\mathbb{R}^{2}} |\mathbf{x}|^{2}\overline{\rho}_{\Omega}^{\rm GP} + o\left(E_{\Omega}^{\rm{TF}}\right) \nonumber \\
	& \geq 1 + (1+o(1)) \mathcal{E}_{\Omega}^{\rm{TF}}\big[\overline{\rho}_{\Omega}^{\rm GP}\big] + o\left(E_{\Omega}^{\rm{TF}}\right) \nonumber \\
	& \geq 1 + (1+o(1)) \min\left\{\int_{\mathbb{R}^{2}}\overline{\rho}_{\Omega}^{\rm GP},\left(\int_{\mathbb{R}^{2}}\overline{\rho}_{\Omega}^{\rm GP}\right)^{2}\right\}\mathcal{E}_{\Omega}^{\rm{TF}}\left[\frac{\overline{\rho}_{\Omega}^{\rm GP}}{\int_{\mathbb{R}^{2}}\overline{\rho}_{\Omega}^{\rm GP}}\right] + o\left(E_{\Omega}^{\rm{TF}}\right) \nonumber \\
	& \geq 1 +  (1+o(1)) E_{\Omega}^{\rm{TF}}. \label{lower:local-density-approximation-1}
	\end{align}
	This completes the proof of \eqref{lower:local-density-approximation}.
	
	\begin{remark}[Limitations of the local density approximation]\label{rem:LDA}~\\
		We now explain why the limitation $G\gg(1-\Omega)^{-\frac{3}{5}}$ is necessary with our scheme of proof. We used repeatedly comparisons between Dirichlet and Neumann energies in squares of side length $L \ll L_{\Omega}^{\rm TF}$. As per the considerations of Remark~\ref{rem:NeuDir}, this brings about an error of order $L G^{-1}$ per square. Summed over all squares the total error cannot be less than 
		$$\frac{L}{G}\left(\frac{L_{\Omega}^{\rm TF}}{L}\right)^{2} \gg \frac{L_{\Omega}^{\rm TF}}{G}.$$
		Recalling that $E_{\Omega}^{\rm TF} \propto G^{\frac{1}{2}} (1-\Omega)^{\frac{1}{2}} $ while $L_{\Omega}^{\rm TF} = G^{\frac{1}{4}} (1-\Omega) ^{-\frac{1}{4}}$ (see~\eqref{supp:TF-min}), the error being smaller than the main term requires $G\gg(1-\Omega)^{-\frac{3}{5}}$.
		
		Thus, extending the validity of the local density approximation of Theorem~\ref{thm:main} to smaller values of $G$ seems to require another idea than Dirichlet--Neumann bracketing, as presented e.g. in Section~\ref{sec:reduc}. \hfill $\diamond$
	\end{remark}

	\subsection{Density convergence}
	
	We turn to the proof of~\eqref{convergence:density} under the assumptions of Theorem~\ref{thm:main}. We use a Feynman-Hellmann-Griffith argument, starting with perturbed functionals. We denote 
	$$ h = 1-\Omega^2, \quad L = \LTF$$
	for short and recall that we set $G= h^{-\delta}$. Let $\varphi$ be as in the statement of the theorem and $\eps > 0$. We set    
	\begin{equation}\label{eq:GPeps}
	\cGGP_\eps[u] = \frac{1}{2}\int_{\mathbb{R}^{2}} \big|\big(-\mathrm{i}\nabla - \mathbf{x}^{\perp}\big)u\big|^{2} + G|u|^{4} + \frac{1-\Omega^{2}}{\Omega^{2}}|\mathbf{x}|^{2}|u|^{2} + \eps (Gh)^{\frac{1}{2}} \varphi\left(\frac{\bx}{L}\right)|u|^{2}
	\end{equation}
	with the corresponding infimum
	\begin{equation}\label{eq:GPeps inf}
	\GGP_\eps = \inf\left\{\cGGP_\eps[u] : u\in H^{1}(\mathbb{R}^{2}), \int_{\mathbb{R}^{2}}|u|^{2} = 1\right\}.
	\end{equation}
	Similarly, let 
	\begin{equation}\label{eq:LLLeps}
	\cGLLL_\eps [u] := 1 + \frac{1}{2}\int_{\mathbb{R}^{2}} G|u|^{4} + \frac{1-\Omega^{2}}{\Omega^{2}}|\mathbf{x}|^{2}|u|^{2} + \eps (Gh)^{\frac{1}{2}} \varphi\left(\frac{\bx}{L}\right)|u|^{2}.
	\end{equation}
	with
	\begin{equation}\label{eq:LLLeps ing}
	\GLLL_\eps = \inf\left\{\cGLLL_\eps [u] : u\in \mathcal{LLL}, \int_{\mathbb{R}^{2}}|u|^{2} = 1\right\}.
	\end{equation}
	Our target functional is likewise
	\begin{equation}\label{eq:TFeps}
	\cGTF_\eps[\rho] := \frac{1}{2}\int_{\mathbb{R}^{2}}e^{\rm Ab}(1)G\rho^{2} + \frac{1-\Omega^{2}}{\Omega^{2}}|\mathbf{x}|^{2}\rho + \eps (Gh)^{\frac{1}{2}} \varphi\left(\frac{\bx}{L}\right)\rho
	\end{equation}
	with
	\begin{equation}\label{eq:TFeps inf}
	\GTF_\eps := \inf\left\{ \cGTF_\eps [\rho], \; \rho \in L^{1}\cap L^{2}(\mathbb{R}^{2} ; \mathbb{R}^{+}), \int_{\R^2} \rho = 1 \right\}.
	\end{equation}
	We denote $\rhoTF_\eps$ the unique minimizer of the above, that one can explicitly compute using the associated variational equation, similarly to~\eqref{eq:TF min}.
	
	We recall that with the above notation
	\begin{equation}\label{eq:scale ener} 
	\ETF = \GTF_{\eps = 0} \sim (Gh)^{\frac{1}{2}}.
	\end{equation}
	Since $\varphi$ is assumed to be Lipschitz, if $\eps$ is small enough (but fixed in the limit $h\to 0$) we can follow the arguments above with minor modifications to obtain
	\begin{equation}\label{eq:GP TF eps}
	\begin{aligned}
	 \GGP_\eps &= 1 + \GTF_\eps + o_h\left(\GTF_\eps\right) \\
	\GLLL_\eps &= 1 + \GTF_\eps + o_h\left(\GTF_\eps\right)
	\end{aligned}
	\end{equation}
    with $o_h(1) \to 0$ when $h\to 0$. 
    
    The argument essentially consists in differentiating (with care)~\eqref{eq:GP TF eps} with respect to $\eps$ and evaluating the result at $\eps = 0$. It is exactly similar in the GP and LLL cases, we thus discuss only the former. Let $\uGP$ satisfy~\eqref{eq:quasimin}, i.e. 
    $$ \cGGP_0 \left[\uGP\right] = \GGP_0 + o_h\left(\GTF_0\right).$$
    Then using the variational principles and~\eqref{eq:GP TF eps}, we have
	\begin{align*}
	  \eps (Gh)^{\frac{1}{2}} \int_{\R^2} \varphi\left(\frac{\bx}{L}\right)\left|\uGP\right|^{2} &= \cGGP_\eps \left[\uGP\right] - \cGGP_0 \left[\uGP\right] \\
	  & = \cGGP_\eps \left[\uGP\right] - \GGP_0 + o_h\left(\GTF_0\right) \\
	  &\geq \GGP_\eps - \GGP_0 + o_h\left(\GTF_0\right) \\
	  & = \GTF_\eps - \GTF_0 + o_h\left(\GTF_0\right)\\
	  &\geq \eps (Gh)^{\frac{1}{2}} \int_{\R^2} \varphi\left(\frac{\bx}{L}\right)\rhoTF_\eps + o_h\left(\GTF_0\right).
	\end{align*}
    Hence, recalling~\eqref{eq:scale ener},
	$$
	\int_{\R^2} \varphi\left(\frac{\bx}{L}\right)|\uGP|^{2} \geq \int_{\R^2} \varphi\left(\frac{\bx}{L}\right)\rhoTF_\eps + \eps^{-1}f(h) \geq \int_{\R^2} \varphi\left(\frac{\bx}{L}\right)\rhoTF_0 + \eps^{-1} f(h ) + g(\eps)
	$$
    where $f(h) \to 0$ when $h \to 0$ and $g(\eps) \to 0$ when $\eps \to 0$. The last estimate in the above follows from explicit calculations within TF theory, see e.g.~\cite[Equations~(2.13) to~(2.15)]{CorLunRou-16}. Changing variables, we find  
    $$
		 \int_{\R^2} \varphi(\bx) \left| \LTF\uGP \left(\LTF \bx\right) \right|^2  {\rm d}\bx \geq \int_{\R^2} \varphi \rho_{1}^{\rm TF} + \eps^{-1} f(h) + g(\eps)
		 $$
	where $\rho_{1}^{\rm TF}$ is defined as the minimizer of~\eqref{energy:TF-scaling}. We can then choose for example $\eps = \eps (h) = \sqrt{f(h)}\to 0$ when $h\to 0$ so that $\eps^{-1} f(h) \to 0$ in the above. This gives
	$$
	\liminf_{h\to 0} \int_{\R^2} \varphi(\bx) \left| \LTF\uGP \left(\LTF \bx\right) \right|^2 {\rm d}\bx \geq \int_{\R^2} \varphi \rho_{1}^{\rm TF}.
	$$
	Repeating the same argument with $\eps$ replaced by $-\eps$ gives  
	$$
	\limsup_{h\to 0} \int_{\R^2} \varphi(\bx) \left| \LTF\uGP \left(\LTF \bx\right) \right|^2 {\rm d}\bx \leq \int_{\R^2} \varphi \rho_{1}^{\rm TF}
	$$
	and thus concludes the proof of~\eqref{convergence:density}.

	\bigskip
	
	\appendix
	
	\section{Projector onto the finite-dimensional lowest Landau level}\label{app:boundedness-lll}
	
	In this appendix we prove the uniform boundedness of the projector $\Pi_{L}$ onto the lowest Landau level of finite-dimensional $\mathcal{LLL}_{L}$ in $L^{2}(K_{L})$. The projector $\Pi_{L}$ is constructed as a linear combination of  orthonormal projections on basis functions of the lowest Landau level. A convenient basis can be defined using Theta functions (see e.g., \cite[Proposition 3.1]{AftSer-07}). The kernel of $\Pi_{L}$ is found to be \cite{Perice-22} 
	$$
	\Pi_{L}(\mathbf{x}, \mathbf{y}) = \frac{2\sqrt{\pi}}{L^3} \sum_{l=0}^{d-1} \sum_{k, p \in \mathbb{Z}} e^{2\mathrm{i}\pi l \frac{p-k}{d}-2\mathrm{i}\pi \frac{kx_{2}-py_{2}}{L}-\frac{1}{2}\left(x_{1}+k \frac{L}{d}\right)^{2}-\frac{1}{2}\left(y_{1}+p \frac{L}{d}\right)^{2}}
	$$
	where $\mathbf{x}=x_{1}+ix_{2}$, $\mathbf{y}=y_{1}+iy_{2}$ and the integer $d$ is the dimension of $\mathcal{LLL}_{L}$ given by the quantization \eqref{space:LLL-dimension}, i.e., $2\pi d = L^{2}$.
	
	We need the following lemma in order to establish elliptic estimates for ``periodic" solutions of Ginzburg--Landau type equation \eqref{eq:GL-bounded-fake} on the bounded domain $K_{L}$.
	
	%
	%
	
	\begin{lemma}[\textbf{$L^p$ bounds for the finite-volume $\mathcal{LLL}$ projector}]\mbox{}\\
		Let $\Pi_{L}$ be the projection on the lowest Landau level $\mathcal{LLL}_{L}$ in $L^{2}(K_{L})$. For every function $u$ in $L^{p}(K_{L})$, there exists a universal constant $C > 0$, independent of $L$, such that
		$$
		\|\Pi_{L}u\|_{L^{p}(K_{L})} \leq C\|u\|_{L^{p}(K_{L})},
		$$
		whenever $2\leq p \leq \infty$.
	\end{lemma}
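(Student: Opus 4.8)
The plan is to reduce everything to a single uniform estimate on the integral kernel and then invoke standard interpolation. Since $\Pi_{L}$ is an orthogonal projection on $L^{2}(K_{L})$ it is self-adjoint with norm $1$, so its kernel satisfies $\Pi_{L}(\mathbf{x},\mathbf{y})=\overline{\Pi_{L}(\mathbf{y},\mathbf{x})}$ and $\|\Pi_{L}\|_{L^{2}\to L^{2}}=1$ for free. Thus it suffices to establish the Schur-type row bound
$$
\sup_{\mathbf{x}\in K_{L}}\int_{K_{L}}|\Pi_{L}(\mathbf{x},\mathbf{y})|\,d\mathbf{y}\le C
$$
with $C$ independent of $L$; by self-adjointness the column integral is the same, and interpolating the resulting $L^{\infty}\to L^{\infty}$ bound with the trivial $L^{2}$ bound (Riesz--Thorin) yields $\|\Pi_{L}\|_{L^{p}\to L^{p}}\le C$ for all $2\le p\le\infty$, which is the claim.

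To get the row bound I would first simplify the kernel. Only the factor $e^{2\mathrm{i}\pi l(p-k)/d}$ depends on $l$, so summing the geometric series gives $\sum_{l=0}^{d-1}e^{2\mathrm{i}\pi l(p-k)/d}=d\,\mathbbm{1}[\,p\equiv k\ (\mathrm{mod}\ d)\,]$. Writing $p=k+md$ and using the quantization $L^{2}=2\pi d$ (so the prefactor $\frac{2\sqrt\pi}{L^{3}}d$ collapses to $\frac{1}{\sqrt\pi\,L}$), the kernel becomes a double sum over $k,m\in\mathbb{Z}$ of a product of two Gaussians in $k$ (centered at $-x_{1}$ and $-y_{1}-mL$ in the variable $kL/d$) times the oscillating factor $e^{2\mathrm{i}\pi k(y_{2}-x_{2})/L}$ and an $m$-dependent phase. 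For each fixed $m$ I would complete the square in the two Gaussians, extracting a factor $e^{-\frac14(x_{1}-y_{1}-mL)^{2}}$, and then apply the Poisson summation formula to the remaining oscillatory Gaussian sum over $k$ (of frequency $(y_{2}-x_{2})/L$). Poisson turns this slowly varying sum into a sum over a dual index $n\in\mathbb{Z}$ producing the conjugate factor $e^{-\frac14(x_{2}-y_{2}+nL)^{2}}$, up to unit-modulus phases. The net result I expect is the manifestly localized bound
$$
|\Pi_{L}(\mathbf{x},\mathbf{y})|\le\frac{1}{2\pi}\sum_{m,n\in\mathbb{Z}}\exp\!\left(-\tfrac14\big[(x_{1}-y_{1}-mL)^{2}+(x_{2}-y_{2}+nL)^{2}\big]\right),
$$
i.e.\ a lattice sum of Gaussians in $\mathbf{x}-\mathbf{y}$ modulo the period lattice $L\mathbb{Z}^{2}$, exactly as for the full-plane Landau projector.

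The main obstacle is precisely this resummation step: one must retain the oscillating phases throughout. Bounding the kernel crudely by the sum of moduli \emph{before} resumming destroys the cancellations, and since the Gaussians in $k$ are then spaced by $L/d=2\pi/L\to 0$ the naive Schur integral behaves like $\sum_{k}e^{-\frac12(x_{1}+kL/d)^{2}}\sim L$ and diverges; so the whole point is that the phases convert "many small contributions" into "a few localized ones". Once the localized bound above is in hand, the conclusion is immediate: the translates $K_{L}+L\mathbb{Z}^{2}$ tile $\mathbb{R}^{2}$, so
$$
\int_{K_{L}}|\Pi_{L}(\mathbf{x},\mathbf{y})|\,d\mathbf{y}\le\frac{1}{2\pi}\int_{\mathbb{R}^{2}}e^{-\frac14|\mathbf{z}|^{2}}\,d\mathbf{z}=\frac{1}{2\pi}\cdot 4\pi=2,
$$
uniformly in $\mathbf{x}\in K_{L}$ and in $L$. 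Feeding this into the Schur test and interpolation completes the proof, with a universal constant $C=2$ (in fact valid for every $p\in[1,\infty]$).
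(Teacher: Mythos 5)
Your proposal is correct and follows essentially the same route as the paper's proof: reduce to a uniform Schur bound on the kernel, simplify via the $l$-sum and the mod-$d$ substitution, apply Poisson summation in $k$ \emph{before} taking moduli (the cancellation point you rightly single out as the crux), and interpolate the resulting $L^\infty$ bound with the trivial $L^2$ bound. The only difference is in the last estimate, where you evaluate the Schur integral by periodizing over the lattice $L\mathbb{Z}^2$ and integrating the Gaussian over all of $\mathbb{R}^2$, while the paper splits the one-dimensional Gaussian sums into near and far terms; both yield a universal constant.
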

	
	\begin{proof}

		Obviously, $\Pi_{L}$ is a bounded operator on $L^{2}(K_{L})$. Here we prove that it is bounded on $L^{\infty}(K_{L})$. This yields the continuation of $\Pi_{L}$ on $L^{p}(K_{L})$ for all $p \in [2, \infty]$, by interpolation. For any $u\in L^{\infty}(K_{L})$ and $\mathbf{x} \in K_{L}$, we have
		$$
		|(\Pi_{L}u)(\mathbf{x})| = \left|\int_{K_{L}}\Pi_{L}(\mathbf{x},\mathbf{y})u(\mathbf{y}){\rm d}\mathbf{y}\right| \leq \|u\|_{L^{\infty}(K_{L})}\int_{K_{L}}|\Pi_{L}(\mathbf{x},\mathbf{y})|{\rm d}\mathbf{y}.
		$$
		There remains to prove that the last term in the above is bounded uniformly in $L$. We use
		$$
		\sum_{l=0}^{d-1} e^{2\mathrm{i}\pi l \frac{p-k}{d}}=d \mathbbm{1}_{p=k(\text{mod } d)}
		$$ 
		to compute
		\begin{align*}
		\Pi_{L}(\mathbf{x}, \mathbf{y}) & = \frac{2\sqrt{\pi}}{L^3} \sum_{k, p \in \mathbb{Z}} e^{-2\mathrm{i}\pi \frac{kx_{2}-py_{2}}{L}-\frac{1}{2}\left(x_{1}+k \frac{L}{d}\right)^{2}-\frac{1}{2}\left(y_{1}+p \frac{L}{d}\right)^{2}} \sum_{l=0}^{d-1}e^{2\mathrm{i}\pi l \frac{p-k}{d}} \\
		&= \frac{2\sqrt{\pi}d}{L^3}\sum_{k, q \in \mathbb{Z}} e^{-\frac{2\mathrm{i}\pi k}{L}(x_{2}-y_{2}) + 2\mathrm{i}\pi\frac{qdy_{2}}{L} - \frac{1}{2}\left(x_{1}+k \frac{L}{d}\right)^{2}-\frac{1}{2}\left(y_{1}+k \frac{L}{d}+qL\right)^{2}} \\
		&= \frac{1}{\sqrt{\pi}L}\sum_{k,q\in \mathbb{Z}}e^{-\frac{2\mathrm{i}\pi k}{L}(x_{2}-y_{2}) + \mathrm{i}qy_{2}-\left(\frac{2\pi k}{L}+\frac{qL+x_{1}+y_{1}}{2}\right)^{2}-\frac{1}{4}\left(qL-x_{1}+y_{1}\right)^{2}}
		\end{align*}
		Fixing $q$ and applying the Poisson summation formula in $k$, we obtain
		$$
		\Pi_{L}(\mathbf{x}, \mathbf{y}) = \frac{1}{\sqrt{\pi}} \sum_{k, q \in \mathbb{Z}} e^{\frac{\mathrm{i}}{2}(qL+x_{1}+y_{1})(kL+x_{2}-y_{2})+\mathrm{i}qy_{2}-\frac{1}{4}(kL+x_{2}-y_{2})^{2}-\frac{1}{4}\left(qL-x_{1}+y_{1}\right)^{2}}.
		$$
		By neglecting the imaginary part, we estimate
		\begin{equation}\label{appendix:bounded-1}
		|\Pi_{L}(\mathbf{x}, \mathbf{y})| \leq \frac{1}{\sqrt{\pi}} \sum_{k\in \mathbb{Z}} e^{-\frac{1}{4}(kL+x_{2}-y_{2})^{2}} \sum_{q \in \mathbb{Z}} e^{-\frac{1}{4}\left(qL-x_{1}+y_{1}\right)^{2}}.
		\end{equation}
		Note that $\mathbf{x}$ and $\mathbf{y}$ lie in the square $K_{L}$ of side length $L$ and we have
		$$
		(x_{1}-y_{1})^{2} + (x_{2}-y_{2})^{2} = |\mathbf{x} - \mathbf{y}|^{2} \leq 2L^{2}.
		$$
		Then
		\begin{align*}
		\sum_{k\in \mathbb{Z}} e^{-\frac{1}{4}(kL+x_{2}-y_{2})^{2}} & = \left(\sum_{k\in\{-1,0,1\}}+\sum_{k \geq 2}+\sum_{k \leq -2}\right) e^{-\frac{1}{4}(kL+x_{2}-y_{2})^{2}} \\
		& \leq \sum_{k\in\{-1,0,1\}}e^{-\frac{1}{4}(kL+x_{2}-y_{2})^{2}} + \sum_{k \geq 2} e^{-\frac{1}{4}(kL+x_{2}-y_{2})^{2}} + e^{-\frac{1}{4}(kL-x_{2}+y_{2})^{2}} \\
		& \leq \sum_{k\in\{-1,0,1\}}e^{-\frac{1}{4}(kL+x_{2}-y_{2})^{2}} + 2\sum_{k \geq 2} e^{-\frac{1}{4}(k-\sqrt{2})^{2}L^{2}} \\
		& \leq \sum_{k\in\{-1,0,1\}}e^{-\frac{1}{4}(kL+x_{2}-y_{2})^{2}} + Ce^{-\frac{1}{4}(2-\sqrt{2})^{2}L^{2}}.
		\end{align*}
		This implies that
		\begin{equation}\label{appendix:bounded-2}
		\int_{\left[-\frac{L}{2},\frac{L}{2}\right]} \sum_{k\in \mathbb{Z}}e^{-\frac{1}{4}(kL+x_{2}-y_{2})^{2}}{\rm d}y_{2} \leq \sum_{k\in\{-1,0,1\}} \int_{\mathbb{R}}e^{-\frac{1}{4}(kL+x_{2}-y_{2})^{2}}{\rm d}y_{2} + CLe^{-\frac{1}{4}(2-\sqrt{2})^{2}L^{2}} \leq C,
		\end{equation}
		for a universal constant $C > 0$ independent of $L$. The last inequality is obtained by a simple change of variables in the integrals. 
		
		Similarly, we have
		\begin{equation}\label{appendix:bounded-3}
		\int_{\left[-\frac{L}{2},\frac{L}{2}\right]} \sum_{q \in \mathbb{Z}} e^{-\frac{1}{4}\left(qL-x_{1}+y_{1}\right)^{2}}{\rm d}y_{1} \leq \sum_{q\in\{-1,0,1\}} \int_{\mathbb{R}}e^{-\frac{1}{4}(qL-x_{1}+y_{1})^{2}}{\rm d}y_{1} + CLe^{-\frac{1}{4}(2-\sqrt{2})^{2}L^{2}} \leq C.
		\end{equation}
		Putting together \eqref{appendix:bounded-1}, \eqref{appendix:bounded-2} and \eqref{appendix:bounded-3} we obtain
		$$
		\int_{K_{L}}|\Pi_{L}(\mathbf{x},\mathbf{y})|{\rm d}\mathbf{y} \leq C.
		$$
		This yields the desired result.
	\end{proof}
	%
	
	\section{Convergence of the GP energy to the LLL energy}\label{app:gp-lll}
	In this appendix, we study the convergence of the GP energy \eqref{energy:GP} to the LLL energy \eqref{energy:LLL} in the limit $\Omega \nearrow 1$. We will prove the following generalisation of results from \cite{AftBla-08} (which considered only the case of fixed $G$): 
	
	
	
	
	\begin{proposition}[\textbf{Reduction to the lowest Landau level}]\label{pro:LLL-infinite}~\\
		Let $G = G_{\Omega} = \left(1-\Omega^{2}\right)^{-\delta}$ with $-1 < \delta < 1$. We have, in the limit $\Omega \nearrow 1$,
		\begin{equation}\label{energy:behavior-GP-LLL}
		E_{\Omega}^{\rm GP} - E_{\Omega}^{\rm LLL} = o\left(G_{\Omega}^{\frac{1}{2}}\left(1-\Omega^{2}\right)^{\frac{1}{2}}\right).
		\end{equation} 
		Moreover, for a minimizer $\uGP$ of $\cEGP$ we have that 
		\begin{equation}\label{eq:dens GP to LLL 1}
		\lim_{\Omega \nearrow 1}\norm{\left(1-\Pi_0\right) \uGP}_{L^2} = 0 
		\end{equation}
        with $\Pi_0$ the orthogonal projector on $\LLL$ and 
        \begin{equation}\label{eq:dens GP to LLL 2}
		\cELLL \left[\frac{\Pi_0 \uGP}{\norm{\Pi_0\uGP}_{L^2}}\right] = \ELLL + o\left(\ETF \right)
		\end{equation}
		when $\Omega \nearrow 1$.
	\end{proposition}
	
	The assumption on $G$ in Proposition~\ref{pro:LLL-infinite} guarantees that $E_{\Omega}^{\rm TF} \ll 1$. Also, we have
	\begin{equation}\label{energy-lll:asymptotic}
	E_{\Omega}^{\rm LLL} = 1 + o(1) \quad \text{as} \quad \Omega \nearrow 1,
	\end{equation}
	by \eqref{energy:behavior-lll}. Part of the proof of Proposition~\ref{pro:LLL-infinite} is similar to that of \cite[Theorem 1.2]{AftBla-08}. We however need a better control when $G$ is allowed to be large in the limit $\Omega \nearrow 1$. The following elliptic estimate for the GP equation with a trapping term is our main new ingredient.
	
	\begin{lemma}[\textbf{Elliptic estimates for an inhomoegenous GP equation}]\label{lem:elliptic}~\\
		Let $u^{\rm GP}$ be a solution to 
		\begin{equation}\label{eq:EL-GP bis}
		\frac{1}{2}\big(-\mathrm{i}\nabla - \mathbf{x}^{\perp}\big)^{2}u^{\rm GP} + G_{\Omega}\big|u^{\rm GP}\big|^{2}u^{\rm GP} + \frac{1-\Omega^{2}}{2\Omega^{2}}|\mathbf{x}|^{2}u^{\rm GP} = \lambda u^{\rm GP},
		\end{equation}
		on $\mathbb{R}^{2}$, for some $\lambda \geq 0$. We have the following properties.
		\begin{itemize}
			\item[(i)] If $\lambda \leq 1$, then $u = 0$.
			\item[(ii)] There exists a universal constant $C_{\max} > 0$ such that if $\lambda > 1$, then
			\begin{equation}\label{ineq:elliptic}
			\big\|u^{\rm GP}\big\|_{L^{\infty}} \leq \left(\frac{\lambda}{G_{\Omega}}\right)^{\frac{1}{2}} \min\left\{1,C_{\max}(\lambda-1)^{\frac{1}{4}}\right\}.
			\end{equation}
		\end{itemize}
	\end{lemma}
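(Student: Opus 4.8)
The plan is to treat the two items separately, reducing each to the fixed spectral gap encoded in \eqref{ineq:operator} and to the nonnegativity of the trapping term $W(\mathbf{x}):=\frac{1-\Omega^{2}}{2\Omega^{2}}|\mathbf{x}|^{2}\geq 0$. For (i), I would test \eqref{eq:EL-GP bis} against $\overline{u^{\rm GP}}$ and integrate over $\R^{2}$; this is legitimate because the trapping term, together with the Agmon-type decay of Lemma~\ref{lem:exponential-decay}, places $u^{\rm GP}$ in $L^{2}$ with finite energy. One obtains
\[
\frac{1}{2}\int_{\R^{2}}\big|(-\mathrm{i}\nabla-\mathbf{x}^{\perp})u^{\rm GP}\big|^{2}+G_{\Omega}\int_{\R^{2}}\big|u^{\rm GP}\big|^{4}+\int_{\R^{2}}W\big|u^{\rm GP}\big|^{2}=\lambda\int_{\R^{2}}\big|u^{\rm GP}\big|^{2}.
\]
Bounding the kinetic term from below by $\int|u^{\rm GP}|^{2}$ via \eqref{ineq:operator} yields $G_{\Omega}\int|u^{\rm GP}|^{4}+\int W|u^{\rm GP}|^{2}\leq(\lambda-1)\int|u^{\rm GP}|^{2}$, which is $\leq 0$ as soon as $\lambda\leq1$. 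Since $G_{\Omega}>0$ and $W\geq0$, both nonnegative terms must vanish, forcing $u^{\rm GP}=0$.

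For the first bound in (ii), I would run a maximum principle on $f:=|u^{\rm GP}|$. Kato's diamagnetic inequality combined with \eqref{eq:EL-GP bis} gives, on $\{u^{\rm GP}\neq0\}$,
\[
-\Delta f\leq \mathrm{Re}\!\left(\frac{\overline{u^{\rm GP}}}{|u^{\rm GP}|}(-\mathrm{i}\nabla-\mathbf{x}^{\perp})^{2}u^{\rm GP}\right)=2\big(\lambda-G_{\Omega}f^{2}-W\big)f .
\]
The exponential decay of Lemma~\ref{lem:exponential-decay} ensures that $f$ attains its supremum at some point $\mathbf{x}_{0}$; there $-\Delta f(\mathbf{x}_{0})\geq0$, so dividing by $f(\mathbf{x}_{0})>0$ produces $G_{\Omega}f(\mathbf{x}_{0})^{2}\leq\lambda-W(\mathbf{x}_{0})\leq\lambda$, that is $\|u^{\rm GP}\|_{L^{\infty}}\leq(\lambda/G_{\Omega})^{1/2}$, the first term in the minimum of \eqref{ineq:elliptic}.

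The refined bound is the crux. Rescaling $v:=(G_{\Omega}/\lambda)^{1/2}u^{\rm GP}$, so that $\|v\|_{L^{\infty}}\leq1$ by the previous step, $v$ solves
\[
\frac{1}{2}(-\mathrm{i}\nabla-\mathbf{x}^{\perp})^{2}v=\lambda(1-|v|^{2})v-Wv\quad\text{on }\R^{2}.
\]
For $W\equiv0$ this is exactly the Ginzburg--Landau equation governed by the uniform estimate of Fournais--Helffer \cite[Theorem 3.1]{FouHel-10}, which delivers $\|v\|_{L^{\infty}}\leq C_{\max}(\lambda-1)^{1/2}$ (the bound already used in \eqref{ineq:elliptic-bounded}). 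I would thus replay the Fournais--Helffer scheme, the only new ingredient being the extra term $-Wv$.

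The main obstacle is precisely that this term, although of favorable sign, is position dependent and unbounded, so it cannot be recast as a constant shift of $\lambda$; its control through the test-function/elliptic step of \cite{FouHel-10} is what I expect to cost a power and to degrade the exponent from $\tfrac12$ to the weaker $\tfrac14$ claimed in \eqref{ineq:elliptic}. Concretely, I would localize $v$ to the region $\{W\lesssim\lambda-1\}$ using the Agmon bound of Lemma~\ref{lem:exponential-decay}, apply the homogeneous estimate there, and simply discard $W\geq0$ outside, accepting the resulting loss. The weaker exponent $\tfrac14$ is in any case amply sufficient for the reduction to the lowest Landau level carried out in Proposition~\ref{pro:LLL-infinite}.
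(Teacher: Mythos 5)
Your part (i) and the first bound in (ii) are essentially the paper's own argument: for (i), test \eqref{eq:EL-GP bis} against $\overline{u^{\rm GP}}$ and use the spectral bound \eqref{ineq:operator}; for the bound $\|u^{\rm GP}\|_{L^{\infty}}\leq(\lambda/G_{\Omega})^{1/2}$, a maximum principle (the paper runs it on $|v|^{2}$ after the rescaling $v=(G_{\Omega}/\lambda)^{1/2}u^{\rm GP}$, you run it on $|u^{\rm GP}|$ via Kato's inequality — same substance). A minor quibble: Lemma~\ref{lem:exponential-decay} assumes $\lambda>1$, so it cannot be invoked to justify the integration by parts in (i) when $\lambda\leq 1$; one should simply work in the finite-energy $L^{2}$ setting, as the paper implicitly does.

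The refined bound $C_{\max}(\lambda-1)^{1/4}$ is where your proposal has a genuine gap. You propose to localize $v$ to the region $\{W\lesssim\lambda-1\}$, ``apply the homogeneous estimate there,'' and discard $W\geq 0$ outside. But \cite[Theorem 3.1]{FouHel-10} is a statement about \emph{entire} solutions of the homogeneous Ginzburg--Landau equation on all of $\R^{2}$; it is not a local estimate. A truncation $\chi v$ does not solve that equation (cutoffs generate gradient terms), and $v$ restricted to $\{W\lesssim\lambda-1\}$ solves the equation only on a subdomain, with the potential still present — so there is no function to which the Fournais--Helffer theorem can legitimately be applied. Contrast this with the paper's use of \eqref{ineq:elliptic-bounded} in the proof of Theorem~\ref{thm:Abrikosov}: there the magnetic-periodic solution on $K_{L}$ could be extended by gauged translations to a genuine entire solution on $\R^{2}$, which is exactly why the theorem applied; the harmonic potential destroys this structure, and that is the very reason the present lemma is stated with the weaker exponent $\tfrac14$. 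Moreover your sketch contains no mechanism that actually produces $\tfrac14$; you posit that the localization ``costs a power,'' but no estimate quantifying the loss is given, so the exponent is reverse-engineered rather than proved.

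The paper's actual proof of this step is a blow-up/compactness contradiction argument, which is the missing idea. Suppose there are solutions $v_{n}$ with $\Lambda_{n}:=\|v_{n}\|_{L^{\infty}}\gg(\lambda_{n}-1)^{1/4}$; since $\Lambda_{n}\leq 1$ one gets $\lambda_{n}\to 1$. Translate to a point $\mathbf{x}_{n}$ where $|v_{n}(\mathbf{x}_{n})|\geq\Lambda_{n}/2$ and normalize, $f_{n}:=\Lambda_{n}^{-1}v_{n}(\cdot+\mathbf{x}_{n})$; the maximum-principle step guarantees the potential and its derivatives stay bounded on compact sets around $\mathbf{x}_{n}$, so local elliptic regularity plus Rellich--Kondrashov give $f_{n}\to f$ locally uniformly with $|f(0)|\geq\tfrac12$. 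Finally, integrating \eqref{eq:GL-contradiction} against $\chi_{R}^{2}\overline{f_{n}}$ with the cutoff scale $R_{n}=(\lambda_{n}-1)^{-1/2}$, gauging away the translated magnetic potential, using \eqref{ineq:operator} and dropping the nonnegative trapping term, yields $\int\chi_{R_{n}}^{2}|f_{n}|^{4}\leq C(\lambda_{n}-1)\Lambda_{n}^{-4}\to 0$, forcing $f\equiv 0$ — the desired contradiction. Note that the exponent $\tfrac14$ emerges precisely because the quartic term scales as $\Lambda_{n}^{4}$ under this normalization; nothing in your localization scheme reproduces this.
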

	
	We expect that an optimal bound should be 
	$$
	\big\|u^{\rm GP}\big\|_{L^{\infty}} \leq \left(\frac{\lambda}{G_{\Omega}}\right)^{\frac{1}{2}} \min\left\{1,C_{\max}(\lambda-1)^{\frac{1}{2}}\right\}.
	$$
	This would better match similar estimates from \cite{FouHel-10}, that we take inspiration from. Moreover this would prove that the density can nowhere exceed a constant times the maximal Thomas--Fermi density, the natural scale in our problem. The above~\eqref{ineq:elliptic} is however sufficient for our purpose.
	
	\begin{proof}[Proof of Lemma~\ref{lem:elliptic}]
		It can be seen immediately from the operator inequality \eqref{ineq:operator} that Equation~\eqref{eq:EL-GP bis} admits only trivial $L^{\infty}$-solution if $\lambda \leq 1$. 
		
		Let 
		$$v^{\rm GP} = \left(\frac{G_{\Omega}}{\lambda}\right)^{\frac{1}{2}}u^{\rm GP}.$$
		Then $v^{\rm GP}$ solves the equation
		\begin{equation}\label{eq:GL-fake}
		\frac{1}{2}\big(-\mathrm{i}\nabla - \mathbf{x}^{\perp}\big)^{2}v^{\rm GP} + \frac{1-\Omega^{2}}{2\Omega^{2}}|\mathbf{x}|^{2}v^{\rm GP} = \lambda\big(1-\big|v^{\rm GP}\big|^{2}\big) v^{\rm GP}.
		\end{equation}
		In order to prove \eqref{ineq:elliptic}, we will show that
		\begin{equation}\label{ineq:elliptic-fake}
		\big\|v^{\rm GP}\big\|_{L^{\infty}} \leq \min\left\{1,C_{\max}(\lambda-1)^{\frac{1}{4}}\right\}.
		\end{equation}
		We write down the equation satisfied by $\big|v^{\rm GP}\big|^{2}$ as follows
		$$
		-\frac{1}{2}\Delta \big|v^{\rm GP}\big|^{2} + \mathrm{i}\mathbf{x}^{\perp}\left(\overline{v^{\rm GP}}\nabla v^{\rm GP} + v^{\rm GP}\overline{\nabla v^{\rm GP}}\right) + \big|\nabla v^{\rm GP}\big|^{2} + \frac{1}{\Omega^{2}}|\mathbf{x}|^{2}\big|v^{\rm GP}\big|^{2} = 2\lambda\big|v^{\rm GP}\big|^{2}\left(1-\big|v^{\rm GP}\big|^{2}\right).
		$$
		This implies
		$$
		-\frac{1}{2}\Delta \big|v^{\rm GP}\big|^{2} + \frac{1-\Omega^{2}}{\Omega^{2}}|\mathbf{x}|^{2}\big|v^{\rm GP}\big|^{2} + 2\lambda\big(\big|v^{\rm GP}\big|^{2}-1\big)\big|v^{\rm GP}\big|^{2} = -\big|\nabla v^{\rm GP} - \mathrm{i}\mathbf{x}^{\perp}\overline{v^{\rm GP}}\big|^{2} \leq 0.
		$$
		By the maximum principle, we deduce that, at any maximum point $\mathbf{x}^*$ of $\big|v^{\rm GP}\big|^{2}$,
		$$
		\big|v^{\rm GP}(\mathbf{x}^*)\big|^{2} \leq 1
		$$
		and 
		\begin{equation}\label{eq:xmax}
		\frac{1-\Omega^{2}}{\Omega^{2}}|\mathbf{x^*}|^{2} \leq 2 \lambda. 
		\end{equation} 
		There remains to prove the second half of~\eqref{ineq:elliptic-fake}, and we may assume that $\lambda \leq  2$ for that purpose.
		
		Suppose by contradiction that there exists a sequence of solutions $\{v_{n}\}$ to Equation~\eqref{eq:GL-fake} with $\lambda_{n} > 1$ and
		\begin{equation}\label{ineq:elliptic-contradiction}
		\lim_{n\to\infty}\frac{\|v_{n}\|_{L^{\infty}}}{(\lambda_{n}-1)^{\frac{1}{4}}} = \infty.
		\end{equation}
		Define $\Lambda_{n} := \|v_{n}\|_{L^{\infty}}$. Since $\Lambda_{n} \leq 1$, by the maximum principle, we must have $\lambda_{n} \to 1$ as $n\to\infty$. On the other hand, there exists a point $\mathbf{x}_{n} \in \mathbb{R}^{2}$ with $|v_{n}(\mathbf{x}_{n})| \geq \frac{\Lambda_{n}}{2}$. Consider the function $f_{n} := \Lambda_{n}^{-1}v_{n}(\cdot+\mathbf{x}_{n})$. This function satisfies
		\begin{equation}\label{ineq:maximum-principle}
		\frac{1}{2} \leq |f_{n}(0)| \leq \|f_{n}\|_{L^{\infty}} \leq 1
		\end{equation}
		and it solves the equation
		\begin{equation}\label{eq:GL-contradiction}
		\frac{1}{2}(-\mathrm{i}\nabla - (\mathbf{x}+\mathbf{x}_{n})^{\perp})^{2}f_{n} + \frac{1-\Omega^{2}}{2\Omega^{2}}|\mathbf{x}+\mathbf{x}_{n}|^{2}f_{n} = \lambda_{n}(1-\Lambda_{n}^{2}\big|f_{n}\big|^{2}) f_{n}.
		\end{equation}
		In view of~\eqref{eq:xmax} we may assume that the potential 
		$$ 
		\frac{1-\Omega^{2}}{2\Omega^{2}}|\mathbf{x}+\mathbf{x}_{n}|^{2}
		$$
		and its derivatives stay uniformly bounded if $\mathbf{x} \in B(0,L)$ with $L$ fixed. Hence, by the boundedness of $f_{n}$ in \eqref{ineq:maximum-principle} and elliptic regularity, the sequence $\{f_{n}\}$ is bounded in $W_{\rm{loc}}^{2,p}(\mathbb{R}^{2})$, for all $p < \infty$. By compactness we can find a convergent subsequence in $W_{\rm loc}^{s,p}(\mathbb{R}^{2})$, for any given $s < 2$ and $p < \infty$. Thanks to the Rellich--Kondrashov Theorem (see e.g., \cite[Theorem 8.9]{LieLos-02}) and extracting a subsequence, we find
		$$
		f_{n} \to f \in W_{\rm loc}^{s,p}(\mathbb{R}^{2}) \hookrightarrow L_{\rm loc}^{\infty}(\mathbb{R}^{2}),
		$$
		when $sp > 2$, where $f$ satisfies
		\begin{equation}\label{eq:contradict}
		\frac{1}{2} \leq |f(0)| \leq \|f\|_{L^{\infty}} \leq 1. 
		\end{equation}
		We next seek a contradiction with this finding. 
		
		Let $0\leq \chi \leq 1$ be a fixed smooth function on $\mathbb{R}^{2}$ such that $\chi(\mathbf{x}) = 1$ if $|\mathbf{x}| \leq 1$ and $\chi(\mathbf{x}) = 0$ if $|\mathbf{x}| \geq 2$. For $R>0$, we denote $\chi_{R} = \chi\left(\frac{\cdot}{R}\right)$. Note that it is possible to construct the function $\chi$ so as to satisfy
		$$
		|\nabla \chi_{R}| \leq CR^{-1}\chi_{R}^{1-\mu},
		$$
		for some arbitrarily small $\mu > 0$, independent of $R$, e.g., by taking $\chi = h^{\nu}$ for $\nu$ large and some smooth function $0\leq h \leq 1$. Now, we use the identity
		$$
		-\Re\left\langle f_{n},\chi_{R}^{2}\Delta f_{n} \right\rangle = \int_{\mathbb{R}^{2}}\big|\nabla \chi_{R}f_{n}\big|^{2} - \int_{\mathbb{R}^{2}}|\nabla \chi_{R}|^{2}\big|f_{n}\big|^{2}.
		$$
		Then, we integrate the equation \eqref{eq:GL-contradiction} against $\chi_{R}^{2}\overline{f_{n}}$ and drop the quadratic term. We obtain
		\begin{equation}\label{ineq:GL-contradiction}
		\frac{1}{2}\int_{\mathbb{R}^{2}}\big|\big(-\mathrm{i}\nabla - (\mathbf{x}+\mathbf{x}_{n})^{\perp}\big)\chi_{R}f_{n}\big|^{2} + \int_{\mathbb{R}^{2}}\lambda_{n}\Lambda_{n}^{2}\chi_{R}^{2}\big|f_{n}\big|^{4} \leq \lambda_{n}\int_{\mathbb{R}^{2}}\chi_{R}^{2}\big|f_{n}\big|^{2} + \frac{C}{R^{2}}\int_{R \leq |\mathbf{x}| \leq 2R}\chi_{R}^{2-2\mu}\big|f_{n}\big|^{2}.
		\end{equation}
		Note that
		\begin{equation}\label{ineq:GL-contradiction-1}
		\frac{1}{2}\int_{\mathbb{R}^{2}} \big|\big(-\mathrm{i}\nabla - (\mathbf{x}+\mathbf{x}_{n})^{\perp}\big)\chi_{R}f_{n}\big|^{2} = \frac{1}{2}\int_{\mathbb{R}^{2}} \big|\big(-\mathrm{i}\nabla - \mathbf{x}^{\perp}\big)e^{\mathrm{i}\phi_{n}}\chi_{R}f_{n}\big|^{2} \geq \int_{\mathbb{R}^{2}}\chi_{R}^{2}\big|f_{n}\big|^{2},
		\end{equation}
		where the phase $\phi_{n}$ is chosen in such a way that
		$$
		\mathbf{x}^{\perp} - (\mathbf{x}+\mathbf{x}_{n})^{\perp} = \nabla\phi_{n}.
		$$
		On the other hand, by H\"older' inequality, we have
		\begin{equation}\label{ineq:GL-contradiction-2}
		\int_{R \leq |\mathbf{x}| \leq 2R}\chi_{R}^{2-2\mu}\big|f_{n}\big|^{2} \leq C\left(\int_{R \leq |\mathbf{x}| \leq 2R}\chi_{R}^{2-4\mu}\right)^{\frac{1}{2}}\left(\int_{\mathbb{R}^{2}} \chi_{R}^{2}\big|f_{n}\big|^{4}\right)^{\frac{1}{2}} \leq CR\left(\int_{\mathbb{R}^{2}} \chi_{R}^{2}\big|f_{n}\big|^{4}\right)^{\frac{1}{2}}.
		\end{equation}
		Putting together \eqref{ineq:GL-contradiction}, \eqref{ineq:GL-contradiction-1}, \eqref{ineq:GL-contradiction-2} and choosing $R = R_{n} = (\lambda_{n}-1)^{-\frac{1}{2}} \to \infty$ we obtain
		$$
		\int_{\mathbb{R}^{2}} \chi_{R_{n}}^{2}\big|f_{n}\big|^{4} \leq C\frac{\lambda_{n}-1}{\Lambda_{n}^{4}}.
		$$
		Then, \eqref{ineq:elliptic-contradiction} implies that
		$$
		\lim_{n\to\infty}\int_{\mathbb{R}^{2}} \chi_{L}^{2}\big|f_{n}\big|^{4} = 0, \quad \forall L>0.
		$$
		The convergence $f_{n} \to f$ in $W_{\rm loc}^{s,p}(\mathbb{R}^{2})$ yields that $f=0$. This contradicts~\eqref{eq:contradict} and shows that we must have \eqref{ineq:elliptic-fake}.
	\end{proof}
	
	Now we may conclude the proof of Proposition~\ref{pro:LLL-infinite}.
	
	\begin{proof}[Proof of Proposition~\ref{pro:LLL-infinite}]
		By definition, we clearly have
		$$
		E_{\Omega}^{\rm GP} \leq E_{\Omega}^{\rm LLL}.
		$$
		In order to bound $E_{\Omega}^{\rm GP}$ from below, we denote by $u^{\rm GP}$ one of its minimizers. We decompose $u^{\rm GP}$ as follows
		$$
		u^{\rm GP} = \Pi_{0}u^{\rm GP} + \Pi_{0}^{\perp}u^{\rm GP},
		$$
		where $\Pi_{0}$ is the projection on the lowest Landau level $\mathcal{LLL}$ in \eqref{space:LLL}. The kernel of $\Pi_{0}$ is given explicitly by (see e.g., \cite{LieSolYng-94b})
		$$
		\Pi_{0}(\mathbf{x},\mathbf{y}) = \frac{1}{2\pi}e^{\frac{\mathbf{i}}{2}(x_{1}y_{2}-x_{2}y_{1})}e^{-\frac{1}{2}|\mathbf{x}-\mathbf{y}|^{2}}.
		$$
		We recall that $u^{\rm GP}$ solves the equation \eqref{eq:EL-GP bis}. By \eqref{ineq:elliptic} and the upper bound on $E_{\Omega}^{\rm GP}$ in \eqref{upper:local-density-approximation} we have
		\begin{equation}\label{energy-lower:behavior-GP-LLL-error-1}
		\big\|\Pi_{0}u^{\rm GP}\big\|_{L^{\infty}} \leq \big\|u^{\rm GP}\big\|_{L^{\infty}} \leq C\left(\frac{\lambda}{G_{\Omega}}\right)^{\frac{1}{2}} (\lambda-1)^{\frac{1}{4}} \leq CG_{\Omega}^{-\frac{1}{2}}\left(E_{\Omega}^{\rm TF}\right)^{\frac{1}{4}}.
		\end{equation}
		Here we have used the fact that $\Pi_{0}$ is a bounded operator on $L^{2}\cap L^{\infty}(\mathbb{R}^{2})$. Furthermore, by \eqref{ineq:operator} and \eqref{upper:local-density-approximation}, we have
		\begin{equation}\label{energy-lower:behavior-GP-LLL-error-2}
		\int_{\mathbb{R}^{2}}\big|\Pi_{0}u^{\rm GP}\big|^{4} \leq \int_{\mathbb{R}^{2}}\big|u^{\rm GP}\big|^{4} \leq G_{\Omega}^{-1}E_{\Omega}^{\rm TF}.
		\end{equation}
		On the other hand, it is well-known that the second eigenvalue of $\frac{1}{2}\big(-\mathrm{i}\nabla - \mathbf{x}^{\perp}\big)^{2}$ is $3$ (see e.g., \cite{RouYng-19}). Consequently,
		\begin{equation}\label{ineq:second-eigenvalue}
		\frac{1}{2}\int_{\mathbb{R}^{2}} \big|\big(-\mathrm{i}\nabla - \mathbf{x}^{\perp}\big)\Pi_{0}^{\perp}u^{\rm GP}\big|^{2} \geq 3\int_{\mathbb{R}^{2}} \big|\Pi_{0}^{\perp}u^{\rm GP}\big|^{2}.
		\end{equation}
		Therefore,
		\begin{align*}
		E_{\Omega}^{\rm GP} = \mathcal{E}_{\Omega}^{\rm GP}\big[u^{\rm GP}\big] & \geq \frac{1}{2}\int_{\mathbb{R}^{2}} \big|\big(-\mathrm{i}\nabla - \mathbf{x}^{\perp}\big)\Pi_{0}u^{\rm GP}\big|^{2} + \big|\big(-\mathrm{i}\nabla - \mathbf{x}^{\perp}\big)\Pi_{0}^{\perp}u^{\rm GP}\big|^{2} \\
		& \geq \int_{\mathbb{R}^{2}} \big|\Pi_{0}u^{\rm GP}\big|^{2} + 3\big|\Pi_{0}^{\perp}u^{\rm GP}\big|^{2} \\
		& = 1 + 2\int_{\mathbb{R}^{2}} \big|\Pi_{0}^{\perp}u^{\rm GP}\big|^{2}.
		\end{align*}
		Then \eqref{upper:local-density-approximation} implies that
		\begin{equation}\label{energy-lower:projection}
		\int_{\mathbb{R}^{2}} \big|\Pi_{0}^{\perp}u^{\rm GP}\big|^{2} \leq \frac{1}{2}E_{\Omega}^{\rm TF},
		\end{equation}
		which proves~\eqref{eq:dens GP to LLL 1}. Expanding the quartic term of the energy as in \eqref{scaling-law:lower-bound} (see also \cite{AftBla-08}), we obtain
		\begin{equation}\label{energy-lower:behavior-GP-LLL}
		\mathcal{E}_{\Omega}^{\rm GP}\big[u^{\rm GP}\big] \geq \mathcal{E}_{\Omega}^{\rm GP}\big[\Pi_{0}u^{\rm GP}\big] + \mathcal{E}_{\Omega}^{\rm GP}\big[\Pi_{0}^{\perp}u^{\rm GP}\big] - \frac{G_{\Omega}}{2}\int_{\mathbb{R}^{2}}\big|\Pi_{0}^{\perp}u^{\rm GP}\big|^{4} - 2G_{\Omega}\int_{\mathbb{R}^{2}}\big|\Pi_{0}u^{\rm GP}\big|^{3}\big|\Pi_{0}^{\perp}u^{\rm GP}\big|. 
		\end{equation}
		For the main term in \eqref{energy-lower:behavior-GP-LLL}, we have
		\begin{align}\label{energy-lower:behavior-GP-LLL-1}
		\mathcal{E}_{\Omega}^{\rm GP}\big[\Pi_{0}u^{\rm GP}\big] \geq \big\|\Pi_{0}u^{\rm GP}\big\|_{L^{2}}^{4}\mathcal{E}_{\Omega}^{\rm GP}\left[\frac{\Pi_{0}u^{\rm GP}}{\big\|\Pi_{0}u^{\rm GP}\big\|_{L^{2}}}\right] & \geq \left(1-2\big\|\Pi_{0}^{\perp}u^{\rm GP}\big\|_{L^{2}}^{2}\right)E_{\Omega}^{\rm LLL} \nonumber\\
		& \geq E_{\Omega}^{\rm LLL} - 3\big\|\Pi_{0}^{\perp}u^{\rm GP}\big\|_{L^{2}}^{2}.
		\end{align}
		Here we have used \eqref{energy-lll:asymptotic}. The first error term in \eqref{energy-lower:behavior-GP-LLL} is estimated simply, by \eqref{ineq:second-eigenvalue}, as follows
		\begin{equation}\label{energy-lower:behavior-GP-LLL-2}
		\mathcal{E}_{\Omega}^{\rm GP}\big[\Pi_{0}^{\perp}u^{\rm GP}\big] - \frac{G_{\Omega}}{2}\int_{\mathbb{R}^{2}}\big|\Pi_{0}^{\perp}u^{\rm GP}\big|^{4} \geq 3\big\|\Pi_{0}^{\perp}u^{\rm GP}\big\|_{L^{2}}^{2}.
		\end{equation}
		Finally, by \eqref{energy-lower:behavior-GP-LLL-error-1}, \eqref{energy-lower:behavior-GP-LLL-error-2} and H\"older' inequality we have
		\begin{equation}\label{energy-lower:behavior-GP-LLL-3}
		2G_{\Omega}\int_{\mathbb{R}^{2}}\big|\Pi_{0}u^{\rm GP}\big|^{3}\big|\Pi_{0}^{\perp}u^{\rm GP}\big| \leq 2G_{\Omega}\big\|\Pi_{0}u^{\rm GP}\big\|_{L^{\infty}}\big\|\Pi_{0}u^{\rm GP}\big\|_{L^{4}}^{2}\big\|\Pi_{0}^{\perp}u^{\rm GP}\big\|_{L^{2}}^{2} \leq C\left(E_{\Omega}^{\rm TF}\right)^{\frac{5}{4}}.
		\end{equation}
		The error term in the above is of order $o\left(E_{\Omega}^{\rm TF}\right)$. This is the place where the condition on $G$ in Proposition~\ref{pro:LLL-infinite} is used. Putting together \eqref{energy-lower:behavior-GP-LLL}--\eqref{energy-lower:behavior-GP-LLL-3} we obtain the desired result \eqref{energy:behavior-GP-LLL}. Returning to~\eqref{energy-lower:behavior-GP-LLL} and \eqref{energy-lower:behavior-GP-LLL-1}, inserting the estimates just found and the upper bound $\EGP \leq \ELLL$ we deduce~\eqref{eq:dens GP to LLL 2}.
	\end{proof}
	

%

\begin{thebibliography}{10}

\bibitem{AboRamVogKet-01}
{\sc J.~R. Abo-Shaeer, C.~Raman, J.~M. Vogels, and W.~Ketterle}, {\em
  {Observation of Vortex Lattices in Bose-Einstein Condensates}}, Science, 292
  (2001), pp.~476--479.

\bibitem{Abrikosov-57}
{\sc A.~Abrikosov}, {\em The magnetic properties of superconducting alloys}, J.
  Phys. Chem. Solids, 2 (1957), pp.~199 -- 208.

\bibitem{Aftalion-07}
{\sc A.~Aftalion}, {\em Vortex patterns in {B}ose {E}instein condensates}, in
  Perspectives in nonlinear partial differential equations, vol.~446 of
  Contemp. Math., Amer. Math. Soc., Providence, RI, 2007, pp.~1--18.

\bibitem{AftBla-06}
{\sc A.~Aftalion and X.~Blanc}, {\em Vortex lattices in rotating
  {B}ose--{E}instein condensates}, SIAM Journal on Mathematical Analysis, 38
  (2006), pp.~874--893.

\bibitem{AftBla-08}
\leavevmode\vrule height 2pt depth -1.6pt width 23pt, {\em {Reduced energy
  functionals for a three-dimensional fast rotating {B}ose-{E}instein
  condensates}}, Annales Henri Poincar{\'e}, 25 (2008), pp.~339--355.

\bibitem{AftBlaDal-05}
{\sc A.~Aftalion, X.~Blanc, and J.~Dalibard}, {\em Vortex patterns in a fast
  rotating {B}ose-{E}instein condensate}, Phys. Rev. A, 71 (2005), p.~023611.

\bibitem{AftBlaNie-06b}
{\sc A.~Aftalion, X.~Blanc, and F.~Nier}, {\em Lowest {L}andau level functional
  and {B}argmann spaces for {B}ose-{E}instein condensates}, J. Funct. Anal.,
  241 (2006), pp.~661--702.

\bibitem{AftBlaNie-06a}
\leavevmode\vrule height 2pt depth -1.6pt width 23pt, {\em {Vortex distribution
  in the lowest Landau level}}, Phys. Rev. A, 73 (2006), p.~011601(R).

\bibitem{AftSer-07}
{\sc A.~Aftalion and S.~Serfaty}, {\em {Lowest Landau level approach in
  superconductivity for the Abrikosov lattice close to $H_{c_2}$}}, Selecta
  Mathematica, 13 (2007), pp.~183--202.

\bibitem{Agmon-82}
{\sc S.~Agmon}, {\em Lectures on exponential decay of solutions of second-order
  elliptic equations}, Princeton University Press, 1982.

\bibitem{Almog-06}
{\sc Y.~Almog}, {\em Abrikosov lattices in finite domains}, Comm. Math. Phys.,
  262 (2006), pp.~677--702.

\bibitem{BenPorSch-15}
{\sc N.~{Benedikter}, M.~{Porta}, and B.~{Schlein}}, {\em {Effective Evolution
  Equations from Quantum Dynamics}}, Springer Briefs in Mathematical Physics,
  Springer, 2016.

\bibitem{BlaLew-15}
{\sc X.~Blanc and M.~Lewin}, {\em {The crystallization conjecture: A review}},
  EMS Surv. Math. Sci., 2 (2015), pp.~255--306.

\bibitem{BlaRou-08}
{\sc X.~Blanc and N.~Rougerie}, {\em {Lowest-Landau-Level vortex structure of a
  Bose-Einstein condensate rotating in a harmonic plus quartic trap}}, Phys.
  Rev. A, 77 (2008), p.~053615.

\bibitem{BreStoSeuDal-04}
{\sc V.~Bretin, S.~Stock, Y.~Seurin, and J.~Dalibard}, {\em {Fast rotation of a
  Bose-Einstein condensate}}, Phys. Rev. Lett., 92 (2004), p.~050403.

\bibitem{BruCorPicYng-08}
{\sc J.-B. Bru, M.~Correggi, P.~Pickl, and J.~Yngvason}, {\em The {TF} limit
  for rapidly rotating {B}ose gases in anharmonic traps}, Comm. Math. Phys.,
  280 (2008), pp.~517--544.

\bibitem{ButRok-99}
{\sc D.~A. {Butts} and D.~S. {Rokhsar}}, {\em {Predicted signatures of rotating
  Bose-Einstein condensates}}, Nature, 397 (1999), pp.~327--329.

\bibitem{CodHalEngSchTungCor-04}
{\sc I.~Coddington, P.~Haljan, P.~Engels, V.~Schweikhard, S.~Tung, and E.~A.
  Cornell}, {\em {Experimental studies of equilibrium vortex properties in a
  Bose-condensed gas}}, Physical Review A, 70 (2004), p.~063607.

\bibitem{CooKomRea-04}
{\sc N.~R. {Cooper}, S.~{Komineas}, and N.~{Read}}, {\em {Vortex lattices in
  the lowest Landau level for confined Bose-Einstein condensates}}, Phys. Rev.
  A, 70 (2004), p.~033604.

\bibitem{CorDubLunRou-19}
{\sc M.~Correggi, R.~Duboscq, N.~Rougerie, and D.~Lundholm}, {\em Vortex
  patterns in the almost-bosonic anyon gas}, Europhysics Letters, 126 (2019),
  p.~20005.

\bibitem{CorLunRou-16}
{\sc M.~Correggi, D.~Lundholm, and N.~Rougerie}, {\em Local density
  approximation for the almost-bosonic anyon gas}, Analysis and PDEs, 10
  (2017), pp.~1169--1200.

\bibitem{CorPinRouYng-11b}
{\sc M.~Correggi, F.~Pinsker, N.~Rougerie, and J.~Yngvason}, {\em Rotating
  superfluids in anharmonic traps: From vortex lattices to giant vortices},
  Phys. Rev. A, 84 (2011), p.~053614.

\bibitem{CorRou-13}
{\sc M.~{Correggi} and N.~{Rougerie}}, {\em Inhomogeneous vortex patterns in
  rotating {Bose-Einstein} condensates}, Communications in Mathematical
  Physics, 321 (2013), pp.~817 -- 860.

\bibitem{CycFroKirSim-87}
{\sc H.~L. Cycon, R.~G. Froese, W.~Kirsch, and B.~Simon}, {\em Schr{\"o}dinger
  operators with application to quantum mechanics and global geometry}, Texts
  and Monographs in Physics, Springer-Verlag, Berlin, study~ed., 1987.

\bibitem{FouHel-10}
{\sc S.~Fournais and B.~Helffer}, {\em {Bulk superconductivity in Type II
  superconductors near the second critical field}}, Journal of the European
  Mathematical Society, 12 (2010), pp.~461--470.

\bibitem{FouHel-book}
{\sc S.~Fournais and B.~Helffer}, {\em Spectral Methods in Surface
  Superconductivity}, Progress in Nonlinear Differential Equations and their
  Applications, 77, Birkh\"auser Boston, Inc., Boston, MA, 2010.

\bibitem{FouKac-11}
{\sc S.~Fournais and A.~Kachmar}, {\em Nucleation of bulk superconductivity
  close to critical magnetic field}, Advances in Mathematics, 226 (2011),
  pp.~1213 -- 1258.

\bibitem{FouKac-13}
\leavevmode\vrule height 2pt depth -1.6pt width 23pt, {\em The ground state
  energy of the three dimensional {Ginzburg-Landau} functional part I: Bulk
  regime}, Communications in Partial Differential Equations, 38 (2013),
  pp.~339--383.

\bibitem{GerGerTho-19}
{\sc P.~G\'{e}rard, P.~Germain, and L.~Thomann}, {\em {On the cubic Lowest
  Landau Level equation.}}, Arch. Ration. Mech. Anal., 231 (2019),
  pp.~1073--1128.

\bibitem{GerTho-16}
{\sc P.~Germain and L.~Thomann}, {\em {On the high frequency limit of the LLL
  equation.}}, Quart. Appl. Math., 74 (2016), pp.~633--641.

\bibitem{GerThoSch-24}
{\sc P.~Germain, L.~Thomann, and V.~Schwinte}, {\em {On the stability of the
  Abrikosov lattice in the Lowest Landau Level}}.
\newblock arXiv:2404.06085, 2024.

\bibitem{GirJac-84}
{\sc S.~Girvin and T.~Jach}, {\em Formalism for the quantum {H}all effect:
  {H}ilbert space of analytic functions}, Phys. Rev. B, 29 (1984),
  pp.~5617--5625.

\bibitem{Golse-13}
{\sc F.~{Golse}}, {\em {On the Dynamics of Large Particle Systems in the Mean
  Field Limit}}, ArXiv e-prints 1301.5494,  (2013).
\newblock Lecture notes for a course at the NDNS+ Applied Dynamical Systems
  Summer School "Macroscopic and large scale phenomena", Universiteit Twente,
  Enschede (The Netherlands).

\bibitem{Ho-01}
{\sc T.~L. Ho}, {\em {B}ose-{E}instein condensates with large number of
  vortices}, Phys. Rev. Lett., 87 (2001), p.~060403.

\bibitem{Jain-07}
{\sc J.~K. Jain}, {\em {Composite fermions}}, Cambridge University Press, 2007.

\bibitem{Kachmar-14}
{\sc A.~Kachmar}, {\em The {Ginzburg-Landau} order parameter near the second
  critical field}, SIAM Journal on Mathematical Analysis, 46 (2014), p.~572.

\bibitem{KleRotAut-64}
{\sc W.~Kleiner, L.~Roth, and S.~Autler}, {\em {Bulk solution of
  Ginzburg--Landau equations for type II superconductors: upper critical field
  region}}, Physical Review, 133 (1964), p.~A1226.

\bibitem{LanLif-65}
{\sc L.~Landau and E.~Lifshitz}, {\em {Quantum Mechanics, Mathematical
  Appendices}}, 1965.

\bibitem{LewNamRou-17proc}
{\sc M.~Lewin, P.~T. Nam, and N.~Rougerie}, {\em Blow-up profile of rotating 2d
  focusing {B}ose gases}, in Workshop on Macroscopic Limits of Quantum Systems,
  Springer, 2017, pp.~145--170.

\bibitem{LewSei-09}
{\sc M.~Lewin and R.~Seiringer}, {\em Strongly correlated phases in rapidly
  rotating {B}ose gases}, J. Stat. Phys., 137 (2009), pp.~1040--1062.

\bibitem{LieLos-02}
{\sc E.~H. Lieb and M.~Loss}, {\em A bound on binding energies and mass
  renormalization in models of quantum electrodynamics}, J. Stat. Phys., 108
  (2002), pp.~1057--1069.
\newblock Dedicated to David Ruelle and Yasha Sinai on the occasion of their
  65th birthdays.

\bibitem{LieSei-06}
{\sc E.~H. Lieb and R.~Seiringer}, {\em Derivation of the {G}ross-{P}itaevskii
  equation for rotating {B}ose gases}, Comm. Math. Phys.., 264 (2006),
  pp.~505--537.

\bibitem{LieSeiSolYng-05}
{\sc E.~H. Lieb, R.~Seiringer, J.~P. Solovej, and J.~Yngvason}, {\em The
  mathematics of the {B}ose gas and its condensation}, Oberwolfach {S}eminars,
  Birkh{\"a}user, 2005.

\bibitem{LieSeiYng-00}
{\sc E.~H. Lieb, R.~Seiringer, and J.~Yngvason}, {\em Bosons in a trap: A
  rigorous derivation of the {G}ross-{P}itaevskii energy functional}, Phys.
  Rev. A, 61 (2000), p.~043602.

\bibitem{LieSeiYng-09}
\leavevmode\vrule height 2pt depth -1.6pt width 23pt, {\em {Y}rast line of a
  rapidly rotating {B}ose gas: {G}ross-{P}itaevskii regime}, Phys. Rev. A, 79
  (2009), p.~063626.

\bibitem{LieSolYng-94b}
{\sc E.~H. Lieb, J.-P. Solovej, and J.~Yngvason}, {\em Asymptotics of heavy
  atoms in high magnetic fields: {II}. {S}emi-classical regions}, Comm. Math.
  Phys, 161 (1994), pp.~77--124.

\bibitem{LuPan-99}
{\sc K.~Lu and X.-B. Pan}, {\em {Eigenvalue problems of Ginzburg--Landau
  operator in bounded domains}}, J. Math. Phys., 40 (1999), pp.~2647--2670.

\bibitem{MadChevWohDal-00}
{\sc K.~W. Madison, F.~Chevy, W.~Wohlleben, and J.~Dalibard}, {\em Vortex
  formation in a stirred {B}ose-{E}instein condensate}, Phys. Rev. Lett., 84
  (2000), pp.~806--809.

\bibitem{Nier-07}
{\sc F.~Nier}, {\em {Bose-Einstein condensates in the Lowest Landau Level:
  Hamiltonian dynamics}}, Reviews in Mathematical Physics, 19 (2007),
  pp.~101--130.

\bibitem{Perice-22}
{\sc D.~P\'erice}, {\em Multiple {L}andau level filling for a mean field limit
  of {2D} fermions},  (2022).

\bibitem{Rougerie-11}
{\sc N.~Rougerie}, {\em {Annular Bose-Einstein Condensates in the Lowest Landau
  Level}}, App. Math. Res. Express, 2011 (2011), pp.~95--121.

\bibitem{Rougerie-EMS}
\leavevmode\vrule height 2pt depth -1.6pt width 23pt, {\em {Scaling limits of
  bosonic ground states, from many-body to nonlinear Schr\"odinger}}, EMS
  Surveys in Mathematical Sciences, 7 (2020), pp.~253--408.

\bibitem{RouSerYng-13b}
{\sc N.~Rougerie, S.~Serfaty, and J.~Yngvason}, {\em Quantum {H}all phases and
  plasma analogy in rotating trapped {B}ose gases}, J. Stat. Phys., 154 (2014),
  pp.~2--50.

\bibitem{RouYng-19}
{\sc N.~Rougerie and J.~Yngvason}, {\em {Holomorphic quantum Hall states in
  higher Landau levels}}, J. Math. Phys., 61 (2020), p.~041101.

\bibitem{SanSer-03}
{\sc E.~Sandier and S.~Serfaty}, {\em {The Decrease of Bulk-Superconductivity
  Close to the Second Critical Field in the Ginzburg--Landau Model}}, SIAM
  journal on mathematical analysis, 34 (2003), pp.~939--956.

\bibitem{SanSer-07}
\leavevmode\vrule height 2pt depth -1.6pt width 23pt, {\em Vortices in the
  magnetic {G}inzburg-{L}andau model}, Progress in Nonlinear Differential
  Equations and their Applications, 70, Birkh\"auser Boston, Inc., Boston, MA,
  2007.

\bibitem{Schlein-08}
{\sc B.~Schlein}, {\em Derivation of effective evolution equations from
  microscopic quantum dynamics}, arXiv eprints,  (2008).
\newblock Lecture Notes for a course at ETH Zurich.

\bibitem{SchCodEngMogCor-04}
{\sc V.~Schweikhard, I.~Coddington, P.~Engels, V.~P. Mogendorff, and E.~A.
  Cornell}, {\em {Rapidly rotating Bose--Einstein condensates in and near the
  lowest Landau level}}, Physical Review Letters, 92 (2004), p.~040404.

\bibitem{Schwinte-24}
{\sc V.~Schwinte}, {\em {An Optimal Minimization Problem in the Lowest Landau
  Level and Related Questions}}, Communications in Mathematical Physics, 405
  (2024), p.~98.

\bibitem{SheRad-04}
{\sc D.~E. Sheehy and L.~Radzihovsky}, {\em Vortex lattice inhomogeneity in
  spatially inhomogeneous superfluids}, Phys. Rev. A, 70 (2004), p.~051602.

\bibitem{SheRad-04b}
\leavevmode\vrule height 2pt depth -1.6pt width 23pt, {\em Vortices in
  spatially inhomogeneous superfluids}, Phys. Rev. A, 70 (2004), p.~063620.

\bibitem{Sigal-13}
{\sc I.~M. Sigal}, {\em {Magnetic Vortices, Abrikosov Lattices and Automorphic
  Functions}}.
\newblock arXiv:1308.5446, 2013.

\bibitem{ThoSch-21}
{\sc L.~Thomann and V.~Schwinte}, {\em {Growth of Sobolev norms for coupled
  Lowest Landau Level equations}}, Pure Appl. Anal., 3 (2021), pp.~189--222.

\end{thebibliography}
%
\end{document}